\DeclareMathAlphabet{\pazocal}{OMS}{zplm}{m}{n}
\newcommand{\mb}{\mathbb}
\let\bbordermatrix\bordermatrix
\patchcmd{\bbordermatrix}{8.75}{4.75}{}{}
\patchcmd{\bbordermatrix}{\left(}{\left[}{}{}
\patchcmd{\bbordermatrix}{\right)}{\right]}{}{}
\newcommand{\sr}{\stackrel}
\newcommand{\rar}{\rightarrow}
\newcommand{\tri}{\sr{\triangle}{=}}
\newcommand{\be}{\begin{equation}}
\newcommand{\ee}{\end{equation}}
\newcommand{\bea}{\begin{eqnarray}}
\newcommand{\eea}{\end{eqnarray}}
\newcommand{\bes}{\begin{eqnarray*}}
\newcommand{\ees}{\end{eqnarray*}}
\newcommand{\bce}{\begin{center}}
\newcommand{\ece}{\end{center}}
\newcommand{\beae}{\begin{IEEEeqnarray}{rCl}}
\newcommand{\eeae}{\end{IEEEeqnarray}}
\newcommand{\nms}{\IEEEeqnarraynumspace}
\def\VR{\kern-\arraycolsep\strut\vrule &\kern-\arraycolsep}
\def\vr{\kern-\arraycolsep & \kern-\arraycolsep}
\newcommand{\ben}{\begin{enumerate}}
\newcommand{\een}{\end{enumerate}}
\newcommand{\argmax}{\arg\!\max}
\newcommand{\hso}{\hspace{.1in}}
\newcommand{\hst}{\hspace{.2in}}
\newcommand{\noi}{\noindent}
\newtheorem{theorem}{Theorem}[section]
\newtheorem{remark}{Remark}[section]
\newtheorem{corollary}{Corollary}[section]
\newtheorem{assumptions}{Assumptions}[section]
\newtheorem{definition}{Definition}[section]
\newtheorem{lemma}{Lemma}[section]
\begin{document}
%
\title{Single Letter Expression of Capacity  for a Class of Channels with Memory}

\author{\IEEEauthorblockN{Christos K. Kourtellaris, Charalambos~D.~Charalambous and Ioannis Tzortzis }
\thanks{C~K.~Kourtellaris, C~D.~Charalambous and I.~ Tzortzis are with the Department of Electrical and Computer Engineering, University of Cyprus, Nicosia, Cyprus, {\it Email:kourtellaris.christos@ucy.ac.cy, chadcha@ucy.ac.cy,  tzortzis.ioannis@ucy.ac.cy}
This work was financially supported by a medium size University of Cyprus grant entitled ``DIMITRIS"  and by QNRF, a member of Qatar Foundation, under the project NPRP 6-784-2-329}}
%



%

\maketitle

%
%
\begin{abstract}
\bf {We study  finite alphabet channels with Unit Memory on the previous Channel Outputs called UMCO channels. We identify necessary and sufficient conditions, to test whether the capacity achieving channel input distributions with feedback are time-invariant, and whether feedback capacity is  characterized by single letter, expressions, similar to that of memoryless channels. The method is based on showing that a certain dynamic programming equation, which in general, is a nested optimization problem over the sequence of channel input distributions, reduces to a non-nested optimization problem.  Moreover, for UMCO channels, we give a simple expression for the ML error exponent, and we  identify sufficient conditions to test whether feedback does not increase capacity. We derive similar results,  when transmission cost constraints are imposed.
We apply the results to a special class of the UMCO channels, the Binary State Symmetric Channel (BSSC) with and without transmission cost constraints, to show that the optimization problem of feedback capacity is non-nested, the capacity achieving channel input distribution and the corresponding channel output transition probability distribution are time-invariant, and   feedback capacity is characterized by  a single letter formulae, precisely as Shannon's single letter characterization of capacity of memoryless channels. Then we derive closed form  expressions for the capacity achieving channel input distribution and feedback capacity.  We use the closed form expressions to evaluate an error exponent for ML decoding.  
}
\end{abstract}

\section{Introduction}
\par   Shannon in his landmark paper \cite{shannon48}, showed that the  capacity of Discrete Memoryless Channels (DMCs) $\big\{{\mathbb A}, {\mathbb B},\{{\bf P}_{B|A}(b|a): (a,b)\in {\mathbb A}\times {\mathbb B}\} \big\}$ is characterized by the celebrated single letter formulae
\bea
C\tri \max_{{\bf P}_A} I(A;B). \label{cap_fb_c}
\eea
This is often shown by using the converse to the channel coding theorem, to obtain the  upper bounds \cite{cover-thomas2006} 
\bea
C_{A^n ; B^n}^{noFB}\tri \max_{{\bf P}_{A^n}} I(A^n;B^n) \leq \; \max_{{\bf P}_{A_i}, i=0, \ldots, n} \sum_{i=0}^n I(A_i;B_i)\leq (n+1) C \label{cap_nf_c1}
\eea
which are achievable, if and only if  the channel input distribution  satisfies conditional independence $
{\bf P}_{A_i|A^{i-1}}={\bf P}_{A_i},         i=0, 1, \ldots, n$, and   $\{A_i:i =0, 1, \ldots, \}$ is identically distributed, which  implies that the  joint process $\{(A_i, B_i): i=0,1, \ldots, \}$ is independent and identically distributed, and hence stationary ergodic.
 For DMCs, it is shown by Shannon \cite{shannon1956}  and Dobrushin \cite{dubrushin1958} that feedback codes do not incur a higher capacity compared to that of codes without feedback, that is, $C^{FB}=C$. This  is often shown by first applying  the converse to the coding theorem,  to deduce that feedback does not increase capacity \cite{cover-pombra1989}, that is, $C^{FB} \leq C^{noFB}=C$, which then implies that any candidate of optimal channel input distribution with feedback $\Big\{{\bf P}_{A_i|A^{i-1}, B^{i-1}}: i=0, \ldots, n\Big\}$ satisfies conditional independence
\begin{align}
 {\bf P}_{A_i|A^{i-1}, B^{i-1}}(da_i|a^{i-1}, b^{i-1})={\bf P}_{A_i}(da_i),        \hso  i=0, 1, \ldots, n \label{CI_DMC}
\end{align} 
   and hence identity $C^{FB}=C^{noFB}=C$ holds if  $\{A_i:i =0, 1, \ldots, \}$ is identically distributed. 
\par For general channels with memory defined by $\big\{{\bf P}_{B_i|B^{i-1}, A^i}: i=0, 1, \ldots, n\}$, ${\bf P}_{B_0|B^{-1}, A^0}={\bf P}_{B_0|B^{-1}, A_0}$, where  $B^{-1}$ is the initial state,  in general, feedback codes incur a higher capacity  than  codes  without feedback \cite{cover-thomas2006,ihara1993}. The information measure often employed to characterize feedback capacity of such channels is Marko's directed information \cite{marko1973}, put forward by Massey\cite{massey1990}, and  defined by 
\bea
I(A^n\rar B^n)=\sum_{i=0}^n I(A^i;B_i|B^{i-1})\tri \sum_{i=0}^n \int_{  }^{}   \log \Big( \frac{ {\bf P}_{B_i|B^{i-1}, A^i}(\cdot|b^{i-1}, a^i) }{{\bf P}_{B_i|B^{i-1}}(\cdot|b^{i-1})}(b_i)\Big){ \bf P}_{A^i, B^i}( da^i, db^i).
\eea
Indeed, Massey \cite{massey1990} showed that the per unit time limit of the supremum of directed information over channel input distributions ${\cal P}_{[0,n]}^{FB}\tri \big\{{\bf P}_{A_i|A^{i-1}, B^{i-1}}: i=0, \ldots, n\big\}$, defined by 
\bea
C^{FB}_{A^{\infty}\rar B^{\infty}}=\lim_{n\rar\infty}\frac{1}{n+1}C_{A^{n}\rar B^{n}} \hspace{1cm} C_{A^{n}\rar B^{n}}\tri\sup_{{\cal P}_{[0,n]}^{FB}}I(A^n\rar B^n) \label{cap_inf_sta}
\eea
gives a tight bound on any achievable rate of feedback codes, and hence  $C^{FB}_{A^{\infty}\rar B^\infty}$ is a candidate for the capacity of feedback codes. However, for channels with memory, it is generally not known whether the multi-letter  expression of  capacity, \eqref{cap_inf_sta}, can be reduced to a single letter expression, analogous to \eqref{cap_fb_c}.
\par Our main objective is to provide a framework for a single letter characterization of  feedback capacity for a general class of channels with memory. Towards this direction, we provide conditions on channels with memory such that 
\bea
C_{A^n \rar B^n}^{FB}=(n+1)C^{FB} \label{cap_1}
\eea
where $C^{FB}$ is a single letter expression similar to that of DMCs. Specifically, for channels of the form $\big\{{\bf P}_{B_i|B_{i-1}, A_i}: i=0, 1, \ldots, n\}$, where $B_{-1}=b_{-1}\in {\mathbb B}_{-1}$ is the initial state,  we give necessary and sufficient conditions such that the following equality holds.
\bea
C_{A^n \rar B^n}^{FB}=(n+1)\sup_{{\bf P}_{A_0|B_{-1}}(\cdot|b_{-1})}I(A_0;B_0|b_{-1}), \hso \forall b_{-1}\in {\mathbb B}_{-1}. \label{cap_1sa}
\eea
That is, the single letter expression is $C^{FB}\tri \sup_{{\bf P}_{A_0|B_{-1}}(\cdot|b_{-1})}I(A_0;B_0|b_{-1})$, and  is independent of the initial state $b_{-1} \in {\mathbb B}_{-1}$.
\subsection{Main Results and Methodology}
\par First, we consider channels with Unit Memory on the previous Channel Output (UMCO), defined by  
 \bea
{\bf P}_{B_i|B^{i-1}, A^i}={\bf P}_{B_i|B_{i-1}, A_i}, \hst i=0, 1, \ldots, n
\eea
 with and without a transmission cost constraint defined by 
\bea
\frac{1}{n+1}{\bf E}\left\{\sum_{i=0}^{n}{\gamma}^{UM}_i(A_i,B_{i-1})\right\} \label{qvcostc1_av_1} 
\eea
where ${\gamma}^{UM}_i:{\mb A}_{i}\times{\mb B}_{i-1}\longmapsto [0,\infty)$. We identify  necessary and sufficient conditions on the channel so that the optimization problem $C_{A^n \rar B^n}^{FB}$, which is generally a nested optimization problem, often dealt with via dynamic programming, reduces to a non-nested optimization problem. These conditions give rise to a single letter characterization of feedback capacity. Among other results, we  derive sufficient conditions for feedback not to increase capacity, and identify sufficient conditions for asymptotic stationarity of optimal channel input distribution and ergodicity of the joint process $\{(A_i,B_i):i=0,1,\ldots\}$. Moreover, we give an upper bound on the error probability of maximum likelihood decoding. We also treat problems with transmission cost constraints. 
\par Second, we apply the framework of the UMCO channel on the Binary State Symmetric Channel (BSSC), defined by
\begin{IEEEeqnarray}{l}
 {\bf P}_{B_i|A_i, B_{i-1}}(b_i|a_i,b_{i{-}1}) {=} \bbordermatrix{~ & 0,0 & 0,1 & 1,0 & 1,1   \cr
                  0 & \alpha & \beta & 1{-}\beta & 1{-}\alpha  \vspace*{0.5cm} \cr                   
                  1 & 1{-}\alpha & 1{-}\beta  & \beta &  \alpha \cr}, \hso  i=0, 1, \ldots, n, \hso (\alpha, \beta)\in [0,1] \times [0,1] \label{BSSC_1_intro} \IEEEeqnarraynumspace
\end{IEEEeqnarray}
  with and without a transmission cost  constraint defined by 
\bea
\frac{1}{n+1}{\bf E}\left\{\sum_{i=0}^{n}{\gamma}(A_i,B_{i-1})\right\}\leq \kappa,\hso {\gamma}(a_i,b_{i-1})=\overline{a_i\oplus b_{i-1}}, \hso \kappa\in[0,\kappa_{max}] \label{qvcostc1_av_1} 
\eea
where $\overline{x\oplus y}$ denotes the compliment of the modulo2 addition of $x$ and $y$. We calculate the capacity achieving channel input distribution with feedback without cost constraint and show that it is time-invariant. This  illustrates that feedback capacity  satisfies \eqref{cap_1}, it is independent of the initial state $B_{-1}=b_{-1}$, and it is  characterized by 
\bea
C_{A^\infty \rar B^\infty}^{FB}&=&\sup_{ {\bf P}_{A_0|B_{-1}}} I(A_0;B_0|b_{-1}), \hst \forall b_{-1} \in {\mb B}_{-1}\label{cap_2}\\&=&H(\lambda){-}\nu H({\alpha}){-}(1{-}\nu)H({\beta}) \label{cap_2a}
\eea
\par where $\lambda, \nu$ are functions of channel parameters $\alpha, \beta$ (see Theorem~\ref{op_in_out_dis_the}). The characterization (\ref{cap_2}) is precisely analogous to the single letter characterization of   (\ref{cap_fb_c}) and (\ref{cap_nf_c1}) of capacity of DMCs. Additionally, we provide the error exponent evaluated on the capacity achieving channel input distribution with feedback, and we derive an upper bound on the error probability of maximum likelihood decoding which is easy to compute (see Section \ref{ee_bssc}). Finally, we show that a time-invariant first order Markov channel input distribution without feedback achieves feedback capacity \eqref{cap_2a}, and we give the closed form expressions both for the capacity achieving channel input distribution and the corresponding channel output distribution. We also treat the case with cost constraint.
\par The main mathematical concept we invoke to obtain the above results 
are the structural properties of the optimal channel input distributions, \cite{kourtellaris2015information, kourtellarisISIT2016}. Specifically the following.
\begin{itemize}
\item[(a)] For  channels with infinite memory on the previous channel outputs defined by ${\bf P}_{B_i|B^{i-1}, A^i}={\bf P}_{B_i|B^{i-1}, A_i}$,  the maximization of directed information   $I(A^n \rar B^n)$ occurs in the  subset satisfying conditional independence $\big\{{\bf P}_{A_i|A^{i-1}, B^{i-1}}={\bf P}_{A_i|B^{i-1}}: i=0, \ldots, n\big\}$.
\item[(b)] For  channels with limited memory of order $M$ defined by ${\bf P}_{B_i|B^{i-1}, A^i}={\bf P}_{B_i|B_{i-M}^{i-1}, A_i}$, the maximization of directed information   $I(A^n \rar B^n)$ occurs in the  subset satisfying conditional independence $\big\{{\bf P}_{A_i|A^{i-1}, B^{i-1}}={\bf P}_{A_i|B_{i-M}^{i-1}}: i=0, \ldots, n\big\}$.
\item[(c)] For the UMCO channel the maximization of directed information   $I(A^n \rar B^n)$ occurs in the  subset satisfying conditional independence $\big\{{\bf P}_{A_i|A^{i-1}, B^{i-1}}={\bf P}_{A_i|B_{i-1}}: i=0, \ldots, n\big\}$.
\end{itemize}
\par The structural properties, (a), (b) and (c),  along with the fact that $C_{A^n \rar B^n}^{FB}\geq C_{A^n ; B^n}^{noFB}$, are employed in Section~\ref{section_form}  to provide sufficient conditions for feedback not to increase the capacity.  Moreover, the structural property of the UMCO channel, (c), is applied in Section~\ref{UMCO} to construct the finite horizon dynamic programming, the necessary and sufficient conditions on the capacity achieving input distribution, and the necessary and sufficient conditions for the non-nested optimization of feedback capacity. The  methodology and the corresponding theorems of Section~\ref{UMCO}  can be easily extended to channels with finite memory on  previous channel outputs by invoking the structural properties of the capacity achieving  distributions for these channels.
\subsection{Relation to the Literature}  
\par Although for several years significant effort has been devoted to the study of channels with memory, with or without feedback, explicit or closed form expressions for  capacity of such channels  are limited to few but ripe cases. For   non-stationary  non-ergodic  Additive Gaussian Noise (AGN) channels  with memory, Cover and Pombra \cite{cover-pombra1989} showed that feedback codes can increase capacity  by at most half a bit. On the other hand, for a finite alphabet version of the Cover and Pombra channel with certain symmetry, Alajaji \cite{alajaji} showed that feedback does not increase capacity.  Moreover, Permuter, Cuff, Van Roy and Weissman \cite{permuter08} derived the feedback capacity of the trapdoor channel, while Elishco and Permuter  \cite{elishco} employed dynamic programming to evaluate  feedback capacity of the Ising channel.  
\par  The capacity of channels $\big\{{\bf P}_{B_i|B_{i-1}, A_i}: i=0, \ldots, n\big\}$ for  feedback codes is analyzed  by Berger \cite{berger_shannon_lecture} and  Chen and Berger \cite{chen-berger2005}, under the assumption  that the capacity achieving distribution satisfies conditional independence  property $
{\bf P}_{A_i|A^{i-1}, B^{i-1}}={\bf P}_{A_i|B_{i-1}}(a_i|b_{i-1}), i=0,1,\ldots,n$. A derivation of this structural property of capacity achieving distribution is given  in \cite{kourtellaris2015information, kourtellarisISIT2016}. 
\par Recently, Permuter,  Asnani and Weissman \cite{asnani13,asnani13j} derived the feedback capacity for a Binary-Input Binary-Output (BIBO) channel, called the Previous Output STate (POST) channel, where the current state of the channel is the previously received symbol.    The authors in \cite{asnani13j}, showed, among other results, that feedback does not increase capacity.  It can be shown that the POST channel  is within a transformation equivalent to  the Binary State Symmetric channel (BSSC) \cite{kourtellaris_itw2015}, in which the state of the channel is defined as the modulo2 addition of the current input symbol and the previous output symbol.  When there are no transmission cost constraints, our results for the BSSC compliment existing results obtained in \cite{asnani13,asnani13j} regarding the POST channel, in the sense that, we show the time-invariant properties of the capacity achieving distributions,  which implies the single letter characterization of feedback  capacity, we derive closed form expressions for these distributions,  provide an upper bound on the error probability of maximum likelihood decoding, and we show that a first-order Markov channel input distribution without feedback achieves feedback capacity. Moreover, we derive similar closed form expressions when averaged transmission cost constraints are imposed. 
\par A portion of the results established in this paper were utilized to construct a Joint Source Channel Coding (JSCC) scheme for the $BSSC$ with a cost constraint and the Binary Symmetric Markov Source (BSMS) with single letter Hamming distortion measure \cite{kcbisit2015}. The  scheme is a natural generalization of   the JSCC design (uncoded transmission) of an Independent and Identically Distributed (IID) Bernoulli source over a Binary Symmetric Channel (BSC) \cite{jelinek,gastpar}.  

\par The remainder of the paper is organized as follows. In Section \ref{section_form}, we introduce the mathematical formulation and identify sufficient conditions for feedback not to increase  capacity. In Section \ref{UMCO}, we 
 identify sufficient conditions 
to test whether  the capacity achieving input distribution is time invariant. The results are then extended to the infinite horizon case. In Section \ref{cabistsych}, we apply the main theorems of section \ref{UMCO}  to the BSSC, with and without feedback and with and without cost constraint, to prove, among other results, that capacity is given by a single letter characterization.  Finally,  Section \ref{sec_con}  delivers our concluding remarks.

\section{Formulation \& Preliminary Results}
\label{section_form}
In this section we introduce the definitions of feedback capacity, capacity without feedback , and we identify necessary and sufficient conditions for feedback not to increase the capacity.
\subsection{Notation and Definitions}
The probability distribution of a Random Variable (RV)  defined on a probability space $(\Omega, {\cal F}, {\mathbb P})$ by the mapping $X: (\Omega, {\cal F}) \longmapsto ({\mb X}, {\cal  B}({\mb X}))$  is denoted by $ {\bf P}(\cdot) \equiv {\bf P}_X(\cdot)$. 
The space of probability distributions on $\mathbb X$ is denoted by ${\cal M(\mathbb X)}$.  A RV is called discrete if there exists a countable set ${\cal S}$ such that $\sum_{x_i \in {\cal S}} {\mathbb  P} \{ \omega \in \Omega : X(\omega)=x_i\}=1$. The probability distribution ${\bf P}_X(\cdot)$  is then concentrated on  points in ${\cal S}$, and it is defined by 
\bea
 {\bf P}_X(A)  \tri \sum_{x_i \in {\cal S} \bigcap A} {\mathbb P} \{ \omega \in \Omega : X(\omega)=x_i\}, \hso \forall A \in {\cal  B}({\mb X}). 
\eea 
Given another RV $Y:(\Omega, {\cal F})\mapsto ({\mathbb Y},{\cal  B}({\mb Y}))$, ${\bf P}_{Y|X}(dy|x)(\omega)$ is the conditional distribution of RV $Y$ given $X$. For a fixed $X=x$ we denote the conditional distribution by ${\bf P}_{Y|X}(dy|X=x)={\bf P}_{Y|X}(dy|x)$. \\
Let $\mathbb{Z}$ denote the set of integers and  ${\mathbb{N}}\tri\{0, 1,2,\dots,\}$, ${\mathbb{N}^n}\tri\{0, 1,2,\dots,n\}$. The channel input and channel output spaces are sequences of measurable spaces $\{({\mathbb A}_i,{\cal B}({\mathbb A}_i)): i \in {\mathbb Z}\}$ and $\{({\mathbb B}_i,{\cal B}({\mathbb B}_i)): i \in {\mathbb Z}\}$, respectively, while their product spaces are ${\mathbb A}^{\mathbb Z}\tri \times_{i \in \mathbb Z} {\mathbb A}_i$, ${\mathbb B}^{\mathbb Z}\tri \times_{i \in \mathbb Z} {\mathbb B}_i$,  ${\cal B}({\mathbb A}^{\mathbb Z}) \tri \otimes_{i \in \mathbb Z} {\cal B}({\mathbb A}_i)$, ${\cal B}({\mathbb B}^{\mathbb Z})\tri \otimes_{i \in \mathbb Z} {\cal B}({\mathbb B}_i)$. Points in the product spaces are denoted by $a^n \tri \{\ldots, a_{-1}, a_0, a_1, \ldots, a_n\}\in {\mb A}^n$ and $b^n \tri \{\ldots, b_{-1}, b_0, b_1, \ldots, b_n\}\in {\mb B}^n, n \in {\mathbb Z}$. 

\subsection{Capacity with Feedback \& Properties}\label{ssec:b_feed}
Next, we provide the precise formulation of information capacity and some preliminary results. We begin by introducing the definitions of channel distribution, channel input distribution, transmission cost constraint, and feedback code.\\

\begin{definition}(Channel distribution with memory)\label{def:umco_channel}\\
A sequence of conditional distributions  defined by
\begin{align}
{\cal C}_{[0,n]} \tri \Big\{{\bf P}_{B_i|B^{i-1},A^{i}}(d{b}_i|b^{i-1},a^{i})={\bf P}_{B_i|B^{i-1},A_{i}}(d{b}_i|b^{i-1},a_{i}) : \hso i=0, \ldots, n \Big\}. \label{ch_1}
\end{align}
At time $i=0$ the conditional distribution is ${\bf P}_{B_0|B^{-1},A_{0}}(d{b}_0|b^{-1},a_{0})$, where $B^{-1}=b^{-1} \in {\mb B}^{-1}$ is the initial data.
\end{definition}
The initial data, $b^{-1} \in {\mb B}^{-1}$, denotes the initial state of the channel and this should not be misinterpret as feedback information. In this work we assume that the initial data are known both to the encoder and the decoder, unless we state otherwise.\\

\begin{definition}(Channel input distribution with feedback)\\
A sequence of conditional 
distributions  defined by
\bea
{\cal P}_{[0,n]}^{FB} \tri \Big\{{\bf P}_{A_i|A^{i-1},B^{i-1}}({da}_i|a^{i-1},b^{i-1}): \hso  \hso i=0, \ldots, n\Big\}.
\eea
At time $i=0$ the conditional distribution is ${\bf P}_{A_0|A^{-1},B^{-1}}({da}_0|a^{-1},b^{-1})={\bf P}_{A_0|B^{-1}}({da}_0|b^{-1})$. That is, the information structure of the channel input distribution is  ${\cal I}_i^{FB} \tri \{ b^{-1},a_0, b_0, a_1, b_1, \ldots, a_{i-1}, b_{i-1}\}$,  for $i=0, \ldots, n$. For $i=0$ the convention is  ${\cal I}_0^{FB}\tri \{a^{-1}, b^{-1}\}=\{b^{-1}\}$, which states that the channel input distribution depends only on the initial data.\\
\end{definition}

\begin{definition}(Transmission cost constraints)\label{def:umco_cc}\\
The cost of transmitting symbols over the channel (\ref{ch_1})  is 
 a measurable function $c_{0,n}:{\mb A}^{n}\times{\mb B}^{n-1}\longmapsto [0,\infty)$  defined by 
\bea
c_{0,n}(a^n,b^{n-1})\tri\sum_{i=0}^{n}{\gamma}_{i}(a_i,b^{i-1}).
\eea
The transmission cost  constraint is  defined by 
\beae
{\cal P}_{[0,n]}^{FB}(\kappa) {\tri} \Big\{{\bf P}_{A_i|A^{i-1}, B^{i-1}}, i=0, \ldots, n:  {\frac{1}{n{+}1}} {\bf E}_\mu \big\{c_{0,n}(A^n,B^{n{-}1})\big\} \leq \kappa\Big\}, \hso \kappa\in[0,\infty]\nms
\eeae
where $\kappa \in [0, \infty)$, and the subscript notation ${\bf E}_\mu$ indicates the joint distribution over which the expectation is taken is parametrized by the  initial distribution   ${\bf P}_{B^{-1}}(d{b}^{-1})=\mu(db^{-1})$ (and of course the channel input distribution).\\
\end{definition}
\begin{definition}(Feedback code) \label{def_feedback_code}\\
A feedback code for the channel defined by (\ref{ch_1}) with  transmission cost constraint ${\cal P}_{[0,n]}^{FB}(\kappa)$  is a sequence   $\{(n, { M}_n, \epsilon_n):n=0, 1, \dots\}$, which consist of the following elements.  
\begin{itemize}
\item[(a)] A set of uniformly distributed messages ${\cal M}_n \tri \{ 1,  \ldots, M_n\}$ and a set of encoding strategies,  mapping messages  into channel inputs of block length $(n+1)$, defined by\footnote{The superscript on expectation, i.e., ${\bf E}^g$ indicates the dependence of the distribution on the encoding strategies.} 
\begin{multline}\label{block-code-nf-non}
{\cal E}_{[0,n]}^{FB}(\kappa) \triangleq  \Big\{g_i: {\cal M}_n \times {\mathbb A}^{i-1} \times {\mb B}^{i-1}  \longmapsto {\mb A}_i, \hso  a_0=g_0(w, b^{-1}), a_1=g_1(w,b^{-1},a_0,b_0),\ldots, \\ a_n=g_n(w,b^{-1},a_0,b_0,\ldots, a_{n-1}, b_{n-1}), 
  w\in {\cal M}_n: \hso  \frac{1}{n+1} {\bf E}^g\Big(c_{0,n}(A^n,B^{n-1})\Big)\leq \kappa  \Big\}, \hso n=1,2, \ldots. 
\end{multline}
The codeword for any $w \in {\cal M}_n$  is $u_w\in{\mb A}^n$, $u_w=(g_0(w,b^{-1}), g_1(w, b^{-1}, a_0, b_0),\dots,g_n(w, b^{-1},a_0, $ $b_0,\ldots, a_{n-1}, b_{n-1}))$, and ${\cal C}_n=( u_1,u_2,\dots,u_{{M}_n})$ is  the code for the message set ${\cal M}_n$, and $\{A^{-1}, B^{-1}\}$ $=\{b^{-1}\}$.  In general, the code  depends on the initial data, depending on the convention, i.e.,  $B^{-1}=b^{-1}$, which are known to the encoder and decoder (unless specified otherwise). Alternatively, we can take $\{A^{-1}, B^{-1}\}=\{\emptyset\}$.
\item[(b)] Decoder measurable mappings $d_{0,n}:{\mb B}^n\longmapsto {\cal M}_n$,  such that the average
probability of decoding error satisfies
\begin{align}
{\bf P}_e^{(n)} \triangleq \frac{1}{M_n} \sum_{w \in {\cal M}_n} {\bf  P}^g \Big\{d_{0,n}(B^{n}) \neq w |  W=w\Big\}\equiv {\bf  P}^g\Big\{d_{0,n}(B^n) \neq W \Big\} \leq \epsilon_n\nonumber
\end{align}
and the decoder may also assume knowledge of the initial data.\\
The coding rate or transmission rate over the channel is defined by  $r_n\triangleq \frac{1}{n+1} \log M_n$.
A rate $R$ is said to be an achievable rate, if there exists  a  code sequence satisfying
$\lim_{n\longrightarrow\infty} {\epsilon}_n=0$ and $\liminf_{n \longrightarrow\infty}\frac{1}{n+1}\log{{M}_n}\geq R$. 
\end{itemize}
The operational definition of feedback capacity of the channel is the supremum of all achievable rates, i.e., $C\triangleq \sup \{R: R \: \: \mbox{is achievable}\}$.\\
\end{definition}
Given any channel input distribution $\{{\bf  P}_{A_i|A^{i-1}, B^{i-1}}: i=0,1, \ldots, n\} \in {\cal P}_{[0,n]}^{FB}$,  a channel distribution $\{{\bf P}_{B_i|B^{i-1}, A_i}: i=0,1, \ldots, n\}$, and a fixed initial distribution $\mu(b^{-1})$,  then the  induced joint distribution\footnote{If $B^{-1}=b^{-1}$ is fixed, then $\mu(\cdot)=\delta_{B^{-1}}(\cdot)$ is a dirac or delta measure concentrated at $B^{-1}=b^{-1}$.}  ${\bf  P}_{A^n, B^n}$ parametrized by $\mu(\cdot)$ 
is uniquely defined, and a probability space $\Big(\Omega, {\cal F}, {\mathbb P}\Big)$ carrying the sequence of RVs $(A^n, B^n)\tri \{B^{-1}, A_0, B_0, A_1, B_1, \ldots, A_n, B_n\}$  is constructed, as follows.
\begin{align}
 {\mathbb P}\big\{A^n \in d{a}^n, B^n \in d{b}^n\big\}  \tri &
{\bf  P}_{A^n, B^n}(da^n, db^n) \nonumber \\
=&\otimes_{j=0}^n \Big({\bf P}_{B_j|B^{j-1}, A_j}(db_j|b^{j-1}, a_j)\otimes {\bf P}_{A_j|A^{j-1}, B^{j-1}}(da_j|a^{j-1}, b^{j-1})\Big)\otimes \mu(db^{-1}). \label{CIS_2gg_new} \\
{\mathbb  P}\big\{B^n \in db^n\big\} \tri& {\bf  P}_{B^n}(db^n) =  \int_{{\mb A}^n}  {\bf  P}_{A^n, B^n}(da^n, db^n). \label{CIS_3g} 
\end{align}
\begin{align}
{\bf  P}_{B_i|B^{i-1}}(db_i|b^{i-1})=&  \int_{{\mb A}^i} {\bf P}_{B_i|B^{i-1}, A_i}(db_i|b^{i-1}, a_i)\otimes {\bf P}_{A_i|A^{i-1}, B^{i-1}}(da_i|a^{i-1}, b^{i-1}) \nonumber \\
&\otimes {\bf  P}_{A^{i-1}|B^{i-1}}(da^{i-1}|b^{i-1}), \hso  i=0, \ldots, n . \label{CIS_3a}\\ \nonumber\\
{\bf  P}_{B_0|B^{-1}}(db_0|b^{-1})=&  \int_{{\mb A}_0} {\bf P}_{B_0|B^{-1}, A_0}(db_0|b^{-1}, a_0)\otimes {\bf P}_{A_0| B^{-1}}(da_0| b^{-1}). \label{CIS_3ain}
\end{align}
The Directed Information from $A^n\tri\{A_0,A_1,\ldots,A_n\}$ to $B_0^n\tri\{B_0,B_1,\ldots,B_n\}$ conditioned on $B^{-1}$ is defined by \cite{marko1973, massey1990}
\begin{align}
I(A^n\rightarrow B^n)\tri&\sum_{i=0}^{n}I(A^i;B_i|B^{i-1}) = \sum_{i=0}^n I(A_i; B_i|B^{i-1})\nonumber\\
=& \sum_{i=0}^n \int_{  }^{}   \log \Big( \frac{ {\bf P}_{B_i|B^{i-1}, A_i}(\cdot|b^{i-1}, a_i) }{{\bf P}_{B_i|B^{i-1}}(\cdot|b^{i-1})}(b_i)\Big){ \bf P}_{A^i, B^i}( da^i, db^i) \label{eqdi5_a} \\
\equiv & {\mathbb I}_{A^n \rar B^n}^{FB}({\bf P}_{A_i|A^{i-1},B^{i-1}},  {\bf P}_{B_i|B^{i-1},A_i}: i=0,1, \ldots, n)        \label{eqdi5}
\end{align}
where  (\ref{eqdi5_a}) follows from the channel definition, and  the notation ${\mathbb I}_{A^n \rar B^n}^{FB}(\cdot, \cdot)$ indicates that $I(A^n\rar B^n)$ is a functional of the sequences of channel input and channel distributions; its dependence on the  initial distribution $\mu(\cdot)$ is suppressed.\\
Define the information quantities 
\begin{align}
C_{A^n \rar B^n}^{FB} \tri  \sup_{ {\cal P}_{[0,n]}^{FB}  }I(A^n\rar B^n), \hst C_{A^n \rar B^n}^{FB}(\kappa) \tri  \sup_{ {\cal P}_{[0,n]}^{FB}(\kappa)  }I(A^n\rar B^n). \label{FBLF_1}
\end{align}
Under the assumption that  $\{B^{-1}, A_0, B_0, A_1, B_1, \ldots, \}$ is  jointly ergodic or  $\frac{1}{n+1} \sum_{i=0}^n \frac{{\bf P}_{B_i|B^{i-1}, A_i}(\cdot|B^{i-1}, A_i)}{{\bf  P}_{B_i|B^{i-1}}(\cdot|B^{i-1})}(B_i)$ is information stable \cite{dubrushin1958,tatikonda2000} and $c_{0,n}(a^n, b^{n-1})=\frac{1}{n+1}\sum_{i=0}^{n}\gamma_i(A_i,B^{i-1})$ is stable, then the capacity of the channel with  feedback with and without transmission cost   is given by
\begin{align}
 C_{A^\infty \rar B^\infty}^{FB} \tri \lim_{n \longrightarrow  \infty}{\frac{1}{n+1}} C_{A^n \rar B^n}^{FB}, \hst C_{A^\infty \rar B^\infty}^{FB}(\kappa) \tri \lim_{n \longrightarrow  \infty}{\frac{1}{n+1}} C_{A^n \rar B^n}^{FB}(\kappa).\label{feed_capa}
\end{align}

\subsubsection{Convexity Properties.}

Next, we recall the convexity properties of directed information with respect to a specific definition of channel input distributions, which is equivalent to the above definition. \\
Any sequence of channel input distribution $\{{ \bf P}_{A_i|A^{i-1}, B^{i-1}}: i=0,1, \ldots, n\} \in {\cal P}_{[0,n]}^{FB}$ and channel distribution $\{{\bf P}_{B_i|B^{i-1}, A_i}: i=0,1, \ldots, n\}$
 uniquely define the causal conditioned distributions 
\begin{align}
{\overleftarrow {\bf P}}(da^n|b^{n-1}) &\tri \otimes_{i=0}^n {\bf P}_{A_i|A^{i-1}, B^{i-1}}(da_i|a^{i-1}, b^{i-1}), \label{cc_1} \\
 \overrightarrow{ {\bf P}}(db_0^n|a^n,b^{-1}) &\tri \otimes_{i=0}^n {\bf P}_{B_i|B^{i-1}, A^{i}}(db_i|b^{i-1}, a_{i})  \label{cc_2}
\end{align} 
and vice-versa, and these are parametrized by the initial data $b^{-1}$. Moreover, for a fixed $B^{-1}=b^{-1}$ we can formally define the joint distribution of $\{A_0, B_0, A_1, B_1, \ldots, A_n, B_n\}$   and the joint distribution of $\{B_0, B_1, \ldots, B_n\}$ conditioned on $B^{-1}=b^{-1}$ by   
\begin{align}
{\bf P}^{\overleftarrow{P}}(da^n,db_0^n|b^{-1})\tri & (\overleftarrow{ \bf P}\otimes \overrightarrow{\bf P})( da^n, db_0^n|b^{-1}), \\
{\bf P}^{\overleftarrow{P}}(db_0^n|b^{-1})\tri& \int_{\mb A^n} (\overleftarrow{ \bf P}\otimes \overrightarrow{\bf P})( da^n, db_0^n|b^{-1}).
\end{align}
Both distributions are parametrized by the initial data $b^{-1}$. Then, from \cite{charalambous-stavrou2012}, we have the following convexity property of  directed information.
\begin{itemize}
\item[(a)] The set of conditional distributions defined by (\ref{cc_1}), ${\overleftarrow {\bf P}}_{A^n|B^{n-1}}(\cdot|b^{n-1})\in {\cal M}({\mathbb A}^n)$ is convex.
\item[(b)] Directed information is equivalently expressed as follows.
\begin{align}
I(A^n\rightarrow B^n) =& \int   \log \Big( \frac{ \overrightarrow{\bf P}(\cdot|a^n,b^{-1})}{{\bf P}^{\overleftarrow{P}}(\cdot|b^{-1})}(b_0^n)\Big){\bf P}^{\overleftarrow{P}}( da^n, db_0^n|b^{-1})\otimes \mu(db^{-1})
\nonumber \\
\equiv & {\mathbb I}_{A^n \rar B^n}^{FB}(\overleftarrow{\bf P}, \overrightarrow{ \bf P}) .       \label{eqdi5}
\end{align}
\item[(c)] Directed information, ${\mathbb I}_{A^n \rar B^n}^{FB}(\overleftarrow{\bf P}, \overrightarrow{ \bf P})$, is concave with respect to ${\overleftarrow {\bf P}}(\cdot|b^{n-1})\in {\cal M}({\mathbb A}^n)$ for a fixed $\overrightarrow{\bf  P}(\cdot|a^n,b^{-1}) \in {\cal M}({\mathbb B}_{0}^n)$. 
\end{itemize}

Since the set of conditional distributions with or without  transmission cost constraints is convex, and directed information is a concave functional, the optimization problems (\ref{FBLF_1}) are convex, and  we have the following theorem.

\ \

\begin{theorem}(Convexity properties)\\
\label{thm-pr_fb}
Assume the set $ {\cal P}_{[0,n]}^{FB}(\kappa)$ is non-empty and  the supremum of $I(A^n \rar B^n)$  over the  set of distributions $ {\cal P}_{[0,n]}^{FB}(\kappa)$ is achieved (i.e., it exists). Then, the following hold.
\begin{itemize}
\item[(a)] $C_{A^n \rar B^n}^{FB}(\kappa)$ is non-decreasing  concave function of $\kappa \in [0, \infty]$.
\item[(b)] An alternative characterization of  $C_{A^n \rar B^n}^{FB}(\kappa)$ is given by
 \begin{align}
 C_{A^n \rar B^n}^{FB}(\kappa)=\sup_{\overleftarrow{\bf P}:  {\frac{1}{n{+}1}} {\bf  E}_\mu \big\{ c_{0,n}(a^n,b^{n{-}1}) \big\}  =  \kappa } {\mathbb I}_{A^n \rar B^n}^{FB}(\overleftarrow{\bf P}, \overrightarrow{\bf  P}),  \hst \mbox{for} \hso  \kappa \leq \kappa_{max} \label{extr_con_pro}
 \end{align}
where $\kappa_{max}$  is the smallest number belonging to $[0,\infty]$ such that $C_{A^n \rar B^n}^{FB}(\kappa)$ is constant in $[\kappa_{max}, \infty]$, and ${\bf E}_\mu\big\{\cdot \big\}$ denotes expectation with respect to the joint distribution $(\overleftarrow{ \bf P}\otimes \overrightarrow{\bf P})\otimes \mu$.
\end{itemize}  
\end{theorem}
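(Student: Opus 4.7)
My plan is to lean on the two facts already installed before the theorem statement: the convexity of the constraint set in the causally-conditioned variable $\overleftarrow{\bf P}$ (via (\ref{cc_1})) and the concavity of $\mathbb{I}^{FB}_{A^n\rar B^n}(\overleftarrow{\bf P},\overrightarrow{\bf P})$ in $\overleftarrow{\bf P}$ for fixed channel $\overrightarrow{\bf P}$ (property (c) stated just above the theorem). For part (a), the monotonicity follows immediately from the set inclusion $\mathcal{P}^{FB}_{[0,n]}(\kappa_1)\subseteq\mathcal{P}^{FB}_{[0,n]}(\kappa_2)$ whenever $\kappa_1\leq\kappa_2$, so the supremum over the larger set dominates. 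Before tackling concavity, I will record the preliminary observation that the cost functional $\overleftarrow{\bf P}\mapsto\mathbf{E}_\mu\{c_{0,n}(A^n,B^{n-1})\}$ is affine in $\overleftarrow{\bf P}$, since the joint $(\overleftarrow{\bf P}\otimes\overrightarrow{\bf P})\otimes\mu$ depends linearly on $\overleftarrow{\bf P}$ once $\overrightarrow{\bf P}$ and $\mu$ are held fixed.

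Given this affineness, the concavity of $\kappa\mapsto C^{FB}_{A^n\rar B^n}(\kappa)$ is a one-line convex combination argument. Fix $\kappa_1,\kappa_2\in[0,\infty]$ and $\lambda\in[0,1]$, and pick maximizers $\overleftarrow{\bf P}^1,\overleftarrow{\bf P}^2$ (which exist by hypothesis). The convex combination $\overleftarrow{\bf P}^\lambda\tri\lambda\overleftarrow{\bf P}^1+(1-\lambda)\overleftarrow{\bf P}^2$ again lies in the causally-conditioned set, and by affineness of the cost it belongs to $\mathcal{P}^{FB}_{[0,n]}(\lambda\kappa_1+(1-\lambda)\kappa_2)$. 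Concavity of directed information in $\overleftarrow{\bf P}$ then gives
$$C^{FB}_{A^n\rar B^n}(\lambda\kappa_1+(1-\lambda)\kappa_2)\;\geq\;\mathbb{I}^{FB}_{A^n\rar B^n}(\overleftarrow{\bf P}^\lambda,\overrightarrow{\bf P})\;\geq\;\lambda\,C^{FB}_{A^n\rar B^n}(\kappa_1)+(1-\lambda)\,C^{FB}_{A^n\rar B^n}(\kappa_2),$$
completing part (a).

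For part (b), the inequality $\sup_{=\kappa}\leq\sup_{\leq\kappa}$ is automatic. For the reverse direction I will argue by contradiction: suppose $\kappa\leq\kappa_{max}$ and the supremum over $\leq\kappa$ is attained at some $\overleftarrow{\bf P}^*$ whose actual cost is $\kappa^*<\kappa$. Then $C^{FB}_{A^n\rar B^n}(\kappa^*)\geq\mathbb{I}^{FB}_{A^n\rar B^n}(\overleftarrow{\bf P}^*,\overrightarrow{\bf P})=C^{FB}_{A^n\rar B^n}(\kappa)$, which combined with the non-decreasing property proved in (a) forces equality $C^{FB}_{A^n\rar B^n}(\kappa^*)=C^{FB}_{A^n\rar B^n}(\kappa)$. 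The hard step will be the standard convex-analytic lemma that a concave non-decreasing function on $[0,\infty]$ which is constant on any subinterval $[\kappa^*,\kappa]$ is automatically constant on the full tail $[\kappa^*,\infty]$; once invoked, this contradicts the minimality of $\kappa_{max}$ (since $\kappa^*<\kappa\leq\kappa_{max}$), so we must have $\kappa^*=\kappa$ and $\overleftarrow{\bf P}^*$ already satisfies the equality constraint. The primary technical delicacy I anticipate is precisely this concavity-plus-monotonicity propagation step together with ensuring that the convex-combination maximizer produced in part (a) indeed remains a \emph{causally conditioned} product of the form (\ref{cc_1}); for the latter I will rely on the fact, recorded in (a) above property (c), that the set $\overleftarrow{\bf P}_{A^n|B^{n-1}}(\cdot|b^{n-1})\in\mathcal{M}(\mathbb{A}^n)$ is itself convex, so convex combinations preserve the causal structure.
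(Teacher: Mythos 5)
Your proposal is correct and takes essentially the same approach as the paper: the paper's own proof is a one-line appeal to the convexity of the constraint set with respect to $\overleftarrow{\bf P}$ together with the concavity and non-decreasing properties it cites from \cite{charalambous-stavrou2012}, and your argument simply writes out in full what that citation contains, namely the affine-cost convex-combination step for part (a) and the concavity-plus-monotonicity propagation identifying $\kappa_{max}$ for part (b). No gap.
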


\begin{proof} Since the set $ {\cal P}_{[0,n]}^{FB}(\kappa)$ is convex with respect to $\overleftarrow{ \bf P}(\cdot|b^{n-1})\in {\cal M}({\mathbb A^n})$, the statements follow from 
the convexity and non-decreasing properties \cite{charalambous-stavrou2012}.
\end{proof}

The above theorem states that the extremum problem of feedback capacity is a convex optimization problem, over appropriate sets of distributions.  \\

\subsubsection{Information Structures of Optimal Channel Input Distributions.} Consider the extremum problem $C_{A^n \rar B^n}^{FB}(\kappa)$, given by \eqref{extr_con_pro}. In \cite{kourtellaris2015information, kourtellarisISIT2016}, it is shown that the optimal channel input distribution satisfies the following conditional independence.
\bea
{\bf P}_{A_i|A^{i-1},B^{i-1}}({da}_i|a^{i-1},b^{i-1})={\bf P}_{A_i|B^{i-1}}({da}_i|b^{i-1})\equiv \pi_i(da_i|b^{i-1}), \hso  i=0, \ldots, n. \label{CAD}
\eea
Moreover, in view of the information structure of the optimal channel input distribution, $C_{A^n \rar B^n}^{FB}(\kappa)$ reduces to the following optimization problem.
\begin{align}
C_{A^n \rar B^n}^{FB}(\kappa)=& \sup_{ \overline{\cal P}_{[0,n]}^{FB}(\kappa) } \sum_{i=0}^n \int_{  }^{}   \log \Big( \frac{ {\bf P}_{B_i|B^{i-1}, A_i}(\cdot|b^{i-1}, a_i) }{{\bf P}_{B_i|B^{i-1}}^\pi(\cdot|b^{i-1})}(b_i)\Big){ \bf P}_{A_i, B^i}^\pi( da_i, db^i) \label{IS_1} \\
\equiv & \sup_{ \overline{\cal P}_{[0,n]}^{FB}(\kappa)  }  {\mathbb I}_{A^n \rar B^n}^{FB}(\pi_i,  {\bf P}_{B_i|B^{i-1},A_i}: i=0,1, \ldots, n)      \label{func_1}  
\end{align}
where the transmission cost constraint is defined by 
\beae
\overline{\cal P}_{0,n}^{FB}(\kappa) {\tri} \Big\{ \pi_i(da_i|b^{i-1}), i=0, \ldots, n:  {\frac{1}{n{+}1}} {\bf E}_\mu^\pi \big\{c_{0,n}(A^n,B^{n{-}1})\big\} \leq \kappa\Big\}
\eeae
and the induced joint and transition probability distributions are given by  
\begin{align}
{\bf  P}_{A_i, B^i}^\pi(da_i, db^i)
=&{\bf P}_{B_i|B^{i-1}, A_i}(db_i|b^{i-1}, a_i)\otimes \pi_i(da_i|b^{i-1})\otimes {P}^{\pi}_{B^{i-1}}(d{b}^{i-1}) \label{CIS_2gg_new_n} \\
{\bf  P}_{B_i|B^{i-1}}^\pi(db_i|b^{i-1})=&  \int_{{\mb A}_i} {\bf P}_{B_i|B^{i-1}, A_i}(db_i|b^{i-1}, a_i)\otimes {\pi}_i(da_i|b^{i-1}), \hso  i=0, \ldots, n . \label{CIS_3a_n}
\end{align}
The superscript indicates the dependence of these distributions on $\{\pi_i(da_i|b^{i-1}): i=0, \ldots, n\}$.\\
The information feedback capacity rate  is then given by 
\begin{align}
C_{A^\infty \rar B^\infty}^{FB}(\kappa) \tri \lim_{n \longrightarrow \infty} \frac{1}{n+1} \sup_{ \overline{\cal P}_{[0,n]}^{FB}(\kappa)  }  {\mathbb I}_{A^n \rar B^n}^{FB}(\pi_i,  {\bf P}_{B_i|B^{i-1},A_i}: i=0,1, \ldots, n).  \label{ifc_1}      
\end{align}

\subsection{Feedback Versus No Feedback}
\label{ch4bsscnf}
Here, we address the question whether feedback increases capacity via optimization problem (\ref{IS_1}). First, we recall the definition of channel input distributions without feedback.\\
\begin{definition}(Channels input distribution without feedback)\\
A sequence of conditional 
distributions  defined by
\bea
{\cal P}_{[0,n]}^{noFB} \tri \Big\{{\bf P}_{A_i|A^{i-1},B^{-1}}({da}_i|a^{i-1},b^{-1}) \equiv \pi_i^{noFB}(da_i|a^{i-1},b^{-1}): \hso i=0, \ldots, n\Big\}.\label{nofb_cid}
\eea
The information structure of the channel input distribution without feedback is ${\cal I}_i^{noFB}\tri  \{a^{i-1},b^{-1}\}$. For time $i=0$, the distribution is ${\bf P}_{A_0|A^{-1},B^{-1}}({da}_0|a^{-1},b^{-1}) \equiv \pi_i^{noFB}(da_0|b^{-1})$, hence the information structure is  ${\cal I}_0^{FB}\tri \{a^{-1}, b^{-1}\}=\{b^{-1}\}$, which states that the channel input distribution depends only on the initial data.
\end{definition}
\par Similar to the feedback case (Section~\ref{ssec:b_feed}), the initial state of the channel, $b^{-1}$, is assumed to be known at the encoder.  The transmission cost constraint without feedback, is defined by 
\beae
{\cal P}_{[0,n]}^{noFB}(\kappa) {\tri} \Big\{\pi_i^{noFB}(da_i|a^{i-1},b^{-1}), i=0, \ldots, n:  {\frac{1}{n{+}1}} {\bf E}_\mu \big\{c_{0,n}(A^n,B^{n{-}1})\big\} \leq \kappa\Big\}, \hso \kappa\in[0,\infty]. \label{nofb_cc} \nms
\eeae
Moreover, the set of encoding strategies without feedback,  mapping messages  into channel inputs of block length $(n+1)$, are defined by
\begin{multline}\label{block-code-nfeed}
{\cal E}_{[0,n]}^{noFB}(\kappa) \triangleq  \Big\{g^{noFB}_i: {\cal M}_n \times {\mathbb A}^{i-1} \times {\mb B}^{-1}  \longmapsto {\mb A}_i, \hso  a_0=g^{noFB}_0(w, b^{-1}), a_1=g^{noFB}_1(w,b^{-1},a_0),\ldots, \\ a_n=g^{noFB}_n(w,b^{-1}, a^{n-1}), 
  w\in {\cal M}_n: \hso  \frac{1}{n+1} {\bf E}^{g^{noFB}}\Big(c_{0,n}(A^n,B^{n-1})\Big)\leq \kappa  \Big\}, \hso n=0, 1, \ldots. 
\end{multline}
By employing \eqref{nofb_cc} and \eqref{block-code-nfeed}, a code without feedback is defined similarly to Definition~\ref{def_feedback_code}. 
\par Given any channel input distribution without feedback $\{{\pi}^{noFB}_{i}(d{a}_i|a^{i-1},b^{-1}): i=0,1, \ldots, n\} \in {\cal P}_{[0,n]}^{noFB}(\kappa)$,  a channel distribution $\{{\bf P}_{B_i|B^{i-1}, A_i}: i=0,1, \ldots, n\}$, and a fixed initial distribution ${\bf P}_{B^{-1}}(db^{-1})=\mu(b^{-1})$,  then the  induced joint distribution ${\bf  P}_{A^n, B^n}$ parametrized by $\mu(\cdot)$ is uniquely defined. The mutual information between from $A^n\tri\{A_0,A_1,\ldots,A_n\}$ to $B_0^n\tri\{B_0,B_1,\ldots,B_n\}$ conditioned on $B^{-1}$ is defined by 
\begin{align}
I(A^n ; B^n)\tri 
& {\bf E}_\mu^{\pi^{noFB}}\Big\{ \log \Big( \frac{ {\bf P}_{B_0^n|A^n,B^{-1}}(\cdot|A^n,B^{-1}) }{{\bf P}_{B_0^n|B^{-1}}^{\pi^{noFB}}(\cdot|B^{-1})}(B_0^n)\Big)\Big\} \nonumber \\
=&\sum_{i=0}^n \int_{  }^{} \log \Big( \frac{ {\bf P}_{B_i|B^{i-1}, A_i}(\cdot|b^{i-1}, a_i) }{{\bf P}_{B_i|B^{i-1}}^{\pi^{noFB}}(\cdot|b^{i-1})}(b_i)\Big){ \bf P}_{A^i, B^i}^{\pi^{noFB}}( da^i, db^i)
\nonumber \\
=&\sum_{i=0}^n \int_{  }^{} \log \Big( \frac{ {\bf P}_{B_i|B^{i-1}, A_i}(\cdot|b^{i-1}, a_i) }{{\bf P}_{B_i|B^{i-1}}^{\pi^{noFB}}(\cdot|b^{i-1})}(b_i)\Big){ \bf P}_{A_i, B^i}^{\pi^{noFB}}( da_i, db^i)
\nonumber \\
\equiv & {\mathbb I}_{A^n \rar B^n}^{noFB}(\pi_i^{noFB},  {\bf P}_{B_i|B^{i-1},A_i}: i=0,1, \ldots, n)        \label{eqdi5_nFB}
\end{align}
where the joint distribution and transition probability distribution are induced by $\{\pi_i^{noFB}(da_i|a^{i-1},b^{-1}): i=0, \ldots, n\} \in {\cal P}_{[0,n]}^{noFB}$ as follows. 
\begin{align}
{\bf  P}_{A_i, B^i}^{\pi^{noFB}}(da_i, db^i)
=&{\bf P}_{B_i|B^{i-1}, A_i}(db_i|b^{i-1}, a_i)\otimes {\bf P}_{A_i|B^{i-1}}^{\pi^{noFB}}(da_i|b^{i-1})\otimes {\bf P}_{B^{i-1}}^{\pi^{noFB}}(db^{i-1}) \label{CIS_2gg_new_n_nf} \\
{\bf  P}_{B_i|B^{i-1}}^{\pi^{noFB}}(db_i|b^{i-1})=&  \int_{{\mb A}_i} {\bf P}_{B_i|B^{i-1}, A_i}(db_i|b^{i-1}, a_i)\otimes {\bf P}_{A_i|B^{i-1}}^{{\pi^{noFB}}}(da_i|b^{i-1}), \hso  i=0, \ldots, n  \label{CIS_3a}\\
{\bf P}_{A_i|B^{i-1}}^{\pi^{noFB}}(da_i|b^{i-1})=&\int_{{\mb A}^{i-1}}{\pi}^{noFB}_{i}(d{a}_i|a^{i-1},b^{-1})\otimes {\bf P}_{A^{i-1}|B^{i-1}}^{\pi^{noFB}}(da^{i-1}|b^{i-1}) \label{CIS_2gg_new_n_nf21}\\
{\bf P}_{A^{i-1}|B^{i-1}}^{\pi^{noFB}}(da^{i-1}|b^{i-1})=&\otimes_{j=0}^{i-1}\frac{{\bf P}_{B_j|B^{j-1}, A_j}(db_j|b^{j-1}, a_j)\otimes {\pi}_{j}^{noFB}(da_j|a^{j-1}, b^{-1})}{\int_{{\mb A}_{j}}{\bf P}_{B_j|B^{j-1}, A_j}(db_j|b^{j-1}, a_j)\otimes {\pi}_{j}^{noFB}(da_j|a^{j-1}, b^{-1})}. \label{CIS_2ggnf221}
\end{align}
The superscript in the above distributions are important to distinguish that these are generated by the channel and channel input distributions without feedback, while the functional in (\ref{eqdi5_nFB}) is fundamentally different from the one in (\ref{func_1}). Clearly, compared to the channel with feedback in which the corresponding distributions are  (\ref{CIS_2gg_new_n}) and (\ref{CIS_3a_n}), and they are induced by $\{\pi_i(da_i|b^{i-1}): i=0, \ldots, n\} \in \overline{\cal P}_{0,n}^{FB}(\kappa)$, when  the channel is used without feedback, the distributions (\ref{CIS_2gg_new_n_nf}) and (\ref{CIS_3a}) are induced by $\{\pi_i^{noFB}(da_i|a^{i-1},b^{-1}): i=0, \ldots, n\} \in {\cal P}_{[0,n]}^{noFB}(\kappa)$. 
\par Define the information quantity 
\begin{align}
C_{A^n ; B^n}^{noFB}(\kappa)=& \sup_{ {\cal P}_{[0,n]}^{noFB}(\kappa) } \sum_{i=0}^n \int_{  }^{}   \log \Big( \frac{ {\bf P}_{B_i|B^{i-1}, A_i}(\cdot|b^{i-1}, a_i) }{{\bf P}_{B_i|B^{i-1}}^{\pi^{noFB}}(\cdot|b^{i-1})}(b_i)\Big){ \bf P}_{A_i, B^i}^{\pi^{noFB}}( da_i, db^i) \label{IS_1_nfb} \\
\equiv & \sup_{{\cal P}_{[0,n]}^{noFB}(\kappa)  }  {\mathbb I}_{A^n \rar B^n}^{noFB}(\pi_i^{noFB},  {\bf P}_{B_i|B^{i-1},A_i}: i=0,1, \ldots, n).       
\end{align}  
Then the information capacity without feedback subject to a transmission cost constraint is defined by 
\begin{align}
C_{A^\infty ; B^\infty}^{noFB}(\kappa) \tri \lim_{n \longrightarrow \infty} \frac{1}{n+1} \sup_{ \overline{\cal P}_{[0,n]}^{noFB}(\kappa)  }  {\mathbb I}_{A^n \rar B^n}^{noFB}(\pi_i^{noFB},  {\bf P}_{B_i|B^{i-1},A_i}: i=0,1, \ldots, n).  \label{ifc_1_noFB}      
\end{align}
Next, we note the following. Let $\{\pi_i^*(da_i|b^{i-1}): i=0, \ldots, n\} \in \overline{\cal P}_{[0,n]}^{FB}(\kappa)$ denote the maximizing distribution  in $C_{A^n \rar B^n}^{FB}(\kappa)$ defined by (\ref{IS_1}). Suppose there exists a sequence of channel  input distributions without feedback $\{{\bf P}^*(da_i| {\cal I}_i^{noFB})\equiv \pi_i^{*,noFB}(da_i|{\cal I}_i^{noFB}): {\cal I}_i^{noFB} \subseteq \{b^{-1}, a_0, \ldots, a_{i-1}\}, i=0, \ldots,n\} \in {\cal P}_{0,n}^{noFB}(\kappa)$ 
 which induces the maximizing channel input distribution with feedback 
$\{\pi^*_i(da_i|b^{i-1}): i=0,1, \ldots, n\}$. That is, ${\bf P}_{A_i|B^{i-1}}^{noFB}(da_i|b^{i-1})$  given by \eqref{CIS_2gg_new_n_nf21} is equal to $\pi_i^*(da_i|b^{i-1})$, $\forall i=0,1,\ldots,n$. Then, it is clear that this sequence also induces the optimal joint distribution and conditional distribution defined by   (\ref{CIS_2gg_new_n}), (\ref{CIS_3a_n}), and consequently $C_{A^n \rar B^n}^{FB}(\kappa)$ and $C_{A^\infty \rar B^\infty}^{FB}(\kappa)$ are achieved without using feedback. 
\par In the following theorem, we prove that this condition  is not only sufficient but also necessary for any  channel input distribution without feedback to achieve  the finite time feedback information capacity, $C_{A^n \rar B^n}^{FB}(\kappa)$.\\
\begin{theorem}(Necessary and sufficient conditions for $C_{A^n \rar B^n}^{FB}(\kappa)=C_{A^n ; B^n}^{noFB}(\kappa)
$)\\
\label{ch4gpnf}
\noi Consider channel (\ref{ch_1}) and let $\{\pi_i^*(da_i|b^{i-1}):i=0, \ldots, n\} \in \overline{\cal P}_{[0,n]}^{FB}(\kappa)$ denote the maximizing distribution  in $C_{A^n \rar B^n}^{FB}(\kappa)$ defined by (\ref{IS_1}),  and let $\Big\{{\bf  P}_{A_i, B^i}^{\pi^*}(da_i, db^i), {\bf  P}_{B_i|B^{i-1}}^{\pi^*}(db_i|b^{i-1}): i=0, \ldots, n\Big\}$ denote the corresponding joint and transition distributions as defined by   (\ref{CIS_2gg_new_n}), (\ref{CIS_3a_n}).  \\
Then 
\bea
C_{A^n \rar B^n}^{FB}(\kappa)=C_{A^n ; B^n}^{noFB}(\kappa) \label{feenotincap}
\eea
 if and only if there exists a sequence of channel  input distributions  
\beae
\Big\{{\bf P}^*(da_i| {\cal I}_i^{noFB})\equiv \pi_i^{noFB,*}(da_i|{\cal I}_i^{noFB}): {\cal I}_i^{noFB} \subseteq \{b^{-1}, a_0, \ldots, a_{i-1}\}, i=0, \ldots,n\Big\} \in {\cal P}_{0,n}^{noFB}(\kappa)\nonumber
\eeae
 which 
induces the maximizing channel input distribution with feedback 
$\{\pi^*_i(da_i|b^{i-1}): i=0,1, \ldots, n\}$.
\end{theorem}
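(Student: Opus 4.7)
The plan is to treat the two implications separately, using two preliminary observations. First, every $\pi^{noFB}\in {\cal P}^{noFB}_{[0,n]}(\kappa)$ embeds into ${\cal P}^{FB}_{[0,n]}(\kappa)$ by the trivial extension ${\bf P}_{A_i|A^{i-1},B^{i-1}}(da_i|a^{i-1},b^{i-1})=\pi_i^{noFB}(da_i|a^{i-1},b^{-1})$: the cost constraint is preserved, and the no-feedback property forces $I(B^{n-1}\rar A^n|B^{-1})=0$, so that Massey's identity applied to channel \eqref{ch_1} gives $I(A^n;B^n)=I(A^n\rar B^n)$. In particular $C^{noFB}_{A^n;B^n}(\kappa)\leq C^{FB}_{A^n\rar B^n}(\kappa)$ always. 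Second, inspecting \eqref{IS_1}--\eqref{CIS_3a_n}, the functional ${\mathbb I}^{FB}_{A^n\rar B^n}(\pi_i,{\bf P}_{B_i|B^{i-1},A_i})$ depends on the channel input only through the conditional marginals $\{{\bf P}_{A_i|B^{i-1}}(da_i|b^{i-1})\}$: an induction on $i$ shows that ${\bf P}^{\pi}_{B^{i-1}}$, ${\bf P}^{\pi}_{A_i,B^i}$ and ${\bf P}^{\pi}_{B_i|B^{i-1}}$ are all determined once these marginals and the fixed channel kernel are specified. Thus any two input distributions (feedback or not) that induce the same $\{{\bf P}_{A_i|B^{i-1}}\}$ yield the same value of ${\mathbb I}^{FB}_{A^n\rar B^n}$.

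Sufficiency follows at once. Given $\{\pi^{noFB,*}_i\}$ inducing $\pi^*_i$ via \eqref{CIS_2gg_new_n_nf21}, the second observation gives the directed information at $\pi^{noFB,*}$ equal to that at $\pi^*$, namely $C^{FB}_{A^n\rar B^n}(\kappa)$, while the first observation gives that this equals the mutual information $I(A^n;B^n)$ at $\pi^{noFB,*}$. Hence $C^{noFB}_{A^n;B^n}(\kappa)\geq C^{FB}_{A^n\rar B^n}(\kappa)$, and combined with the reverse inequality equality holds. For necessity, assume $C^{FB}_{A^n\rar B^n}(\kappa)=C^{noFB}_{A^n;B^n}(\kappa)$ and let $\pi^{noFB,*}$ be a maximizer of $C^{noFB}_{A^n;B^n}(\kappa)$. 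By the first observation its directed information equals $C^{FB}_{A^n\rar B^n}(\kappa)$. Defining $\tilde{\pi}_i(da_i|b^{i-1})\tri {\bf P}^{\pi^{noFB,*}}_{A_i|B^{i-1}}(da_i|b^{i-1})$ through \eqref{CIS_2gg_new_n_nf21}, the second observation yields the directed information at $\tilde{\pi}$ equal to that at $\pi^{noFB,*}$, hence equal to $C^{FB}_{A^n\rar B^n}(\kappa)$, so $\tilde{\pi}$ is a feedback maximizer. Taking $\pi^*_i=\tilde{\pi}_i$ then produces a no-feedback sequence inducing $\pi^*$.

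The delicate point is this last identification when the feedback maximizer is non-unique: the statement is then most naturally read as asserting the existence of \emph{some} maximizer $\pi^*$ in the image of the induction map \eqref{CIS_2gg_new_n_nf21}, rather than every maximizer being so inducible. Under strict concavity of ${\mathbb I}^{FB}_{A^n\rar B^n}(\overleftarrow{\bf P},\overrightarrow{\bf P})$ in $\overleftarrow{\bf P}$ (a strengthening of Theorem~\ref{thm-pr_fb}) uniqueness is automatic and $\pi^*=\tilde{\pi}$ follows; this is the main obstacle to a completely unambiguous proof. After that point, the argument reduces to routine bookkeeping with the factorizations \eqref{CIS_2gg_new_n}--\eqref{CIS_3a_n} and \eqref{CIS_2gg_new_n_nf}--\eqref{CIS_2ggnf221}, together with the channel structure \eqref{ch_1}.
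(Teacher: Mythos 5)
Your proof is correct and follows essentially the same route as the paper's own (much terser) argument: both rest on the inequality $C^{FB}_{A^n\rar B^n}(\kappa)\geq C^{noFB}_{A^n;B^n}(\kappa)$ together with the fact that the directed-information functional and the induced distributions (\ref{CIS_2gg_new_n})--(\ref{CIS_3a_n}) depend on the input law only through the conditional marginals $\{{\bf P}_{A_i|B^{i-1}}\}$ and the fixed channel, so that equality holds exactly when some no-feedback law reproduces those marginals. The non-uniqueness caveat you flag in the necessity direction is a genuine subtlety of the theorem's phrasing, but the paper's proof is silent on it as well, so your treatment is if anything more careful than the original.
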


\begin{proof}  In general,  the inequality $C_{A^n \rar B^n}^{FB}(\kappa)\geq C_{A^n ; B^n}^{noFB}(\kappa)$ holds. Moreover, by Section~\ref{ssec:b_feed} the distributions $\{{\bf  P}_{A_i, B^i}^{\pi^*}(da_i, db^i), {\bf  P}_{B_i|B^{i-1}}^{\pi^*}(db_i|b^{i-1}): i=0, \ldots, n\}$ are induced by the channel, which is fixed, and the optimal conditional distribution $\{\pi_i^*(da_i|b^{i-1}): i=0, \ldots, n\} \in \overline{\cal P}_{[0,n]}^{FB}(\kappa)$. Then, equality holds  if and only if there exists a distribution without feedback   $\{\pi_i^{noFB,*}(da_i|{\cal I}_i^{noFB}): i=0, \ldots, n\} \in   {\cal P}_{0,n}^{noFB}(\kappa)$ which induces  $\{\pi_i^*(da_i|b^{i-1}): i=0, \ldots, n\} \in \overline{\cal P}_{[0,n]}^{FB}(\kappa)$. This follows from the fact that the distributions $\Big\{{\bf  P}_{A_i, B^i}^{\pi^*}(da_i, db^i),$ ${\bf  P}_{B_i|B^{i-1}}^{\pi^*}(db_i|b^{i-1}): i=0, \ldots, n\Big\}$ are induced by the feedback distribution, $\{\pi_i^*(da_i|b^{i-1}): i=0, \ldots, n\}$ and the channel distribution. This completes the proof. 
\end{proof}
Theorem~\ref{ch4gpnf} provides a sufficient condition for feedback not to increase  capacity, i.e. $C_{A^{\infty} \rar B^{\infty}}^{FB}(\kappa)=C_{A^{\infty} ; B^{\infty}}^{noFB}(\kappa)$, since if \eqref{feenotincap} holds, then $\lim_{n\rar\infty}\frac{1}{n+1}C_{A^n \rar B^n}^{FB}(\kappa)=\lim_{n\rar\infty}\frac{1}{n+1}C_{A^n ; B^n}^{noFB}$ . In Section~\ref{sec:cap_nf} we demonstrate an application of Theorem~\ref{ch4gpnf} to a specific channel with memory, where we show that an input distribution without feedback induces $\{\pi_i^*(da_i|b^{i-1}): i=0, \ldots, n \}$, hence feedback does not increase capacity. 

\section{Dynamic Programming and Necessary Sufficient Conditions for Non-nested Optimization}
\label{UMCO}
In this section we employ the structural properties of capacity achieving channel input distributions with feedback to derive dynamic programming recursions and  necessary and sufficient conditions for the single letter characterization \eqref{cap_1}, to hold. Specifically, we provide the following results for the UMCO channel.
\begin{description}
\item[(a)] Necessary and sufficient conditions to determine when dynamic programming recursions, which are nested optimization problems, reduce to non-nested optimization problems. 

\item[(b)] Repeat (a) for the per unit time infinite horizon.

\item[(c)] Upper bounds on the probability of maximum likelihood decoding.
\end{description}
\par The time-varying UMCO channel is defined by
\begin{align}
{\bf P}_i(b_i|b^{i-1},a^i)\tri {\bf P}_i(b_i|b_{i-1},a_i), \hso i=0, \ldots, n  \label{UMCO_CHAN_SEC3}
\end{align}
and the transmission cost constraint is defined by 
\bea
\frac{1}{n+1}{\bf E}\left\{\sum_{i=0}^{n}{\gamma}^{UM}_i(A_i,B_{i-1})\right\}\leq \kappa \label{CC_UMCO_CHAN_SEC3} 
\eea
where ${\gamma}^{UM}_i:{\mb A}_{i}\times{\mb B}_{i-1}\longmapsto [0,\infty)$. At $i=0$ the conditional distribution depends on   $\{b_{-1}, a_0\}$, where $b_{-1} \in {\mb B}_{-1}$ is the initial data which are either known to the encoder and the decoder or, $b_{-1}=\{\emptyset\}$.
For simplicity, of presentation and technical assumptions needed, we consider a channel model with transmission cost function, defined on finite alphabet spaces. However, all main results extend to abstract alphabet spaces and channel distributions, which depend on finite memory on past channel output. Moreover, our analysis and the corresponding theorems  can be  extended to channels with finite memory on the previous channel outputs by exploiting the structural form of the capacity achieving  distributions given in \cite{kourtellaris2015information, kourtellarisISIT2016}.

%
%
%
For the above model, it is shown in \cite{{kourtellaris2015information}, kourtellarisISIT2016} that maximizing directed information, $I(A^n\rar B^n)$, over ${\cal P}^{FB}_{[0,n]}$ or ${\cal P}^{FB}_{[0,n]}(\kappa)$ occurs in the subset of conditional distributions that satisfy the following conditional independence.
\bea
{\bf P}_i(a_i|a^{i-1},b^{i-1})={\bf P}_i(a_i|b_{i-1})\equiv \pi_i(a_i|b_{i-1}), \ i=0,1,\ldots,n. \label{ch_in_disd} 
\eea
Consequently, we have the following Markovian properties.  
\bea
 {\bf P}_i(a_i,b_i|a^{i-1}, b^{i-1})&=&{\bf P}_i^\pi(a_i,b_i| a_{i-1}, b_{i-1}),  \hso i=0,1, \ldots, n, \label{FOM1} \\     
      {\bf P}_i({b_i|b^{i-1}})&=&{\bf P}_i^\pi({b_i| b_{i-1}}), \hso  i=0,1, \ldots, n,
       \label{FOM_2}\\
 {\bf P}_i^\pi(b_i|b_{i-1}) &=&\sum_{a_i \in {\mb A}_i} {\bf P}_i(b_i|b_{i-1}, a_i) {\pi}_i(a_i|b_{i-1}),  \hso i=0,1, \ldots, n.
 \label{CID_A.1}     
      \eea
where the superscript indicates the dependence on the channel input distribution \eqref{ch_in_disd}.      
In view of these Markov properties, the characterization of the FTFI capacity (i.e., (\ref{FBLF_1})) is
given by\footnote{When clear from the context, the subscript notation of the distributions is omitted, i.e., ${\bf P}^{\pi}_{B_i|B_{i-1}}(b_i|b_{i-1})\equiv {\bf P}^{\pi}_i(b_i|b_{i-1})$.} 
\begin{align}
C_{A^n \rar B^n}^{FB,UMCO} \tri &\sup_{\sr{\circ}{\cal P}^{FB}_{[0,n]} } {\bf E}^{\pi}_{\mu} \Big\{  \sum_{i=0}^n  \log \Big(   \frac{{\bf P}_i(B_i|B_{i-1}, A_i)}{{\bf P}_i^\pi(B_i|B_{i-1})}\Big) \Big\}  \\
 =&\sup_{\sr{\circ}{\cal P}^{FB}_{[0,n]} } \sum_{i=0}^n I(A_i; B_i|B_{i-1})\label{eq.x1}
 \end{align}
 where
 \begin{align}
 \sr{\circ}{ {\cal P}}_{[0,n]}^{FB} \tri&     \big\{ \pi_i(a_i|b_{i-1}):i=0, 1, \ldots, n\big\} \subset  {\cal P}_{[0,n]}^{FB}. 
\end{align}
Similarly, for conditional distributions with transmission cost the characterization of FTFI capacity  is given by 
\begin{align}
C_{A^n \rar B^n}^{FB,UMCO}(\kappa) \tri &\sup_{ \sr{\circ}{ {\cal P}}_{[0,n]}^{FB}(\kappa)   } {\bf E}^\pi_{\mu} \Big\{    \sum_{i=0}^n  \log \Big(   \frac{ {\bf P}_i(B_i|B_{i-1}, A_i)}{ {\bf P}_i^\pi(B_i|B_{i-1})}\Big) \Big\} \\
 =&\sup_{\sr{\circ}{ {\cal P}}_{[0,n]}^{FB}(\kappa)  } \sum_{i=0}^n I(A_i; B_i|B_{i-1}) \label{CID_A.2.2}
\end{align}
where
\bea
\sr{\circ}{ {\cal P}}_{[0,n]}^{FB}(\kappa) \tri     \Big\{ {\pi}_i(a_i|b_{i-1}), i=0, 1, \ldots, n:  \: \frac{1}{n+1} \sum_{i=0}^n  {\bf E}^\pi_{\mu}\Big( \gamma_i^{UM}(A_i, B_{i-1})\Big) \leq \kappa \Big\}.  \label{CID_A.2}
\eea
Since the joint process $\{B_{-1}, A_0, B_0, \ldots, A_n, B_n \}$ and channel output process $\{B_{-1}, B_0, \ldots, B_n\}$ are Markov, we   explore the connection of the above optimization problems to Markov Decision theory,  to derive  the results listed in (a)-(c). 
We do this in the next sections. 
\subsection{Necessary and Sufficient Conditions via Dynamic Programming: The Finite Horizon case}\label{subsec.DPandALG}
To derive the necessary and sufficient conditions for any channel input distribution to maximize directed information, i.e., item (a),   we first apply dynamic programming on a finite horizon. 

\subsubsection{Without Transmission Cost Constraint}
The dynamic programming recursion for $C_{A^n \rar B^n}^{FB,UMCO}$ is obtained as follows.  Let $V_t(b_{t-1})$ represent the value function, that is, the maximum expected total cost on the future time horizon $\{t,t+1,\dots,n\}$ given output $B_{t-1}=b_{t-1}$ at time $t-1$, defined by
\begin{align}
V_t(b_{t-1})&=\sup_{\pi_i(a_i|b_{i-1}):i=t,t+1, \dots,n}{\bf E}^\pi\Big\{\sum_{i=t}^{n}\log\Big(\frac{{\bf P}_i(B_i|B_{i-1},A_i)}{{\bf P}_i^\pi(B_i|B_{i-1})}\Big)|B_{t-1}=b_{t-1}\Big\}\label{dyn_ck1}
\end{align}
where the  transition probability of the channel output process is 
\begin{equation}
{\bf P}_t^\pi(b_t|b_{t-1})=\sum_{a_t \in {\mb A}_t}{\bf P}_t(b_t|b_{t-1},a_t)\pi_t(a_t|b_{t-1}). \label{T_P_1}
\end{equation}
Then  (\ref{dyn_ck1}) satisfies the following dynamic programming recursions.
\begin{align}
V_n(b_{n-1})=&\sup_{\pi_n(a_n|b_{n-1})}\sum_{(a_n, b_n) \in {\mb A}_n \times {\mb B}_n}   \log\Big(\frac{{\bf P}_n(b_n|b_{n-1},a_n)}{{\bf P}_n^\pi(b_n|b_{n-1})}\Big){\bf P}_n(b_n|b_{n-1},a_n)\pi_n(a_n|b_{n-1}), \label{DP.eq.3a}\\
V_t(b_{t-1})=&\sup_{\pi_t(a_t|b_{t-1})}\Big\{  \sum_{a_t \in {\mb A}_t} \Big[ \sum_{b_t \in {\mb B}_t}      \log\Big(\frac{{\bf P}_t(b_t|b_{t-1},a_t)}{{\bf P}_t^\pi(b_t|b_{t-1})}\Big){\bf P}_t(b_t|b_{t-1},a_t) \nonumber \\
& +\sum_{b_t \in {\mb B}_t}V_{t+1}(b_{t}){\bf P}_t(b_t|b_{t-1},a_t)\Big]\pi_t(a_t|b_{t-1})\Big\},\quad t=0,1,\dots,n-1.\label{DP.eq.3b}
\end{align}
For a fixed initial distribution ${\bf P}_{B_{-1}}(b_{-1})=\mu(b_{-1})$ we have
\bea 
C_{A^n \rar B^n}^{FB,UMCO}=\sum_{b_{-1} \in {\mb B}_{-1} } V_0(b_{-1})\mu(b_{-1}).\label{conn_val_fu}
\eea
Clearly, by using the properties of relative entropy, we can show that the right hand side of the dynamic programming recursion, (\ref{DP.eq.3a}), is a concave function of the input distribution $\pi_n(a_n|b_{n-1})$. Similarly at each step of the recursion, the right hand side of the dynamic programming recursion, (\ref{DP.eq.3b}), is a concave function of the input distribution $\pi_t(a_t|b_{t-1})$, since the future channel input distributions, $\{\pi_{t+1}(a_{t+1}|b_{t}),\ldots,\pi_n(a_n|b_{n-1})\}$ are fixed to their optimal strategies. Utilizing this observation we have the following necessary and sufficient conditions for any channel input distribution to maximize the right hand side of the dynamic programming recursions \eqref{DP.eq.3a} and \eqref{DP.eq.3b}.\\

\begin{theorem}(Necessary and sufficient conditions)\label{nessufco}{\ \\}
 The necessary and sufficient conditions for any input distribution $\{\pi_t(a_t|b_{t-1}):t=0,1,\ldots,n\}$ to achieve the supremum of the dynamic programming recursions (\ref{DP.eq.3a}) and (\ref{DP.eq.3b})  are the following. For each $b_{n-1}\in {\mb B}_{n-1}$, there exist $V_n(b_{n-1})$ such that 
\begin{align}
 V_n(b_{n-1}) =& \sum_{b_{n}\in {\mb B}_n}\log\Big(\frac{{\bf P}_n(b_n|a_n,b_{n-1})}{{\bf P}^{\pi}_{n}(b_n|b_{n-1})}\Big){\bf P}_n(b_n|a_n,b_{n-1}), \hso   \forall{a_n}\in{\mb A}_n \hso \mbox{if} \hso \pi_n(a_n|b_{n-1})\neq{0},  \label{suff_equa_ter_con}\\
V_n(b_{n-1}) \leq & \sum_{b_{n}\in {\mb B}_n}\log\Big(\frac{{\bf P}_n(b_n|a_n,b_{n-1})}{{\bf P}^{\pi}_{n}(b_n|b_{n-1})}\Big){\bf P}_n(b_n|a_n,b_{n-1}), \hso   \forall{a_n}\in{\mb A}_n \hso \mbox{if} \hso \pi_n(a_n|b_{n-1})={0}  \label{suff_equa_ter_con_a}
\end{align}
and for each $t=n-1,n-2\ldots,1,0$ there exist  $V_t(b_{t-1})$ such that    
\begin{align}
 V_t(b_{t-1}) = & \sum_{b_{t}\in {\mb B}_t}\Big\{\log\Big(\frac{{\bf P}_t(b_t|a_t,b_{t-1})}{{\bf P}_t^{\pi}(b_t|b_{t-1})}\Big)+V_{t+1}(b_{t})\Big\}{\bf P}_t(b_t|a_t,b_{t-1}), \hso \forall {a_t}\in{\mb A}_t, \hso \mbox{if} \hso \pi_t(a_t|b_{t-1})\neq{0}, \label{suff_equa}\\
 V_t(b_{t-1}) \leq & \sum_{b_{t}\in {\mb B}_t}\Big\{\log\Big(\frac{{\bf P}_t(b_t|a_t,b_{t-1})}{{\bf P}_t^{\pi}(b_t|b_{t-1})}\Big)+V_{t+1}(b_{t})\Big\}{\bf P}_t(b_t|a_t,b_{t-1}), \hso \forall {a_t}\in{\mb A}_t, \hso \mbox{if} \hso  \pi_t(a_t|b_{t-1})={0}. \label{suff_equa_a}
\end{align}
 Moreover, $\{V_t(b_{t-1}): (t, b_{t-1})\in \{0, \ldots, n\}\times {\mb B}_{t-1}\}$ is the value function  defined by (\ref{dyn_ck1}).
\end{theorem}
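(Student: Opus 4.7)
The plan is a backward induction on $t\in\{0,1,\ldots,n\}$: at each stage I reduce the inner supremum to a concave maximization over the probability simplex $\{\pi_t(\cdot|b_{t-1})\ge 0:\sum_{a_t}\pi_t(a_t|b_{t-1})=1\}$ and invoke the Karush-Kuhn-Tucker (KKT) characterization, which for concave differentiable objectives is both necessary and sufficient for global optimality.

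\textbf{Concavity at every stage.} Fix $t$ and, inductively, the functions $V_{t+1}(\cdot),\ldots,V_n(\cdot)$ (with the convention $V_{n+1}\equiv 0$ at the base case $t=n$). For each fixed $b_{t-1}\in{\mb B}_{t-1}$ the per-stage objective
\[
F_t(\pi_t;b_{t-1})\tri\sum_{a_t}\pi_t(a_t|b_{t-1})\sum_{b_t}{\bf P}_t(b_t|a_t,b_{t-1})\Big[\log\frac{{\bf P}_t(b_t|a_t,b_{t-1})}{{\bf P}_t^\pi(b_t|b_{t-1})}+V_{t+1}(b_t)\Big]
\]
is the sum of the conditional mutual information $I(A_t;B_t|B_{t-1}=b_{t-1})$ and a term linear in $\pi_t$; hence $F_t$ is concave in $\pi_t(\cdot|b_{t-1})$ on the simplex. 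A maximizer exists by compactness.

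\textbf{KKT characterization.} Differentiating $F_t$ with respect to $\pi_t(a_t|b_{t-1})$, accounting for the self-referential dependence ${\bf P}_t^\pi(b_t|b_{t-1})=\sum_{a_t}{\bf P}_t(b_t|a_t,b_{t-1})\pi_t(a_t|b_{t-1})$, a short calculation yields
\[
\frac{\partial F_t}{\partial\pi_t(a_t|b_{t-1})}=-1+\sum_{b_t}{\bf P}_t(b_t|a_t,b_{t-1})\Big[\log\frac{{\bf P}_t(b_t|a_t,b_{t-1})}{{\bf P}_t^\pi(b_t|b_{t-1})}+V_{t+1}(b_t)\Big],
\]
where the $-1$ comes from differentiating $-\log{\bf P}_t^\pi$. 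With Lagrange multiplier $\lambda$ for the normalization constraint and multipliers $\eta_{a_t}\ge 0$ for the non-negativity constraints, KKT stationarity and complementary slackness force the above derivative to equal $\lambda$ for every $a_t$ with $\pi_t(a_t|b_{t-1})>0$ and to be $\le\lambda$ for every $a_t$ with $\pi_t(a_t|b_{t-1})=0$. Absorbing the constant by defining $V_t(b_{t-1})\tri\lambda+1$ recovers precisely \eqref{suff_equa}--\eqref{suff_equa_a} (and, at the base case, \eqref{suff_equa_ter_con}--\eqref{suff_equa_ter_con_a}). Because $F_t$ is concave and differentiable on the relative interior of the simplex, these KKT conditions are equivalent to optimality, delivering necessity and sufficiency simultaneously.

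\textbf{Identification with the value function.} Multiplying \eqref{suff_equa} by $\pi_t(a_t|b_{t-1})$ and summing over $a_t$ (the off-support terms drop because they are multiplied by zero), I obtain
\[
V_t(b_{t-1})=\sum_{a_t,b_t}\pi_t(a_t|b_{t-1}){\bf P}_t(b_t|a_t,b_{t-1})\Big[\log\frac{{\bf P}_t(b_t|a_t,b_{t-1})}{{\bf P}_t^\pi(b_t|b_{t-1})}+V_{t+1}(b_t)\Big],
\]
which is precisely the DP recursion \eqref{DP.eq.3b} (and, for $t=n$, \eqref{DP.eq.3a}). Since by induction $V_{t+1}(b_t)$ equals the value function defined in \eqref{dyn_ck1} at time $t+1$, this identity forces $V_t(b_{t-1})$ to equal the value function at time $t$, closing the induction.

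\textbf{Main obstacle.} The only delicate point is the bookkeeping for the $-1$ produced by the implicit dependence of ${\bf P}_t^\pi$ on $\pi_t$, and its absorption into the Lagrange multiplier so that the characterization takes the clean form stated (with $V_t$, rather than $V_t-1$, appearing on the left in \eqref{suff_equa}); this is a standard manoeuvre but must be done carefully. Everything else is a routine invocation of concave-programming duality on the simplex combined with backward induction.
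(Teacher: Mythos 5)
The paper does not actually present a proof of this theorem --- it defers entirely to the reference \cite{stavrou2016sequential} --- so there is nothing in the text to compare against line by line. Your route (backward induction, concavity of the per-stage objective in $\pi_t(\cdot|b_{t-1})$, KKT on the probability simplex, then recovering the DP recursion by multiplying the stationarity condition by $\pi_t$ and summing) is the standard and correct way to establish this result, and your derivative computation is right: the implicit dependence of ${\bf P}_t^\pi(b_t|b_{t-1})$ on $\pi_t$ contributes exactly $-\sum_{b_t}{\bf P}_t(b_t|a_t,b_{t-1})=-1$, which is then absorbed into the multiplier. The identification of $V_t$ with the value function via complementary slackness also goes through.

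There is, however, one point you pass over silently that you should not have: the direction of the inequality for zero-probability inputs. Your KKT condition gives $\partial F_t/\partial \pi_t(a_t|b_{t-1}) \leq \lambda$ off the support, which after adding $1$ to both sides reads $\sum_{b_t}\{\cdot\}{\bf P}_t(b_t|a_t,b_{t-1}) \leq V_t(b_{t-1})$ --- i.e.\ the candidate ``reward'' of an unused symbol is \emph{at most} the value, exactly as in Gallager's classical condition for DMCs (which the paper itself claims to recover in Remark~\ref{rem_DMC}). The printed inequalities \eqref{suff_equa_ter_con_a} and \eqref{suff_equa_a} state the reverse, $V_t(b_{t-1}) \leq \sum_{b_t}\{\cdot\}{\bf P}_t(b_t|a_t,b_{t-1})$, so your claim that the KKT conditions ``recover precisely'' the stated ones is not literally true. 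Almost certainly the theorem statement contains a sign typo (the reversed direction cannot characterize a maximum: it would be violated precisely when shifting mass onto the unused symbol improves the objective), and your derivation gives the correct direction --- but a careful proof must either flag and correct the discrepancy or prove the statement as written, not assert a match that does not hold. A second, minor, gloss: differentiability of $F_t$ can fail on the boundary of the simplex when ${\bf P}_t^\pi(b_t|b_{t-1})=0$ for some $b_t$ reachable from an unused $a_t$; this is the usual boundary issue in channel-capacity KKT arguments and should at least be acknowledged.
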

\begin{proof} The derivation is given in \cite{stavrou2016sequential}.
\end{proof}

\ \

Before we proceed further, in the next remark, we  relate Theorem~\ref{nessufco} to the necessary and sufficient conditions of DMCs derived in \cite{gallager}.\\ 

\begin{remark}(Relation to necessary and sufficient conditions of DMCs)\\
\label{rem_DMC}
(a) Suppose the channel is a time-varying DMC, i.e., 
\bea
{\bf P}_t(b_t|b_{t-1}, a_t)={\bf P}_t(b_t|a_t),\hst  t=0, \ldots, n. \label{remar_eq}
\eea
 Since the optimal distribution of DMCs, which maximizes  the directed information $I(A^n \rar B^n)$ is memoryless, i.e., ${\bf P}_{A_t|A^{t-1}, B^{t-1}}(a_t|a^{t-1},b^{t-1})={\bf P}_{A_t}(a_t)\equiv \pi_t(a_t), t=0, \ldots, n$, then  (\ref{T_P_1}) reduces to 
${\bf P}_t^\pi(b_t)=\int_{{\mb A}_t}{\bf P}_t(b_t|a_t)\pi_t(a_t), t=0, \ldots, n$. By replacing in (\ref{suff_equa_ter_con}) and (\ref{suff_equa_ter_con_a}), the following quantities 
\bea
{\bf P}_t(b_t|b_{t-1}, a_t) \longmapsto {\bf P}_t(b_t|a_t), \; {\bf P}_t^\pi(b_t|b_{t-1}) \longmapsto {\bf P}_t^\pi(b_t)=\int_{A_t}{\bf P}_t(b_t|\alpha_t)\pi_t(\alpha_t), t=0, \ldots, n
\eea 
we obtain
\begin{align}
 V_n(b_{n-1})\equiv \overline{V}_n =& \sum_{b_{n}}\log\Big(\frac{{\bf P}_n(b_n|a_n)}{{\bf P}^{\pi}_{n}(b_n)}\Big){\bf P}_n(b_n|a_n), \hso   \forall{a_n}\in{\mb A}_n \hso \mbox{if} \hso \pi_n(a_n)\neq{0},  \label{suff_equa_ter_con_c}\\
V_n(b_{n-1}) \equiv \overline{V}_n \leq & \sum_{b_{n}}\log\Big(\frac{{\bf P}_n(b_n|a_n)}{{\bf P}^{\pi}_{n}(b_n)}\Big){\bf P}_n(b_n|a_n), \hso   \forall{a_n}\in{\mb A}_n \hso \mbox{if} \hso \pi_n(a_n)={0}  \label{suff_equa_ter_con_d}
\end{align}
where  $V_n(b_{n-1})=\overline{V}_n$ is a constant number, independent of $b_{n-1}$.  Moreover, for each $t$, from (\ref{suff_equa}) and (\ref{suff_equa_a}) we obtain
\begin{align}
 V_t(b_{t-1})\equiv \overline{V}_t = & \sum_{b_{t}}\log\Big(\frac{{\bf P}_t(b_t|a_t)}{{\bf P}_t^{\pi}(b_t)}\Big){\bf P}_t(b_t|a_t) + \overline{V}_{t+1}, \hso \mbox{if} \hso \forall {a_t}\in{\mb A}_t, \hso \pi_t(a_t)\neq{0}, \label{suff_equa_e}\\
 V_t(b_{t-1})\equiv \overline{V}_t \leq & \sum_{b_{t}}\log\Big(\frac{{\bf P}_t(b_t|a_t)}{{\bf P}_t^{\pi}(b_t)}\Big){\bf P}_t(b_t|a_t)+ \overline{V}_{t+1}, \hso \mbox{if} \hso \forall {a_t}\in{\mb A}_t, \hso \pi_t(a_t)={0} \label{suff_equa_f}
\end{align}
where $ V_t(b_{t-1})\equiv \overline{V}_t $ is a constant number independent of $b_{t-1}$, for $t=n-1,n-2,\ldots,1,0$. Consequently, by evaluating $V_t(b_{t-1})=\overline{V}_{t}$ at $t=0$, we obtain the following identities. 
\begin{align}
\overline{V}_0= \max_{\pi_t(a_t): t=0, \ldots, n} {\bf E}^\pi \Big\{ \sum_{t=0}^n 
\log\Big(\frac{{\bf P}_t(b_t|a_t)}{{\bf P}_t^{\pi}(b_t)}\Big)\Big\}
= \sum_{t=0}^n \max_{\pi_t(a_t)}{\bf E}^\pi \Big\{ \log\Big(\frac{{\bf P}_t(b_t|a_t)}{{\bf P}_t^{\pi}(b_t)}\Big)\Big\}. \label{remar_eq1}
\end{align}
As expected, (\ref{remar_eq1}) shows that under (\ref{remar_eq}), the sequence of nested optimization problems reduces to a sequence of non-nested optimization problems.\\
(b) Suppose the channel is time-invariant (homogeneous) DMC. In this case, $
{\bf P}_t(b_t|b_{t-1}, a_t)={\bf P}(b_t|b_{t-1}, a_t), t=0, \ldots, n$, and the equations in (a) reduce to the  single set of necessary and sufficient conditions obtained in \cite{gallager}, that is,  letting $\overline{V}= C \tri \max_{{\bf P}_A}I(A; B)$, then 
\begin{align}
\overline{V} =& \sum_{b}\log\Big(\frac{{\bf P}(b|a)}{{\bf P}^{\pi}(b)}\Big){\bf P}(b|a), \hso   \forall{a}\in{\mb A} \hso \mbox{if} \hso \pi(a)\neq{0},  \label{suff_equa_ter_con_cc}\\
\overline{V} \leq & \sum_{b}\log\Big(\frac{{\bf P}(b|a)}{{\bf P}^{\pi}(b)}\Big){\bf P}(b|a), \hso   \forall{a}\in{\mb A} \hso \mbox{if} \hso \pi(a)={0}.  \label{suff_equa_ter_con_dd}
\end{align}
\end{remark}

In view of Remark~\ref{rem_DMC},
next we identify necessary and  sufficient conditions for any optimal channel input conditional distribution, which is a solution of the dynamic programming recursions to be time-invariant, i.e., item (b), and to exhibit a non-nested property reminiscent to that of DMCs, i.e., item (c). \\

We derived such conditions based on the following definition. \\ 

\begin{definition}(Non-nested optimization)\\
Given a channel distribution $\{{\bf P}_t(b_t|b_{t-1},a_t): t =0, \ldots, n\}$, the optimization problem  $C_{A^n \rar B^n}^{FB,UMCO}$ defined by  (\ref{conn_val_fu}) is called \\
(a) non-nested if and only if the value function \eqref{dyn_ck1} satisfies the  following non-nested identity.
\begin{align}
V_t(b_{t-1})=\sum_{i=t}^n \sup_{\pi_i(a_i|b_{i-1})}{\bf E}^\pi\Big\{\log\Big(\frac{{\bf P}_i(B_i|B_{i-1},A_i)}{{\bf P}_i^\pi(B_i|B_{i-1})}\Big)\Big|B_{t-1}=b_{t-1}\Big\} \label{dyn_TC_NN_un}
\end{align}
for all $(t, b_{t-1})\in \{0, 1, \ldots, n\} \times {\mb B}_{t-1}$; \\
(b) non-nested and time-invariant if and only if the value function satisfies the following identity.
\begin{align}
V_t(b_{t-1})=(n-t+1) \sup_{\pi_t(a_t|b_{t-1})}{\bf E}^\pi\Big\{\log\Big(\frac{{\bf P}_t(B_t|B_{t-1},A_t)}{{\bf P}_t^\pi(B_t|B_{t-1})}\Big)\Big|B_{t-1}=b_{t-1}\Big\} \label{dyn_TC_NN_un_st}
\end{align}
for all $(t, b_{t-1})\in \{0, 1, \ldots, n\} \times {\mb B}_{t-1}$.\\
\end{definition}

Clearly, if we can identify conditions so that the optimization problem defined by (\ref{conn_val_fu}) is non-nested, then by evaluating the value function (\ref{dyn_TC_NN_un}) at time $t=0$ we obtain the analogue of (\ref{remar_eq1}), for channels with memory. This means that the optimal channel input distribution at each time instant is obtained by maximizing $I(A_i;B_i|B_{i-1}=b_{i-1})$ over $\pi(a_i|b_{i-1})$, for which $b_{i-1}$ is fixed. Moreover, if the optimization problem is non-nested and time invariant, then by evaluating \eqref{dyn_TC_NN_un_st} at $t=0$, we obtain
\begin{align}
C_{A^n \rar B^n}^{FB,UMCO}=(n+1)\sum_{b_{-1}\in {\mb B}_{-1}}I(A_0;B_0|B_{-1}=b_{-1}).  \label{dyn_TC_NN_un_st21}
\end{align}
Next, we state the main theorem, which generalizes the non-nested and time-invariant properties of memoryless channels given in Remark~\ref{rem_DMC}, to channels with memory. \\

\begin{theorem}(Necessary and sufficient conditions for non-nested optimization)\\
\label{non-nest_the}
(a) Consider any channel distribution $\{{\bf P}_i(b_i|b_{i-1},a_i), : i =0, \ldots, n\}$.\\ The optimization problem  $C_{A^n \rar B^n}^{FB,UMCO}$ defined by  (\ref{conn_val_fu}) is non-nested and the value function is characterized by  (\ref{dyn_TC_NN_un}) if and only if 
\begin{align}
&\mbox{there exists  constants $\big\{\overline{V}_t: t=0, \ldots, n\big\}$ such that}   \hso V_t(b_{t-1})=\overline{V}_t,\hso  \forall (t,  b_{t-1}) \in  \{0,1,\ldots, n\} \times  {\mb B}_{t-1} \nonumber \\
& \mbox{which satisfy (\ref{suff_equa_ter_con})-(\ref{suff_equa_a})}. \label{ST_D}
\end{align} 
(b) Consider any time-invariant channel distribution $\{{\bf P}(b_i|b_{i-1},a_i): i=0, \ldots, n\}$. \\
The optimization problem  $C_{A^n \rar B^n}^{FB,UMCO}$ defined by  (\ref{conn_val_fu}) is non-nested and time-invariant and the value function is characterized by  
\begin{align}
V_t(b_{t-1})= \overline{V}_t \tri& (n-t+1) \sup_{\pi^{TI}(a_i|b_{i-1})}{\bf E}^\pi\Big\{\log\Big(\frac{{\bf P}(B_i|B_{i-1},A_i)}{{\bf P}^{\pi^{TI}}(B_i|B_{i-1})}\Big)\Big|B_{i-1}=b_{i-1}\Big\}, \nonumber \\ & \forall (t, b_{t-1})\in \{0, 1, \ldots, n\} \times {\mb B}_{t-1} \label{dyn_TC_NN_un_st_TI}
\end{align}
where $\{\pi_i(a_i|b_{i-1})= \pi^{TI}(a_i|b_{i-1}): i=0, \ldots, n\}$ and $\{{\bf  P}_i^{\pi}(b_i|b_{i-1})= {\bf P}^{\pi^{TI}}(b_i|b_{i-1}): i=0, \ldots, n\}$ are time-invariant,   if and only if 
 \begin{align}
\mbox{there exists a constant $\overline{V}_n$ such that}   \hso V_n(b_{n-1})=\overline{V}_n,\hso  \forall b_{n-1} \in  {\mb B}_{n-1}  \hso \mbox{which satisfies (\ref{suff_equa_ter_con}), (\ref{suff_equa_ter_con_a})}. \label{ST_D_TI}
\end{align} 
 \end{theorem}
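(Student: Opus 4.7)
The proof plan is to exploit the backward dynamic programming recursion (\ref{DP.eq.3a})--(\ref{DP.eq.3b}) together with the pointwise optimality conditions of Theorem~\ref{nessufco}. The pivotal observation is that the coupling term $\sum_{b_t} V_{t+1}(b_t) {\bf P}_t(b_t|b_{t-1},a_t)$ in (\ref{DP.eq.3b}) is the only mechanism by which future stages influence the current optimization; whenever $V_{t+1}(\cdot)$ is constant in its argument, this sum reduces to that constant and the nesting is severed.

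For the sufficiency in (a), I would assume $V_t(b_{t-1})=\overline{V}_t$ for every $t$ and every $b_{t-1}$, with the optimality conditions (\ref{suff_equa_ter_con})--(\ref{suff_equa_a}) in force. Substituting into (\ref{DP.eq.3b}) gives
\[
\overline{V}_t = \overline{V}_{t+1} + \sup_{\pi_t(a_t|b_{t-1})} \sum_{a_t,b_t}\log\Big(\frac{{\bf P}_t(b_t|b_{t-1},a_t)}{{\bf P}_t^\pi(b_t|b_{t-1})}\Big){\bf P}_t(b_t|b_{t-1},a_t)\pi_t(a_t|b_{t-1}),
\]
and since the left-hand side is independent of $b_{t-1}$, so is the per-stage supremum. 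Telescoping from $t$ to $n$ with $\overline{V}_{n+1}\tri 0$ recovers (\ref{dyn_TC_NN_un}). For necessity, given the non-nested identity (\ref{dyn_TC_NN_un}), I would compare its one-step increment $V_t(b_{t-1}) - V_{t+1}(b_t)$ against the DP recursion (\ref{DP.eq.3b}); matching, together with uniqueness of the DP value function, forces each per-stage supremum to be $b_{t-1}$-free, so each $V_t$ must be constant. The optimality conditions (\ref{suff_equa_ter_con})--(\ref{suff_equa_a}) are then immediate from Theorem~\ref{nessufco}.

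For part (b), only $V_n(b_{n-1})=\overline{V}_n$ is postulated, but the channel kernel ${\bf P}(b|b_-,a)$ is time-invariant. Given a constant terminal value function and a time-invariant kernel, the one-stage subproblem at step $n-1$ -- namely, maximizing $I(A_{n-1};B_{n-1}|B_{n-2}=b_{n-2})$ over $\pi_{n-1}(\cdot|b_{n-2})$ -- is functionally identical to the terminal subproblem at step $n$, since both the channel and the induced output kernel (\ref{CID_A.1}) share the same time-invariant form. Hence the optimizer coincides with some $\pi_n=\pi^{TI}$, the per-stage optimum equals $\overline{V}_n$ independently of $b_{n-2}$, and $V_{n-1}(b_{n-2})=2\overline{V}_n$. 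A backward induction propagates this to every $t$, producing (\ref{dyn_TC_NN_un_st_TI}); part (a) then supplies the non-nested structure.

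The main obstacle will be the necessity direction in (a), because (\ref{dyn_TC_NN_un}) superficially permits $V_t$ to depend on $b_{t-1}$ through the conditional expectation $E^\pi[\cdot|B_{t-1}=b_{t-1}]$. The key lever is that the DP value function is unique, so any valid representation must agree pointwise in $b_{t-1}$ with the DP solution; combined with the Markov structure (\ref{FOM_2}) of the output process, the only consistent way for the non-nested identity to hold for every $b_{t-1}$ is for each per-stage supremum to collapse to a $b_{t-1}$-independent constant, which then forces full constancy of $V_t$. The time-invariant statement in (b) is comparatively easier once (a) is in hand, the only subtlety being the verification that the same $\pi^{TI}$ attains the per-stage optimum at every step, which is automatic because each per-stage problem is the same convex optimization.
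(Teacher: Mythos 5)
Your proposal follows essentially the same route as the paper: both directions of (a) rest on the observation that constancy of $V_{t+1}(\cdot)$ collapses the coupling term $\sum_{b_t}V_{t+1}(b_t){\bf P}_t(b_t|b_{t-1},a_t)$ to a constant, so the stage-$t$ optimization decouples and the pointwise conditions of Theorem~\ref{nessufco} telescope into the non-nested identity, while (b) is the same backward induction from a constant terminal value function using time-invariance of the kernel to identify each stage's subproblem with the terminal one. Your treatment is, if anything, slightly more explicit than the paper's (the telescoping with $\overline{V}_{n+1}\tri 0$ and the appeal to uniqueness of the DP value function in the necessity direction), but no new idea is introduced and no step diverges from the paper's argument.
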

\begin{proof} (a) Suppose (\ref{ST_D}) holds. Then by Theorem~\ref{nessufco}, for any $t$, the optimal strategy $\pi_t(a_t|b_{t-1})$ is not affected by the future strategies $\{\pi_i(a_i|b_{i-1}): i=t+1, t+2, \ldots, n\}$ for all $t=0, 1,\ldots, n-1$. Hence, the optimization problem  $C_{A^n \rar B^n}^{FB,UMCO}$ is non-nested. Conversely, if  (\ref{dyn_TC_NN_un}) holds, since its left hand side is the value function defined by  (\ref{dyn_ck1}), then necessarily for each $t$, the value function is a constant, i.e.,  $\{V_i(b_{i-1})=\overline{V}_i: i=t+1, \ldots, n\}$, for $t=0, 1,\ldots, n-1$. In view of Theorem~\ref{nessufco}, then   (\ref{ST_D}) holds. \\
(b) This is degenerate case of part (a). Suppose (\ref{ST_D_TI}) holds and consider the necessary and sufficient conditions given in Theorem~\ref{nessufco} at time $t=n-1$. Since $V_n(b_{n-1})=\overline{V}_n, \forall b_{n-1}$, then by (\ref{suff_equa}) (and similarly for (\ref{suff_equa_a})) we have $V_{n-1}(b_{n-2})= \{\cdot \} + \overline{V}_n, \forall a_{n-1}, \pi_{n-1}(a_{n-1}|b_{n-2})$, where the term $\{\cdot\}$ is the first right hand side term in  (\ref{suff_equa}). Since the channel is time-invariant, subtracting the term $\overline{V}_n$ from both sides of the equation  (\ref{suff_equa})  (i.e., corresponding to $t=n-1$), then the resulting equations are precisely  (\ref{suff_equa_ter_con}) (and similarly for (\ref{suff_equa_ter_con_a})). Thus,  (\ref{dyn_TC_NN_un_st}) holds for $t=n-1$. To complete the derivation, we use induction, that is, we assume validity of (\ref{dyn_TC_NN_un_st}) for $t\in \{n,n-1, \ldots, i+1\}$ and we show it also holds for $t=i$. This is similar to the case $t=n-1$ hence it is omitted. Conversely, if (\ref{dyn_TC_NN_un_st}) holds, then using the time-invariant property of the channel, then necessarily (\ref{ST_D_TI}) holds (as in part (a)).
\end{proof}

Clearly, the above theorem is a generalization of Remark~\ref{rem_DMC} to channels with memory.  In Section~\ref{cabistsych} we present one example. However, additional ones can be identified by invoking the necessary and sufficient conditions of  Theorem~\ref{nessufco} and Theorem~\ref{non-nest_the}.

\subsubsection{With Transmission Cost Constraints.}
All statements of the previous section generalize to $C_{A^n \rar B^n}^{FB,UMCO}(\kappa)$ defined by  (\ref{CID_A.2.2}), where the transmission cost constraint is given by (\ref{CID_A.2}). In view of the convexity of the optimization problem, and  existence of an interior point of the constraint set $\sr{\circ}{ {\cal P}}_{[0,n]}^{FB}(\kappa)$ (i.e., Slater's condition), by Lagrange duality theorem \cite{luenberger1968optimization}, then the constraint and unconstraint problems are equivalent, that is, 
\begin{align}
C_{A^n \rar B^n}^{FB,UMCO}(\kappa) =& \inf_{s \geq 0}  \sup_{ \pi_i(a_i|b_{i-1}): i=0,1,  \ldots, n } {\bf E}^\pi_{\mu} \Big\{ \sum_{i=0}^n \Big[ \log \Big(   \frac{{\bf P}_i^\pi(b_i|b_{i-1}, a_i)}{{\bf P}_i(b_i|b_{i-1})}\Big) -s \Big(\gamma^{UM}(A_i, B_{i-1})-(n+1)\kappa\Big)\Big]\Big\}
  \label{CID_TC_1}
\end{align}
where $s\in [0, \infty)$ is the Lagrange multiplier associated with the constraint. \\
The dynamic programming recursions are obtained as follows. Let $V_t^s(b_{t-1})$ represent value function  on the future time horizon $\{t,t+1,\dots,n\}$ given output $B_{i-1}=b_{i-1}$ at time $t-1$, defined by
\begin{align}
V_t^s(b_{t-1})=\sup_{\pi_i(a_i|b_{i-1}):i=t,t+1,\dots,n}{\bf E}^\pi\Big\{\sum_{i=t}^{n}\Big[\log\Big(\frac{{\bf P}_i(B_i|B_{i-1},A_i)}{{\bf P}_i^\pi(B_i|B_{i-1})}\Big)-s \gamma^{UM}(A_i, B_{i-1})\Big] |B_{t-1}=b_{t-1}\Big\}. \label{dyn_TC}
\end{align}
The corresponding dynamic programming recursions are the following.  
\begin{align}
V_n^s(b_{n-1})=&\sup_{\pi_n(a_n|b_{n-1})}\Big\{\sum_{a_n \in {\mb A}_n}\Big[ \sum_{b_n \in {\mb B}_n}\log\Big(\frac{{\bf P}_n(b_n|b_{n-1},a_n)}{{\bf P}_n^\pi(b_n|b_{n-1})}\Big){\bf P}_n(b_n|b_{n-1},a_n)-s\gamma_n^{UM}(a_n,b_{n-1})\Big]\pi_n(a_n|b_{n-1})\Big\}\label{DP.eq.3a_TC}\\
V_t^s(b_{t-1})=&\sup_{\pi_t(a_t|b_{t-1})}\Big\{\sum_{a_t \in {\mb A}_t}\Big[\sum_{b_t \in {\mb B}_t}\Big(\log\Big(\frac{{\bf P}_t(b_t|b_{t-1},a_t)}{{\bf P}_t^\pi(b_t|b_{t-1})}\Big){\bf P}_t(b_t|b_{t-1},a_t) +V_{t+1}^s(b_{t})\Big){\bf P}_t(b_t|b_{t-1},a_t)\nonumber \\
&-s\gamma_t^{UM}(a_t,b_{t-1})\Big]   \pi_t(\alpha_t|b_{t-1})\Big\}. \label{DP.eq.3b_TC}
\end{align}
Moreover, for a fixed initial distribution ${\bf P}_{B_{-1}}(b_{-1})=\mu(b_{-1})$, then
\bea 
C_{A^n \rar B^n}^{FB,UMCO}(\kappa)=\inf_{s\geq 0}\Big\{ \sum_{b_{-1}} V_0^s(b_{-1})\mu(b_{-1})+ s(n+1)\kappa)\Big\}.
\eea
The analogues of Theorem~\ref{nessufco} and Theorem~\ref{non-nest_the} are  stated as a corollary. \\

\begin{corollary}(Necessary and sufficient conditions)\label{nessufco_TC}{\ \\}
(a) The necessary and sufficient conditions for any input distribution $\{\pi_t(a_t|b_{t-1}):t=0,1,\ldots,n\}$ to achieve the supremum of the dynamic programming recursions (\ref{DP.eq.3a_TC}) and (\ref{DP.eq.3b_TC})  are the following. \\
 For each $b_{n-1}\in {\mb B}_{n-1}$, there exist $V_n^s(b_{n-1})$ such that 
\begin{align}
 V_n^s(b_{n-1}) =& \sum_{b_{n}\in {\mb B}_n}\log\Big(\frac{{\bf P}_n(b_n|a_n,b_{n-1})}{{\bf P}^{\pi}_{n}(b_n|b_{n-1})}\Big){\bf P}_n(b_n|a_n,b_{n-1}) -s\gamma_n^{UM}(a_n,b_{n-1}), \hso   \forall{a_n}\in{\mb A}_n \hso \mbox{if} \hso \pi_n(a_n|b_{n-1})\neq{0},  \label{suff_equa_ter_con_TC}\\
V_n^s(b_{n-1}) \leq & \sum_{b_{n}\in {\mb B}_n}\log\Big(\frac{{\bf P}_n(b_n|a_n,b_{n-1})}{{\bf P}^{\pi}_{n}(b_n|b_{n-1})}\Big){\bf P}_n(b_n|a_n,b_{n-1}) -s\gamma_n^{UM}(a_n,b_{n-1}), \hso   \forall{a_n}\in{\mb A}_n \hso \mbox{if} \hso \pi_n(a_n|b_{n-1})={0}  \label{suff_equa_ter_con_a_TC}
\end{align}
and for each $t=0,1,\ldots,n-1$ there exist $V_t^s(b_{t-1})$ such that    
\begin{align}
 V_t^s(b_{t-1}) = & \sum_{b_{t} \in {\mb B}_t}\Big\{\log\Big(\frac{{\bf P}_t(b_t|a_t,b_{t-1})}{{\bf P}_t^{\pi}(b_t|b_{t-1})}\Big)+V_{t+1}(b_{t})\Big\}{\bf P}_t(b_t|a_t,b_{t-1})\nonumber \\
 & -s\gamma_t^{UM}(a_t,b_{t-1}), \hso \forall {a_t}\in{\mb A}_t, \hso \mbox{if}  \hso \pi_t(a_t|b_{t-1})\neq{0}, \label{suff_equa_TC}\\
 V_t^s(b_{t-1}) \leq & \sum_{b_{t}\in {\mb B}_t}\Big\{\log\Big(\frac{{\bf P}_t(b_t|a_t,b_{t-1})}{{\bf P}_t^{\pi}(b_t|b_{t-1})}\Big)+V_{t+1}(b_{t})\Big\}{\bf P}_t(b_t|a_t,b_{t-1})\nonumber \\
 &-s\gamma_t^{UM}(a_t,b_{t-1}), \hso \forall {a_t}\in{\mb A}_t, \hso \mbox{if}  \hso \pi_t(a_t|b_{t-1})={0}. \label{suff_equa_a_TC}
\end{align}
 Moreover, $\{V_t^s(b_{t-1}): (t, b_{t-1})\in \{0, \ldots, n\}\times {\mb B}_{t-1}\}$ is the value function  defined by (\ref{dyn_TC}).\\
(b) The optimization problem  $C_{A^n \rar B^n}^{FB,UMCO}(\kappa)$ is non-nested and the value function is characterized by  
\begin{align}
V_t^s(b_{t-1})=\sum_{i=t}^n \sup_{\pi_i(a_i|b_{i-1})}{\bf E}^\pi\Big\{\log\Big(\frac{{\bf P}_i(B_i|B_{i-1},A_i)}{{\bf P}_i^\pi(B_i|B_{i-1})}\Big)-s \gamma^{UM}(A_i, B_{i-1})|B_{i-1}=b_{i-1}\Big\} \label{dyn_TC_NN}
\end{align}
for all $(t, b_{t-1})\in \{0, 1, \ldots, n\} \times {\mb B}_{t-1}$ if and only if 
\begin{align}
&\mbox{there exists  constants $\big\{\overline{V}_t^s: t=0, \ldots, n\big\}$ such that}   \hso V_t^s(b_{t-1})=\overline{V}_t^s,\hso  \forall (t,  b_{t-1}) \in  \{0,1,\ldots, n\} \times  {\mb B}_{t-1}  \nonumber \\
& \mbox{which satisfy (\ref{suff_equa_ter_con_TC})-(\ref{suff_equa_a_TC})}. \label{ST_D1}
\end{align} 
(b) If the channel distribution is time-invariant   $\{{\bf P}(b_i|b_{i-1},a_i): i=0, \ldots, n\}$ and $\gamma_i^{UM}(\cdot,\cdot)=\gamma^{UM}(\cdot,\cdot):i=0,1,\ldots,n$, then 
the optimization problem  $C_{A^n \rar B^n}^{FB,UMCO}$ is non-nested and time-invariant, and the value function is  characterized by  
\begin{align}
V_t^s(b_{t-1})\equiv \overline{V}_t^s= (n-t+1)\sup_{\pi^{TI}(a_i|b_{i-1})}{\bf E}^{\pi^{TI}}\Big\{\log\Big(\frac{{\bf P}(B_i|B_{i-1},A_i)}{{\bf P}^\pi(B_i|B_{i-1})}\Big)-s \gamma^{UM}(A_i, B_{i-1})\Big|B_{i-1}=b_{i-1}\Big\} \label{dyn_TC_NN_SS}
\end{align}
where $\{\pi_i(a_i|b_{i-1})= \pi^{TI}(a_i|b_{i-1}): i=0, \ldots, n\}$ and $\{{\bf  P}_i^{\pi}(b_i|b_{i-1})= {\bf P}^{\pi^{TI}}(b_i|b_{i-1}): i=0, \ldots, n\}$ are time-invariant,   if and only if 
 \begin{align}
\mbox{there exists a constant $\overline{V}_n^s$ such that}   \hso V_n(b_{n-1})=\overline{V}_n,\hso  \forall b_{n-1} \in  {\mb B}_{n-1}  \hso \mbox{which satisfies (\ref{suff_equa_ter_con_TC}), (\ref{suff_equa_ter_con_a_TC})}. \label{ST_D_TI1}
\end{align} 
\end{corollary}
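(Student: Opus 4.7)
The plan is to reduce Corollary~\ref{nessufco_TC} to Theorem~\ref{nessufco} and Theorem~\ref{non-nest_the} via the Lagrangian reformulation \eqref{CID_TC_1}. Under the stated convexity and Slater's condition, strong duality holds, so for each fixed dual parameter $s\ge 0$ it suffices to analyze the unconstrained dynamic program whose per-stage reward is $\log({\bf P}_i/{\bf P}_i^\pi)-s\gamma_i^{UM}$. The key observation is that the additional term $-s\sum_{a_t}\gamma_t^{UM}(a_t,b_{t-1})\pi_t(a_t|b_{t-1})$ is affine (hence concave) in $\pi_t(\cdot|b_{t-1})$, so the per-stage Lagrangian inherits the concavity properties exploited in the proofs of Theorems~\ref{nessufco} and~\ref{non-nest_the}: with future strategies fixed at their optima and $b_{t-1}$ fixed, the map $\pi_t(\cdot|b_{t-1})\mapsto$ (stage-$t$ Lagrangian reward) is concave on the probability simplex.

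For part~(a), I would fix $s$ and replicate the KKT argument of Theorem~\ref{nessufco}. Maximizing a concave functional on the simplex subject to $\sum_{a_t}\pi_t(a_t|b_{t-1})=1$ and the nonnegativity constraints yields stationarity conditions in which the multiplier of the normalization constraint is precisely $V_t^s(b_{t-1})$ and the multipliers of the nonnegativity constraints vanish on the support of $\pi_t(\cdot|b_{t-1})$. These stationarity conditions are exactly \eqref{suff_equa_ter_con_TC}--\eqref{suff_equa_ter_con_a_TC} at $t=n$, and \eqref{suff_equa_TC}--\eqref{suff_equa_a_TC} at intermediate $t$ after backward induction. Concavity at each stage is preserved because, once $\{\pi_i^{*}(\cdot|b_{i-1}): i>t\}$ are fixed, the scalars $V_{t+1}^s(b_t)$ are deterministic functions of $b_t$ and appear linearly in \eqref{DP.eq.3b_TC}.

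For the first non-nested statement in part~(b), I would adapt the proof of Theorem~\ref{non-nest_the}(a). If $V_t^s(b_{t-1})\equiv \overline{V}_t^s$ is constant in $b_{t-1}$ for every $t$, then $\sum_{b_t}V_{t+1}^s(b_t){\bf P}_t(b_t|a_t,b_{t-1})=\overline{V}_{t+1}^s$ is independent of $(a_t,b_{t-1})$ and drops out of the $\argmax$ in \eqref{DP.eq.3b_TC}; consequently the optimal $\pi_t^{*}(\cdot|b_{t-1})$ maximizes only the current-stage Lagrangian and the DP recursion decouples across time, yielding \eqref{dyn_TC_NN}. The converse follows because if \eqref{dyn_TC_NN} holds for all $b_{t-1}$, then each $V_i^s(\cdot)$ equals the supremum of a purely stage-$i$ functional, which by \eqref{suff_equa_TC}--\eqref{suff_equa_a_TC} forces $V_i^s(b_{i-1})$ to be a constant as claimed. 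For the time-invariant statement (the second part~(b)), I would induct backward from $t=n-1$: subtracting $\overline{V}_n^s$ from both sides of \eqref{suff_equa_TC}--\eqref{suff_equa_a_TC} reduces them to \eqref{suff_equa_ter_con_TC}--\eqref{suff_equa_ter_con_a_TC} with the time-invariant channel and cost, so the optimal strategy at $t=n-1$ coincides with that at $t=n$ and $V_{n-1}^s(b_{n-2})-\overline{V}_n^s$ is independent of $b_{n-2}$; iterating the same subtraction at $t=n-2,\ldots,0$ propagates the time-invariant form through every stage.

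The step I expect to be most delicate is the converse direction of the non-nested statement together with the proper handling of the outer infimum over $s\ge 0$. One must verify that the pointwise-optimal strategies constructed stage-by-stage are measurable in $b_{i-1}$ and assemble into an admissible feedback policy that attains the primal value once $s$ satisfies complementary slackness. On finite alphabets the measurability is immediate, but one still needs to confirm that the optimal $s^{*}\ge 0$ can be chosen so that the dual value equals $C_{A^n\to B^n}^{FB,UMCO}(\kappa)$ for every $\kappa<\kappa_{\max}$; this is where Slater's condition is used, and any failure of strong duality at the boundary $\kappa=\kappa_{\max}$ must be handled by a standard limiting argument.
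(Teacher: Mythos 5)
Your proposal is correct and follows essentially the same route as the paper, whose entire proof of this corollary is the remark that ``the derivation is precisely as in Theorem~\ref{nessufco}'': you reduce the constrained problem for fixed $s$ to the unconstrained dynamic program by observing that the term $-s\sum_{a_t}\gamma_t^{UM}(a_t,b_{t-1})\pi_t(a_t|b_{t-1})$ is affine in $\pi_t(\cdot|b_{t-1})$ and hence preserves the concavity used in Theorems~\ref{nessufco} and~\ref{non-nest_the}. Your additional remarks on the outer infimum over $s$ and complementary slackness concern the equivalence \eqref{CID_TC_1}, which the paper handles separately via Slater's condition before stating the corollary, so they are not needed for the corollary itself.
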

\begin{proof} The derivation is precisely as in Theorem~\ref{nessufco}.
\end{proof}

\subsection{Necessary and Sufficient Conditions via Dynamic Programming: The Infinite Horizon case}\label{subsec.DPandALG_IH}
In this section, we first  identify sufficient conditions the convergence of the per unit time limit of the characterization of FTFI capacity, using the ergodic theory of Markov decision with randomized strategies, and infinite  horizon dynamic programming. Then, we apply these  to derive  necessary and sufficient conditions  for any channel input distribution to maximize the  infinite horizon extremum problems 
$C_{A^\infty \rar B^\infty}^{FB,UMCO}$ and $C_{A^\infty \rar B^\infty}^{FB,UMCO}(\kappa)$.  
\par For the material of this section we make the following assumption.\\

\begin{assumptions}(Time-Invariant or homogeneous)\\
\label{TI_AS}
The  channel distribution and transmission cost function are time-invariant, and the optimal strategies are restricted to time-invariant  strategies,    i.e., 
\begin{align}
&{\bf P}_i(b_i|b_{i-1},a_i)={\bf P}(b_i|b_{i-1},a_i),\hso {\gamma}_{i}^{UM}(a_i,b_{i-1})\equiv{\gamma}^{UM}(a_i,b_{i-1}), \hso i=0, \ldots, n, \\
& \pi_i(a_i|b_{i-1})=\pi^{\infty}(a_i|b_{i-1}), \hso i=0, \ldots, n
\end{align}
and ${\mb A}_i={\mb A}, {\mb B}_i={\mb B}, i=0, \ldots, n$. 
Moreover, the initial distribution ${\bf P}_{B_{-1}}=\mu(b_{-1})$ is assumed fixed.
\end{assumptions}
By invoking Assumptions~\ref{TI_AS}, we can introduce the corresponding extremum problem as follows. For fixed initial distribution $\mu(db_{-1}) \in {\cal M}({\mb B})$, we define 
\begin{equation}
J(\pi^{\infty},\mu)\tri \liminf_{n\longrightarrow \infty}  \frac{1}{n}{\bf E}_{\mu}^{\pi^{\infty}}\Big\{\sum_{i=0}^{n-1}\log\Big(\frac{{\bf P}(B_i|B_{i-1},A_i)}{{\bf P}^{\pi^{\infty}}(B_i|B_{i-1})}\Big)\Big\}\equiv \liminf_{n\longrightarrow \infty}  \frac{1}{n}\sum_{i=0}^{n-1}I(A_i;B_i|B_{i-1})   . \label{TI_BM_a}
\end{equation}
By taking the  supremum over all channel input distributions \cite{hernandezlerma-lasserre1996} and by using the fact that the alphabet spaces are of finite cardinality,  we have the following identity. 
\begin{align}
J(\pi^{\infty,*},\mu)\tri & \sup_{ \pi^\infty(a_i|b_{i-1}): i=0,1, \ldots, } J(\pi^{\infty},\mu) \\
=& \liminf_{n\longrightarrow \infty} \sup_{\pi^\infty(a_i|b_{i-1}): i=0, \ldots, n} \frac{1}{n}{\bf E}_{\mu}^{\pi^{\infty}}\Big\{\sum_{i=0}^{n-1}\log\Big(\frac{{\bf P}(B_i|B_{i-1},A_i)}{{\bf P}^{\pi^{\infty}}(B_i|B_{i-1})}\Big)\Big\}\equiv  C_{A^\infty \rar B^\infty}^{FB,UMCO}. \label{TI_BM}
\end{align}
For abstract alphabet spaces the exchange of $\liminf$ and $\sup$ requires strong conditions \cite{hernandezlerma-lasserre1996}. 
Clearly, the above quantity $J(\pi^{\infty,*},\mu)$ depends on the initial distribution $\mu(db_{-1})$.\\
Similarly, for a fixed initial state $B_{-1}=b_{-1}$ we also have the identity
\begin{align}
\overline{J}(\pi^{\infty,*},b_{-1}) \tri & \sup_{ \pi^\infty(a_i|b_{i-1}): i=0,1, \ldots, } \liminf_{n\longrightarrow \infty} \frac{1}{n}{\bf E}_{b_{-1}}^{\pi^{\infty}}\Big\{\sum_{i=0}^{n-1}\log\Big(\frac{{\bf P}(B_i|B_{i-1},A_i)}{{\bf P}^{\pi^{\infty}}(B_i|B_{i-1})}\Big)\Big\} \label{TI_BM_IS_a} \\
=& \liminf_{n\longrightarrow \infty} \sup_{\pi^\infty(a_i|b_{i-1}): i=0, \ldots, n} \frac{1}{n}{\bf E}_{b_{-1}}^{\pi^{\infty}}\Big\{\sum_{i=0}^{n-1}\log\Big(\frac{{\bf P}(B_i|B_{i-1},A_i)}{{\bf P}^{\pi^{\infty}}(B_i|B_{i-1})}\Big)\Big\} \label{TI_BM_IS}
\end{align}
which depends on the initial state $b_{-1}$. Note that Assumptions~\ref{TI_AS} do not imply that the joint distribution of the process $\{A_0, B_0, A_1, B_1, \ldots, A_n, B_n\}$ is stationary or that the  marginal distribution of the output process $\{B_i: i=0, \ldots, n\}$ is stationary, because stationarity depends on the distribution of the initial state $B_{-1}$. However, it implies that the transition probabilities are time-invariant (i.e., homogeneous), hence,  ${\bf P}_i^{\pi^\infty}(a_i, b_i|a_{i-1}, b_{i-1})\equiv{\bf P}^{\pi^\infty}(a_i, b_i|a_{i-1}, b_{i-1}),  {\bf P}_i^{\pi^\infty}(b_i|b_{i-1})\equiv{\bf P}^{\pi^\infty}(b_i| b_{i-1}), i=0, \ldots, n$.  \\
Next, we develop the material without imposing  transmission cost constraints, because extensions to  problems with transmission cost are easily obtained by using the material of the previous section.

\subsubsection{Sufficient Condition for Asymptotic Stationarity and Ergodicity from Finite-Time Dynamic Programming Recursions}
Consider the problem of maximizing the per unit time limiting version of $C_{A^n \rar B^n}^{FB,UMCO}$, when  the strategies  are restricted to  $\{\pi^{\infty}(a_i|b_{i-1}): \hso i=0, \ldots, n\}$.  From the previous section, the  finite horizon  value function satisfies the dynamic programming equation
\begin{align}
V_t(b_{t-1})=\sup_{\pi^{\infty}(\cdot|b_{t-1})}\Big\{\sum_{a_t}\Big\{\sum_{b_t}\log\Big(\frac{{\bf P}(b_t|b_{t-1},a_t)}{{\bf P}^{\pi^{\infty}}(b_t|b_{t-1})}\Big){\bf P}(b_t|b_{t-1},a_t)
+\sum_{b_t}V_{t+1}(b_t){\bf P}(b_t|b_{t-1},a_t)\Big\}\pi^{\infty}(a_t|b_{t-1})\Big\}. \label{DP_Rev}
\end{align}
Since $b_{t-1}$ is always fixed, we let $V_t(b_{t-1})=V_t(b_{-1}), t=0, \ldots, n-1$. Since the transition probabilities are time-invariant, we can define, for simplicity, the variables $\widetilde{V}_t(b_{-1})=V_{n-t}(b_{-1}), t=0, \ldots, n-1$. Then $\{\widetilde{V}_t(\cdot): t=1, \ldots,n\}$ satisfy the following equation.
\begin{align}\label{inf.DP1}
\widetilde{V}_t(b_{-1})=&\sup_{\pi^{\infty}(\cdot|b_{-1})}\Big\{\sum_{a_0 \in {\mb A}}\Big\{\sum_{b_0 \in {\mb B}}\log\Big(\frac{{\bf P}(b_0|b_{-1},a_0)}{{\bf P}^{\pi^{\infty}}(b_0|b_{-1})}\Big){\bf P}(b_0|b_{-1},a_0)\nonumber\\
&+\sum_{b_0 \in {\mb B}}\widetilde{V}_{t-1}(b_0){\bf P}(b_0|b_{-1},a_0)\Big\}\pi^{\infty}(a_0|b_{-1})\Big\}, \hso t \in \{1, \ldots, n\}.
\end{align}
Next, we introduce a sufficient condition to test whether the per unit time limit of the solution to the   dynamic programming recursions exists and it is independent of the initial state $B_{-1}=b_{-1} \in {\mb B}$.  \\

\begin{assumptions}(Sufficient condition for convergence of dynamic programming recursions) 
\label{Ass_ST}\\
Assume that there exists a $V:{\mb B} \longmapsto {\mathbb R}$, and a $J^*\in \mathbb{R}$ such that for all $ b_{-1}\in \mathbb{B}$
\begin{equation}\label{inf.DP.assum.1a}
\lim_{t\longrightarrow \infty}\Big(\widetilde{V}_t(b_{-1})-tJ^*\Big)=V(b_{-1}).
\end{equation}
\end{assumptions}

Clearly, if Assumptions~\ref{Ass_ST} hold, then  $\lim_{t\longrightarrow \infty} \frac{1}{t} \widetilde{V}_t(b_{-1})= J^*, \forall b_{-1}\in \mathbb{B}$ (because for finite alphabet spaces, the dynamic programming operator maps bounded continous functions to bounded continuous functions) \cite{varayia86}. This means the per unit time limit of the dynamic programming recursion is independent of the initial state $b_{-1}$, which then implies $C_{A^n \rar B^n}^{FB,UMCO}$ is independent of  the choice of the initial distribution $\mu(b_{-1})$. \\

\begin{remark}(Test of asymptotic stationarity and ergodicity)\\
\label{rem_test_1}
Given any channel we can verify that the optimal channel input distribution induces asymptotic stationarity and ergodicity of the corresponding joint process $\big\{(A_i, B_i): i=0, 1, \ldots, \big\}$ by  solving  the dynamic programming recursions analytically for finite ``n'' via (\ref{DP_Rev}), and then identifying conditions on the channel parameters so that Assumptions~\ref{Ass_ST} hold. 
\end{remark}

In view of Assumptions~\ref{Ass_ST}, we have the following lemma.\\

\begin{lemma}\label{lemma_infhor}
If Assumptions~\ref{Ass_ST} hold and there exists a $\big\{\pi^{\infty,*}(a_0|b_{-1}) \in {\cal M}({\mb A}): b_{-1}  \in {\mb B}\big\}$ and a corresponding pair $\Big\{\Big(V(b_{-1}), J^*\Big): b_{-1} \in {\mb B}, \ J^*\in \mathbb{R}\Big\}$, which solves 
\begin{align}
J^*+V(b_{-1})= \sup_{\pi^\infty(a_0|b_{-1})}\Big\{\sum_{a_0}\Big\{\sum_{b_0}\log\Big(\frac{{\bf P}(b_0|b_{-1},a_0)}{{\bf P}^{\pi^\infty}(b_0|b_{-1})}\Big){\bf P}(b_0|b_{-1},a_0) 
+\sum_{b_0}V(b_0){\bf P}(b_0|b_{-1},a_0)\Big\}\pi^\infty(a_0|b_{-1}). \label{inf.DP3}
\end{align}
 then feedback capacity is given by
\begin{align}
J^* =  C_{A^\infty\rar B^\infty}^{FB, UMCO}\tri \liminf_{n\longrightarrow \infty}\frac{1}{n}\sup_{\pi^\infty(\cdot|b_{i-1})} {\bf E}^{\pi^{\infty}}\Big\{\sum_{i=0}^{n-1}\log\Big(\frac{{\bf P}(B_i|B_{i-1},A_i)}{{\bf P}^{\pi^{\infty}}(B_i|B_{i-1})}\Big)\Big\}, \hso \forall \mu(db_{-1}) \in {\cal M}({\mb B}) \label{min_a_c}
\end{align}
and moreover the value $C_{A^\infty\rar B^\infty}^{FB, UMCO}$ does not depend on the choice of the initial distribution  $\mu(db_{-1}) \in {\cal M}({\mb B})$.
\end{lemma}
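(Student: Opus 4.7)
My plan is to view the problem through the lens of a stationary Markov decision problem with unbounded state-reward and to exploit the fact that the Bellman equation \eqref{inf.DP3} is a fixed-point equation for an operator that commutes with constants. Concretely, I introduce the one-step dynamic programming operator $T$ acting on bounded $W:\mathbb{B}\to \mathbb{R}$ by
\begin{align*}
(TW)(b_{-1}) \tri \sup_{\pi(\cdot|b_{-1})} \sum_{a_0,b_0} \Big[\log\Big(\frac{{\bf P}(b_0|b_{-1},a_0)}{{\bf P}^{\pi}(b_0|b_{-1})}\Big) + W(b_0)\Big]{\bf P}(b_0|b_{-1},a_0)\pi(a_0|b_{-1}),
\end{align*}
so that the time-invariant finite-horizon recursion \eqref{inf.DP1} reads $\widetilde V_t = T\widetilde V_{t-1}$, while the Bellman equation \eqref{inf.DP3} reads $V + J^* = TV$. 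The key structural observation is that $T$ is constant-shift equivariant: $T(W+c) = TW + c$ for every constant $c\in\mathbb{R}$, because the transition probabilities integrate constants trivially. Iterating this on the Bellman equation gives $T^n V = V + nJ^*$ for all $n\geq 1$.

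Next I would combine the Bellman equation with Assumption~\ref{Ass_ST}. The hypothesis $\widetilde V_t(b_{-1})-tJ^*\to V(b_{-1})$ holds pointwise on the finite alphabet $\mathbb{B}$, hence uniformly. Dividing by $t$ yields $\lim_{t\to\infty} \frac{1}{t}\widetilde V_t(b_{-1}) = J^*$ for every $b_{-1}\in\mathbb{B}$. Since under Assumptions~\ref{TI_AS} the finite-horizon characterization gives $\sum_{b_{-1}}\widetilde V_n(b_{-1})\mu(b_{-1})$ equal to the corresponding $C_{A^{n-1}\rar B^{n-1}}^{FB,UMCO}$ (up to an index shift that vanishes in the per-unit-time limit), dominated averaging on the finite alphabet yields
\begin{align*}
\lim_{n\to\infty}\frac{1}{n}\sum_{b_{-1}}\widetilde V_n(b_{-1})\mu(b_{-1}) = J^*\sum_{b_{-1}}\mu(b_{-1}) = J^*, \hso \forall \mu\in\mathcal{M}(\mathbb{B}),
\end{align*}
which is precisely \eqref{min_a_c} together with the asserted independence of the initial distribution.

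To match the limit produced by the dynamic-programming approach with the original definition \eqref{TI_BM_a} (where the $\liminf$ is taken outside the supremum), I would establish a matching pair of bounds. Using the optimizer $\pi^{\infty,*}$ of \eqref{inf.DP3} as a stationary admissible strategy yields the lower bound $J(\pi^{\infty,*},\mu)\geq J^*$ by telescoping the Bellman identity along a sample trajectory; conversely, for any admissible $\pi^\infty$, the inequality $V(b_{-1}) + J^*\geq E^{\pi^\infty}[\text{one-step reward}+V(B_0)\,|\,b_{-1}]$ derived from the supremum in \eqref{inf.DP3}, iterated $n$ times, gives
\begin{align*}
n J^* + V(b_{-1}) \geq {\bf E}^{\pi^\infty}_{b_{-1}}\Big\{\sum_{i=0}^{n-1}\log\Big(\frac{{\bf P}(B_i|B_{i-1},A_i)}{{\bf P}^{\pi^\infty}(B_i|B_{i-1})}\Big)\Big\} + {\bf E}^{\pi^\infty}_{b_{-1}}[V(B_{n-1})].
\end{align*}
Since $V$ is bounded on the finite space $\mathbb{B}$, dividing by $n$ and taking $\liminf$ gives $J^*\geq J(\pi^\infty,\mu)$. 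Combining both bounds closes the identification $J^* = C_{A^\infty\rar B^\infty}^{FB,UMCO}$.

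The delicate step I expect to spend the most effort on is the exchange of the $\liminf$ and the $\sup$ implicit in going from \eqref{TI_BM_a} to \eqref{TI_BM}, as well as the careful indexing alignment between $\widetilde V_t$ and the finite-time feedback capacity sum (which differ by a single stage and by the relabeling $b_{t-1}\mapsto b_{-1}$ enabled by time-invariance). These are standard manipulations for finite-state, finite-action average-cost MDPs once constant-shift equivariance of $T$ and the convergence postulated in Assumption~\ref{Ass_ST} are in place, but the bookkeeping needs to be done carefully so that the uniform boundedness of $V$ on $\mathbb{B}$ really does kill the boundary term $\tfrac{1}{n}{\bf E}^{\pi^\infty}_{b_{-1}}[V(B_{n-1})]$ for every admissible strategy.
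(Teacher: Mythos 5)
Your proof is correct, and while it shares the paper's starting point, it follows a genuinely different (and in places more complete) route. The paper's own proof works entirely at the level of the finite-horizon recursion: it rewrites \eqref{inf.DP1} as $\tilde{V}_t(b_{-1})+\frac{1}{t}\tilde{V}_t(b_{-1})=\sup_\pi\{\cdots\}$, invokes Assumption~\ref{Ass_ST} to get $\frac{1}{t}\tilde{V}_t(b_{-1})\to J^*$ uniformly in $b_{-1}$ (whence independence of $\mu$), and then passes to the limit inside the recursion --- justifying the interchange of limit and supremum by finiteness of the alphabets --- so that the limit function satisfies \eqref{inf.DP3}. You instead take the Bellman equation as given (which is how the lemma is actually phrased) and run the standard average-cost verification argument: constant-shift equivariance of the dynamic programming operator, the telescoped identity $nJ^*+V(b_{-1})={\bf E}^{\pi^{\infty,*}}_{b_{-1}}\{\sum_{i=0}^{n-1}\ell_i\}+{\bf E}^{\pi^{\infty,*}}_{b_{-1}}\{V(B_{n-1})\}$ along the optimizer for the lower bound, and the iterated Bellman inequality for the upper bound, with the boundary term killed by boundedness of $V$ on the finite set ${\mb B}$. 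What your route buys is an explicit proof that a solution of \eqref{inf.DP3} really does yield the optimal average reward and that the $\liminf$/$\sup$ ordering in \eqref{min_a_c} is immaterial --- points the paper leaves implicit or defers to a one-line appeal to finite alphabets; what the paper's route buys is a direct derivation that the limit in Assumption~\ref{Ass_ST} actually produces a solution of \eqref{inf.DP3}, i.e., it addresses existence rather than verification. The two arguments are complementary, and your bookkeeping concerns (the one-stage index shift between $\tilde{V}_t$ and $C^{FB,UMCO}_{A^{t-1}\rar B^{t-1}}$, and the boundedness of the one-step reward, which follows since $\ell(b_{i-1},\pi^{\infty}(b_{i-1}))$ is a conditional mutual information bounded by $\log\min(|{\mb A}|,|{\mb B}|)$) are all benign for finite alphabets.
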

\begin{proof}
See Appendix~\ref{lemma_infhor_proof}.
\end{proof}

\ \

Thus, we have two different ways to determine sufficient conditions for $J^*$ to correspond to feedback capacity; one based on Remark~\ref{rem_test_1}, and one based on Lemma~\ref{lemma_infhor}, i.e., by solving the infinite horizon dynamic programming equation (\ref{inf.DP3}).

Next, we state the necessary and sufficient conditions for any $\big\{\pi^{\infty}(a_0|b_{-1}) \in {\cal M}({\mb A}): b_{-1}  \in {\mb B}\big\}$ to be a solution of the dynamic programming equation (\ref{inf.DP3}).\\

\begin{theorem}(Infinite horizon Necessary and Sufficient conditions)\\
\label{nessufco_ΙΗ}
Suppose Assumptions~\ref{Ass_ST} hold
and there exists a $\big\{\pi^{\infty,*}(a_0|b_{-1}) \in {\cal M}({\mb A}): b_{-1}  \in {\mb B}, \ J^*\in \mathbb{R}\big\}$ and a corresponding pair $\Big\{\Big(V(b_{-1}), J^*\Big): b_{-1} \in {\mb B}\Big\}$, which solves (\ref{inf.DP3}).\\
 The necessary and sufficient conditions for any input distribution $\{\pi^{\infty}(a_0|b_{-1}) \in {\cal M}({\mb A}): b_{-1} \in {\mb B} \}$ to achieve the supremum of the dynamic programming equation \eqref{inf.DP.assum.1a}  are the following.\\  There exist $\{V(b_{-1}): b_{-1} \in {\mb B}\}$ such that 

\begin{align}
J^* + V(b_{-1}) =& \sum_{b_{0}}\Big( \log\Big(\frac{{\bf P}(b_0|a_0,b_{-1})}{{\bf P}^{\pi^{\infty}}(b_0|b_{-1})}\Big)+V(b_{0}) \Big) {\bf P}(b_0|a_0,b_{-1}) , \hso   \forall{a_0}\in{\mb A}\hso \mbox{if} \hso \pi^{\infty}(a_0|b_{-1})\neq{0},  \label{suff_equa_ter_con_TI}\\
J^*+ V(b_{-1}) \leq & \sum_{b_{0}} \Big( \log\Big(\frac{{\bf P}(b_0|a_0,b_{-1})}{{\bf P}^{\pi^{\infty}}(b_0|b_{-1})}\Big) + V(b_0)\Big){\bf P}(b_0|a_0,b_{-1}), \hso   \forall{a_0}\in{\mb A} \hso \mbox{if} \hso \pi^{\infty}(a_0|b_{-1})={0} . \label{suff_equa_ter_con_a_TI}
\end{align}

 Moreover, $\{V(b_{-1}): b_{-1}\in {\mb B}\}$ is the value function  defined by (\ref{inf.DP3}).
\end{theorem}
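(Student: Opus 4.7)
The plan is to adapt the finite-horizon argument of Theorem \ref{nessufco} by leveraging the concavity of the Bellman operator on the right-hand side of (\ref{inf.DP3}) together with the Karush-Kuhn-Tucker optimality conditions for concave maximization over the probability simplex. Fix any $b_{-1} \in {\mb B}$ and denote the right-hand side of (\ref{inf.DP3}) (before the supremum) by the functional $F_{b_{-1}}(\pi)$ of $\pi(\cdot) \equiv \pi^{\infty}(\cdot|b_{-1}) \in {\cal M}({\mb A})$. The first term inside $F_{b_{-1}}$ is precisely $I(A_0;B_0|B_{-1}=b_{-1})$, which is concave in $\pi$ for a fixed channel kernel ${\bf P}(\cdot|b_{-1},\cdot)$, while the second term $\sum_{a_0,b_0}\pi(a_0){\bf P}(b_0|b_{-1},a_0)V(b_0)$ is linear in $\pi$. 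Hence $F_{b_{-1}}$ is concave on the simplex and the supremum is a finite-dimensional concave program.

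Next I would compute the gradient. Rewriting $F_{b_{-1}}(\pi) = \sum_{a_0}\pi(a_0)\sum_{b_0}{\bf P}(b_0|a_0,b_{-1})\log{\bf P}(b_0|a_0,b_{-1}) - \sum_{b_0}{\bf P}^{\pi}(b_0|b_{-1})\log{\bf P}^{\pi}(b_0|b_{-1}) + \sum_{b_0}{\bf P}^{\pi}(b_0|b_{-1})V(b_0)$ and differentiating, the envelope-type cancellation coming from $\sum_{a_0}\pi(a_0){\bf P}(b_0|a_0,b_{-1})={\bf P}^{\pi}(b_0|b_{-1})$ produces
\begin{align*}
\frac{\partial F_{b_{-1}}}{\partial \pi(a_0)} = \sum_{b_0}{\bf P}(b_0|a_0,b_{-1})\Bigl\{\log\frac{{\bf P}(b_0|a_0,b_{-1})}{{\bf P}^{\pi}(b_0|b_{-1})}+V(b_0)\Bigr\} - 1.
\end{align*}
The standard KKT conditions for concave maximization over the simplex assert that $\pi^{\infty,*}(\cdot|b_{-1})$ attains the supremum if and only if there exists a multiplier $\lambda(b_{-1})$ such that the gradient equals $\lambda(b_{-1})$ on the support of $\pi^{\infty,*}$ and is related (with the appropriate inequality direction) to $\lambda(b_{-1})$ off the support. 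These are precisely (\ref{suff_equa_ter_con_TI})--(\ref{suff_equa_ter_con_a_TI}) once I identify the constant, which I accomplish by multiplying the equality case by $\pi^{\infty,*}(a_0|b_{-1})$, summing over $a_0$, and invoking the hypothesis (\ref{inf.DP3}): the left-hand side becomes $F_{b_{-1}}(\pi^{\infty,*})=J^{*}+V(b_{-1})$ and the right-hand side becomes $\lambda(b_{-1})+1$, yielding $\lambda(b_{-1})=J^{*}+V(b_{-1})-1$ uniformly in $b_{-1}$.

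To close the loop with the long-run average characterization, I would invoke Lemma \ref{lemma_infhor}: under Assumptions \ref{Ass_ST}, any pair $(V,J^{*})$ solving the infinite-horizon Bellman equation (\ref{inf.DP3}) gives the true value via $V(b_{-1})=\lim_{t\to\infty}(\widetilde{V}_{t}(b_{-1})-tJ^{*})$, and the $\pi^{\infty,*}$ produced pointwise by the KKT conditions attains the per-unit-time supremum in (\ref{TI_BM_IS}) independently of the initial state. I expect the main obstacle to be not the KKT computation itself, which is mechanical, but the careful bookkeeping in translating pointwise (in $b_{-1}$) optimality of the one-step Bellman operator into optimality for the long-run average criterion---ruling out strategies that tie $J^{*}$ in a Cesaro sense but violate the one-step inequality at some $b_{-1}$. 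This is precisely the step where Assumptions \ref{Ass_ST} and Lemma \ref{lemma_infhor} do the heavy lifting, because they guarantee that $J^{*}$ is the unique achievable long-run average under time-invariant strategies and that the Bellman pair $(V,J^{*})$ characterizes all optimizers.
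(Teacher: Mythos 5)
Your proof is correct, but it follows a genuinely different route from the paper's. The paper obtains \eqref{suff_equa_ter_con_TI}--\eqref{suff_equa_ter_con_a_TI} by taking the finite-horizon necessary and sufficient conditions \eqref{suff_equa}, \eqref{suff_equa_a} of Theorem~\ref{nessufco}, re-centering them by $tJ^*$ exactly as in the derivation \eqref{der_IH} of Lemma~\ref{lemma_infhor}, and letting $t\rightarrow\infty$ under Assumptions~\ref{Ass_ST}; the infinite-horizon conditions then appear as the limit of the finite-horizon ones. You instead treat the stationary Bellman equation \eqref{inf.DP3} directly as a finite-dimensional concave program over ${\cal M}({\mb A})$ for each fixed $b_{-1}$, compute the gradient (your expression, including the $-1$ from differentiating the output-entropy term, is correct), apply the Kuhn--Tucker conditions on the simplex, and identify the multiplier as $J^*+V(b_{-1})-1$ by summing the equality case against $\pi^{\infty,*}(\cdot|b_{-1})$ and invoking \eqref{inf.DP3}; this reproduces \eqref{suff_equa_ter_con_TI}--\eqref{suff_equa_ter_con_a_TI} exactly. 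Your argument is essentially the infinite-horizon analogue of the Gallager-style variational derivation that underlies Theorem~\ref{nessufco} itself (which the paper outsources to a reference), so it is more self-contained: it does not rely on the finite-horizon conditions or on interchanging the limit with the optimality conditions. What it does \emph{not} give for free is the link between one-step optimality in \eqref{inf.DP3} and optimality for the per-unit-time criterion; you correctly recognize this and delegate it to Assumptions~\ref{Ass_ST} and Lemma~\ref{lemma_infhor}, which is precisely the role those results play in the paper as well. The paper's limiting route buys that link automatically (since the finite-horizon conditions already characterize the value function), at the cost of being less explicit about why the limiting relations remain necessary and sufficient.
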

\begin{proof} Consider the dynamic programming equation  (\ref{inf.DP3}) and  repeat the necessary steps of the derivation of Theorem~\ref{nessufco}. A more direct approach is to use the necessary and sufficient conditions of Theorem~\ref{nessufco}, as follows. Re-writing the necessary and sufficient conditions   (\ref{suff_equa}), (\ref{suff_equa_a}) as done in  (\ref{der_IH}),  using  Assumptions~\ref{Ass_ST}, to verify  that (\ref{suff_equa_ter_con_TI}), (\ref{suff_equa_ter_con_a_TI}) are the  resulting equations.    
\end{proof}

\subsection{Sufficient Conditions for Asymptotic Stationarity and Ergodicity based on Irreducibility}
In this section  we give another set of assumptions 
based on  irreducibility of the channel output transition probability for each channel input conditional distribution. \\  
 Define 
\begin{align}
\overline{\ell}(b_{i-1},a_i)\tri&  {\bf E}^{\pi^{\infty}} \Big\{ \log\Big(\frac{{\bf P}(B_i|B_{i-1},A_i)}{{\bf P}^{\pi^{\infty}}(B_i|B_{i-1})}\Big) | B_{i-1}=b_{i-1}, A_i=a_i\Big\}\\
\equiv&  \sum_{b_i \in {\mb B}}\log\Big(\frac{{\bf P}(b_i|b_{i-1},a_i)}{{\bf P}^{\pi^{\infty}}(b_i|b_{i-1})}\Big){\bf P}(b_i|b_{i-1},a_i),  \\
\ell(b_{i-1},\pi^{\infty}(b_{i-1}))\tri & {\bf E}^{\pi^{\infty}} \Big\{ \log\Big(\frac{{\bf P}(B_i|B_{i-1},A_i)}{{\bf P}^{\pi^{\infty}}(B_i|B_{i-1})}\Big) | B_{i-1}=b_{i-1}\Big\} \equiv \sum_{a_i \in {\mb A}}\overline{\ell}(b_{i-1},a_i)\pi^{\infty}(a_i|b_{i-1}).
\end{align}
To apply standard results of the Markov Decision (MD) theory from \cite{hernandezlerma-lasserre1996, varayia86}, we introduce the following notation. 
Each element of the alphabet space ${\mb B}$ is identified  by the vector $\mathbb{B}=\{b(1),\dots,b(|\mathbb{B}|)\}$, where $|\mathbb{B}|$ is the cardinality of the set ${\mb B}$. Then we can identify    any $V:\mathbb{B}\mapsto \mathbb{R}$ with a vector in $\mathbb{R}^{|\mathbb{B}|}$. Similarly, any channel input distribution is identified with
\begin{align}
\pi^{\infty} \tri \Big\{ \pi^{\infty}(b(1)), \pi^{\infty}(b(2)), \ldots, \pi^{\infty}(b(|{\mb B}|)\Big\}\tri   \Big\{\pi^{\infty}(\cdot|b(i)) \in {\cal M}({\mb A}): b(i) \in {\mb B}\Big\}.
\end{align}
Next, we define the vector pay-off and channel output transition probability  matrix as follows.   
\begin{align}
\ell(\pi^{\infty})\tri & \Big(\ell(b(1),\pi^{\infty}(b(1)))\quad \dots \quad \ell(b(|\mathbb{B}|),\pi^{\infty}(b(|\mathbb{B}|)))\Big)^T\in \mathbb{R}^{|\mathbb{B}|}, \\
 {\bf P}(\pi^{\infty})=&\Big\{{\bf P}^{\pi^{\infty}}(b_i|b_{i-1}): (b_i, b_{i-1}) \in {\mb B} \times {\mb B}\Big\} \in \mathbb{R}^{|\mathbb{B}|\times |\mathbb{B}|}.
\end{align}
 Let $\{\mu(b(i)):i=1,2, \ldots, | {\mb B}|\}   \in \mathbb{R}^{|\mathbb{B}|}$ be defined by $\mu(b(i))={\bf P}(B_{-1}=b(i)), i=1, \ldots, |{\mb B}|$. 
 
Using the above notation we have the following main theorem.\\

\begin{theorem}(Dynamic programming equation under irreducibility)\\
\label{IRR}
Suppose Assumptions~\ref{TI_AS} holds and  for each channel input distribution $\pi^{\infty}$,  the transition probability matrix of the output process  ${\bf P}(\pi^{\infty}) \equiv \big\{ {\bf P}^{\pi^\infty}(b_0|b_{-1}): (b_0, b_{-1})\in {\mathbb B} \times {\mathbb B}\big\}$ is  irreducible.  \\
Then for any channel input distribution $\pi^{\infty}$ the expression (\ref{TI_BM_a}) is given by  
\begin{equation}
J(\pi^{\infty}, \mu)= \nu(\pi^{\infty})^T\ell(\pi^{\infty})\equiv J(\pi^\infty)
\end{equation}
i.e., it is independent of $\mu(\cdot)$, where $\nu(\pi^{\infty})$ is the unique invariant probability distribution of the channel output process $\{B_0, B_1, \ldots, \}$, which satisfies  
\bea
{\bf P}(\pi^{\infty})\nu(\pi^{\infty})=\nu(\pi^{\infty}).
\eea
If there exists a time-invariant Markov channel distribution $\pi^\infty(\cdot|\cdot)$ such that 
\begin{equation*}
J(\pi^{\infty, *})=\max_{\pi^\infty}J(\pi^\infty)
\end{equation*}
then there exists a pair $(V(\pi^{\infty,*},\cdot),J(\pi^{\infty,*}))$, $V(\pi^{\infty, *},\cdot):\mathbb{B}\mapsto \mathbb{R}^{|\mathbb{B}|}$ and $J(\pi^\infty)\in \mathbb{R}$ that is a solution of the dynamic programming equation
\begin{equation}
J(\pi^{\infty,*})+V(\pi^{\infty,*},b_{-1})=\sup_{\pi^\infty(\cdot|b_{-1})}\Big\{\ell(b_{-1},\pi^\infty(b_{-1}))+\sum_{z \in {\mb B}}V(\pi^{\infty, *},z){\bf P}^{\pi^\infty}(z|b_{-1})\Big\}.\label{IR_DP_1}
\end{equation}
Moreover, $J^* \equiv J(\pi^{\infty,*})=C_{A^\infty \rar B^\infty}^{FB,UMCO}$ satisfies (\ref{min_a_c}) and corresponds to feedback capacity.
\end{theorem}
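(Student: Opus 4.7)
The plan is to prove the three assertions in order. First, for any fixed time-invariant channel input distribution $\pi^\infty$, the output process $\{B_i\}$ becomes a time-homogeneous Markov chain on the finite state space $\mathbb{B}$ with one-step transition matrix ${\bf P}(\pi^\infty)$. Under the assumed irreducibility on a finite state space, the chain is positive recurrent and admits a unique invariant probability distribution $\nu(\pi^\infty)$ characterized by ${\bf P}(\pi^\infty)\nu(\pi^\infty)=\nu(\pi^\infty)$. The per-step pay-off $\overline{\ell}(b_{i-1},a_i)$ is bounded on the finite product alphabet, so by the ergodic theorem for irreducible finite-state Markov chains, for any initial distribution $\mu$,
\begin{equation*}
\frac{1}{n}\sum_{i=0}^{n-1}\overline{\ell}(B_{i-1},A_i)\ \longrightarrow\ \sum_{b\in\mathbb{B}}\nu(\pi^\infty)(b)\,\ell(b,\pi^\infty(b))\ =\ \nu(\pi^\infty)^T\ell(\pi^\infty),
\end{equation*}
almost surely and in $L^1$. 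The $L^1$ convergence promotes this to expectations, so the $\liminf$ defining $J(\pi^\infty,\mu)$ becomes a genuine limit equal to $\nu(\pi^\infty)^T\ell(\pi^\infty)\equiv J(\pi^\infty)$, independent of $\mu$. This proves the first assertion.

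For the second assertion, given the existence of an optimal time-invariant $\pi^{\infty,*}$ with $J(\pi^{\infty,*})=\sup_{\pi^\infty}J(\pi^\infty)$, I would appeal to the average-cost dynamic programming theory for finite-state controlled Markov chains developed in \cite{hernandezlerma-lasserre1996}. The strategy is to study the finite-horizon recursion \eqref{inf.DP1} and to define the relative value function
\begin{equation*}
V(\pi^{\infty,*},b_{-1}) \ \tri\ \lim_{t\to\infty}\bigl(\widetilde V_t(b_{-1}) - t\, J(\pi^{\infty,*})\bigr),
\end{equation*}
whose existence follows because, under irreducibility of ${\bf P}(\pi^\infty)$ on the finite state space $\mathbb{B}$, the span differences $\widetilde V_t(b_{-1})-\widetilde V_t(b_{-1}')$ stay uniformly bounded in $t$ (an equi-boundedness obtainable via either the vanishing-discount method or a direct coupling argument on the finite chain). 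Substituting the relative value into \eqref{inf.DP1} and letting $t\to\infty$ yields the ACOE \eqref{IR_DP_1}, with the supremum attained by $\pi^{\infty,*}(\cdot|b_{-1})$ as a measurable selector on the finite-dimensional simplex $\mathcal{M}(\mathbb{A})$.

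Finally, the identification $J^*=C_{A^\infty\rar B^\infty}^{FB,UMCO}$ is immediate from Lemma~\ref{lemma_infhor}, since \eqref{IR_DP_1} is formally identical to \eqref{inf.DP3}; the lemma states that any solution of \eqref{inf.DP3} produces the per-unit-time capacity independently of $\mu$, exactly as in \eqref{min_a_c}. The main technical obstacle will be establishing existence and uniform boundedness of the relative value function in the second step; on general state spaces this can fail without additional uniform ergodicity hypotheses, but the finite cardinality of $\mathbb{B}$ together with irreducibility of each ${\bf P}(\pi^\infty)$ yields uniform geometric mixing and hence the required boundedness of the spans. Once this ingredient is in place, the remainder reduces to routine manipulation of the finite-horizon DP recursion together with the already-established Lemma~\ref{lemma_infhor}.
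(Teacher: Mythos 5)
Your proposal is correct and follows essentially the same route as the paper: the paper establishes the first claim via the Ces\`{a}ro limit $\frac{1}{n}\sum_{i=0}^{n-1}{\bf P}(\pi^{\infty})^i \to {\bf P}_1(\pi^{\infty})$, whose action on $\ell(\pi^{\infty})$ yields $\nu(\pi^{\infty})^T\ell(\pi^{\infty})$ --- the matrix form of the ergodic theorem you invoke --- and it likewise delegates existence of a solution to the average-cost optimality equation to the standard finite-state MDP literature. Your vanishing-discount/span-boundedness sketch supplies detail the paper leaves to its references, but it is the same underlying machinery, and the final identification via Lemma~\ref{lemma_infhor} matches the paper's statement.
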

\begin{proof} This is  shown in Appendix~\ref{IRR_proof}.
\end{proof}

\ \

We make the following comments.\\

\begin{remark}(Comments on Theorem~\ref{IRR})\\
(a) Theorem~\ref{IRR} gives sufficient conditions in terms of irreducibility of channel output transition probability matrix ${\bf P}(\pi^\infty)$  to test whether the per unit time limit of the FTFI capacity corresponds to feedback capacity. Unfortunately, it is not possible to know prior to solving the dynamic programming equation (\ref{IR_DP_1}) whether the irreducibility condition holds, because the transition probability ${\bf P}(\pi^\infty)$ is a functional of the optimal channel input distribution. A similar issue occurs in the analysis provided by Chen and Berger \cite[Lemma~2, Theorem~3]{chen-berger2005},  In view of this technicality it is more appropriate to apply the necessary and sufficient conditions of Theorem~\ref{nessufco} to determine the optimal channel input distribution and corresponding characterization of FTFI capacity, and then follow the suggestion given under Remark~\ref{rem_test_1}. 

(b) The solution of $V$ obtained from (\ref{IR_DP_1}) is unique up to an additive constant, and if $\pi^*(\cdot|b_{-1})$ attains the maximum in \eqref{IR_DP_1} for every $b_{-1}$, then $\pi^*(\cdot|\cdot)$ is an optimal channel input  distribution, and the maximum cost is $J^*$. 

(c) In specific application examples it may happen that the optimal channel input probability distribution   $\pi^{\infty,*}(\cdot|\cdot)$ induces a transition probability matrix ${\bf P}(\pi^{\infty,*})$ which is reducible, i.e., not irreducible.  For completeness, this specific case is addressed in Remark~\ref{rem_inf4}.
\end{remark}

Next, we provide an iterative algorithm to compute the optimal channel input distribution and the feedback capacity. In Section \ref{subsec.ex.cost.constrnt}, we illustrate how Algorithm \ref{alg.general.pol.iter.aver.cost} is implemented through an example.\\
\begin{algorithm}
\caption{}
\noi
\ben
\item[1)] Let $m=0$ and select an arbitrary stationary Markov channel input symbol distribution $\pi_0$.
\item[2)]  Solve the equation\een
\begin{equation}
J(\pi_m)e+V(\pi_m)=\ell(\pi_m)+V^T(\pi_m){\bf P}(\pi_m),\quad e\triangleq (1,\dots,1)\in \mathbb{R}^{|\mathbb{B}|}
\end{equation}\ben
\item[]for $J(\pi_m)\in \mathbb{R}$ and $V(\pi_m)\in \mathbb{R}^{|\mathbb{B}|}$. 
\item[3)]   Let 
\begin{equation}
\pi_{m+1}=\argmax_{\pi} \Big\{\ell(\pi)+V^T(\pi){\bf P}(\pi)\Big\}.
\end{equation} 
\item[4)] If $\pi_{m+1}=\pi_m$, let $\pi^*=\pi_m$; else let $m=m+1$ and return to step 2.  
\een
\label{alg.general.pol.iter.aver.cost}
\end{algorithm}
\noi \\

\begin{remark}\label{rem_inf4}
Theorem~\ref{IRR} and  Algorithm \ref{alg.general.pol.iter.aver.cost} pre-suppose that we  know in advance that the transition probability matrix ${\bf P}(\pi^\infty)$  of the channel output process, when evaluated at the optimal strategy $\pi^{\infty,*}(\cdot|\cdot)$ is irreducible. If irreducibility does not hold, then the dynamic programming equation \eqref{IR_DP_1} may not be sufficient to give the optimal channel input distribution and  the feedback capacity. In particular, if ${\bf P}(\pi^\infty)$ is reducible then \eqref{IR_DP_1} need not have a solution. To overcome this limitation an additional equation is added to \eqref{IR_DP_1} giving the following pair of equations.
\begin{align}
J^*(b_{-1})&=\sup_{\pi^\infty(\cdot|b_{-1})}\Big\{\int_{{\mb A}\times {\mb B}}J^*(b_0){\bf P}^{\pi^\infty}(b_0|b_{-1})\Big\}\label{inf.gen.DP.1}\\
J^*(b_{-1})+V(b_{-1})&=\sup_{\pi^\infty(\cdot|b_{-1})}\Big\{\int_{{\mb A}\times {\mb  B}}\Big\{\log\Big(\frac{{\bf P}(b_0|a_0,b_{-1})}{{\bf P}^{\pi^\infty}(b_0|b_{-1})}\Big)+V(b_0)\Big\}{\bf P}^{\pi^\infty}(b_0|b_{-1})\Big\}.\label{inf.gen.DP.2}
\end{align}
We refer to the pair \eqref{inf.gen.DP.1} and  \eqref{inf.gen.DP.2} as the generalized  dynamic programming equations. The proposed pair of dynamic programming equations completely characterize feedback capacity. 
\end{remark}

\subsection{Error exponents for the UMCO Channel with feedback}  \label{sec_error_exp_umco}
In this section, we provide bounds on the probability of error of maximum likelihood decoding,  by utilizing  the results in \cite{gallager} and \cite{permuter2006}. However, we go one step further and  show how to compute this bound, taking advantage of the structure of the capacity achieving  distribution.
\par Consider the channel $\big\{{\bf P}_i(b_i|b_{i-1}, a_i): i=0, \ldots, n\big\}$, where ${\mb B}_i={\mb B}, {\mb A}_i={\mb A}, i=0, \ldots, n$. Let ${\bf P}_{e,m}^{(n)}(b_{-1})$ denote the probability of error for an arbitrary message $m \in {\cal M}_n\tri  \big\{1,2,\ldots, M_n=\lfloor 2^{n R}\rfloor\big\}$, given the initial state $b_{-1}\in {\mb B}$. From  \cite{permuter2006}  there exists a feedback code for which the probability of error is bounded above as follows. \footnote{If the initial state is known both to the encoder and the decoder then the cardinality of the state alphabet, $|{\mb B}|$, in \eqref{prob_error1in} and \eqref{prob_error2in}  are removed [Problem~5.37, \cite{gallager}]. }
\beae
{\bf P}_{e,m}^{(n)}(b_{-1})&\leq& 4|{\mb B}| 2^{\{-n[-\rho R + {\overline F}_n(\rho) ] \}}, \hst \forall m \in  {\cal M}_n, \hso b_{-1} \in {\mb B}_{-1}, \hso  0\leq \rho \leq 1, \nms \label{prob_error1in}\\
{\overline F}_n(\rho) &\tri &\frac{-\rho\log{|{\mb B}|}}{n}+ \max_{{\bf P}_i(a_i|a^{i-1},b^{i-1}):i=0,1,\ldots,n}\left[\min_{b_{-1} \in {\mb B}}E^{\bf P}_{0,n}\left(\rho,b_{-1}\right)\right] \nms \label{prob_error2in}\\
E^{\bf P}_{0,n}\left(\rho,b_{-1}\right)&=&
-\frac{1}{n}\log{\sum_{(b_0, \ldots, b_{n-1})}\left[\sum_{(a_0, \ldots, a_{n-1})}\prod_{i=0}^{n-1}{\bf P}_i(a_i|a^{i-1},b^{i-1}){{\bf P}_i(b_i|a_i,b_{i-1})}^{\frac{1}{1+\rho}}\right]^{1+\rho}}. \nms \label{prob_error3in}
\eeae
However, by restricting the channel input distribution in \eqref{prob_error2in}, \eqref{prob_error3in}, to the set $\big\{\pi_i(da_i|b_{i-1}): i=0, \ldots, n\big\} \in \sr{\circ}{ {\cal P}}_{[0,n]}^{FB}$, the following upper bound is obtained.
\beae
{\bf P}_{e,m}^{(n)}(b_{-1})&\leq& 4|{\mb B}| 2^{\{-n[-\rho R + F_n(\rho)] \}}, \hst \forall m \in  {\cal M}_n, \hso b_{-1} \in {\mb B}_{-1}, \hso  0\leq \rho \leq 1, \nms \label{prob_error1}\\
F_n(\rho) &\tri&\frac{-\rho\log{|{\mb B}|}}{n}+ \min_{b_{-1} \in {\mb B}}E^\pi_{0,n}\left(\rho,b_{-1}\right),\hso \forall \big\{\pi_i(da_i|b_{i-1}): i=0, \ldots, n\big\} \in \sr{\circ}{ {\cal P}}_{[0,n]}^{FB},\nms \label{prob_error2}\\
E^\pi_{0,n}\left(\rho,b_{-1}\right)&=&
-\frac{1}{n}\log{\sum_{(b_0, \ldots, b_{n-1})}\left[\sum_{(a_0, \ldots, a_{n-1})}\prod_{i=0}^{n-1}\pi_i(a_i|b_{i-1}){{\bf P}_i(b_i|a_i,b_{i-1})}^{\frac{1}{1+\rho}}\right]^{1+\rho}}.\nms\label{prob_error3}
\eeae
Next, we derive simplified equations for \eqref{prob_error1} -\eqref{prob_error3}, in order to compute the bound on the probability of error.  For the rest of the analysis we view the memory of the channel on the previous output symbol as the state of the channel, defined by  $s_{i-1}\tri b_{i-1}, i=0,1,\ldots,n-1$. Then we  transform the channel to an equivalent channel of the form ${\bf P}_i(b_i|a_i,b_{i-1})={\bf P}_i(b_i|a_i,s_{i-1}), i=0, \ldots, n$. Since the state of the channel is known at the decoder, we apply the methodology used to derive Theorem 5.9.3, \cite{gallager} and  \eqref{prob_error3}, to obtain an upper bound on the probability of error, which is computationally less intensive than \eqref{prob_error1}, as follows.  At each time $i$, the channel distribution  is further transformed to 
\beae
{\bf P}_i(b_i,s_i|a_i,s_{i-1})=\left\{
  \begin{array}{l l}
   {\bf  P}_i(b_i|a_i,s_{i-1}) & \quad \text{if $s_i=b_i$}\\ \\ 
    0 & \quad \text{otherwise}.  
    \end{array} \right. \hso i=0, \ldots, n.
\label{new_channel}  
\eeae
Substituting (\ref{new_channel}) into (\ref{prob_error3}) gives the following equivalent expression.
\beae
E^\pi_{0,n}\left(\rho,b_{-1}\right)&=&
-\frac{1}{n}\log{\sum_{(b_0, \ldots, b_{n-1})}\left[\sum_{(a_0, \ldots, a_{n-1})}\prod_{i=0}^{n-1}\pi_i(a_i|b_{i-1}){{\bf P}_i(b_i|a_i,s_{i-1})}^{\frac{1}{1+\rho}}\right]^{1+\rho}}\nms \label{prob_error31}\\
&=& -\frac{1}{n}\log{\sum_{(s_0, \ldots, s_{n-1})}\sum_{(b_0, \ldots, b_{n-1})}\left[\sum_{(a_0, \ldots, a_{n-1})}\prod_{i=0}^{n-1}\pi_i(a_i|s_{i-1}){{\bf P}_i(b_i,s_i|a_i,s_{i-1})}^{\frac{1}{1+\rho}}\right]^{1+\rho}}\nms \label{prob_error32}\\
&=& -\frac{1}{n}\log{\sum_{(s_0, \ldots, s_{n-1})}\prod_{i=0}^{n-1}\sum_{b_i}\left[\sum_{a_i}\pi_i(a_i|s_{i-1}){{\bf P}_i(b_i,s_i|a_i,s_{i-1})}^{\frac{1}{1+\rho}}\right]^{1+\rho}}. \nms\label{prob_error33}
\eeae
Define the inner summations in (\ref{prob_error33}) by
\bea
\Lambda_i^\pi(s_i,s_{i-1})\tri\sum_{b_i}\left[\sum_{a_i}\pi_i(a_i|s_{i-1}){{\bf P}_i(b_i,s_i|a_i,s_{i-1})}^{\frac{1}{1+\rho}}\right]^{1+\rho}, \hso i=0, \ldots, n. \label{prob_error4}
\eea
Then, by substituting \eqref{prob_error4} in \eqref{prob_error33}, we obtain
\bea
E^\pi_{0,n}\left(\rho,b_{-1}\right)
=-\frac{1}{n}\log{\sum_{(s_0, \ldots, s_n)}\prod_{i=0}^{n-1}\Lambda_i^\pi(s_i,s_{i-1})}. \label{prob_error5}
\eea
Let $\Big\{ \Lambda_i^\pi(s_i,s_{i-1}): (s_i, s_{i-1}) \in {\mb B} \times {\mb B}\Big\}$ denote the matrix with elements identified by $\Lambda_i^\pi(s_i,s_{i-1}),  s_i=1,\ldots, |{\mb B}|, s_{i-1}=1,\ldots, |{\mb B}|$, that is, the matrix is denoted by
\bea
\left[ \Lambda_i^\pi(s_i,s_{i-1})\right] \tri \begin{bmatrix}
\Lambda_i^\pi(1,1)		& \dots	 & \Lambda_i^\pi(1,|{\mb B}|)      \\
\vdots		& \ddots & \vdots \\
\Lambda_i^\pi(|{\mb B}|,1) 	& \dots 	 & \Lambda_i^\pi(|{\mb B}|,|{\mb B}|)
\end{bmatrix}, \hst i=0, \ldots, n.
\eea
The computation of the error probability is difficult, in view of the time varying properties of the channel distribution and the channel input distribution, which implies the matrix  $\left[ \Lambda_i^\pi(s_i,s_{i-1})\right]$ is also time-varying. However, by following the derivation of equation (5.9.45) in \cite{gallager}, we derive the following bound on the probability of error for the UMCO channel with feedback.\\

\begin{theorem}(Error probability bound for maximum likelihood decoding)\\
\label{the-exp}
Suppose the channel distribution is time-invariant given by  $\big\{{\bf P}(b_i|b_{i-1}, a_i): i=0, \ldots, n\big\}$ and the probability of error defined by (\ref{prob_error1})-(\ref{prob_error3}) is evaluated at any time-invariant channel input distribution $\big\{\pi^{TI}(a_i|b_{i-1}): i=0, \ldots, n\big\}$. Then 
\itemize
\item[(i)] The matrix $\left[ \Lambda_i^\pi(s_i,s_{i-1})\right]=\left[ \Lambda^{\pi^{TI}}(s_i,s_{i-1})\right], i=0, \ldots, n$ is time-invariant.
\item[(ii)]If the time-invariant matrix $\left[ \Lambda^{\pi^{TI}}(s_i,s_{i-1})\right]$ is irreducible, then  there exists a feedback code for which the probability of error is bounded above as follows.
\bea
{\bf P}_{e,m}^{(n)}&\leq& 4|{\mb B}|\frac{v_{max}}{v_{min}} 2^{\left\{-n\left[-\rho R -\log{\lambda_{max}^{\pi^{TI}}\left(\rho\right)}  \right] \right\}}, \hst  \forall m \in  {\cal M}_n, \hso  0\leq \rho \leq 1, \label{prob_error6}
\eea
where $\lambda_{max}^{\pi^{TI}}\left(\rho\right)$ is the largest eigenvalue of the matrix $\left[ \Lambda^{\pi^{TI}}(s_i,s_{i-1})\right]$, and $v_{max}$ and $v_{min}$ are
the maximum and  minimum components, respectively, of the positive eigenvector that corresponds to the largest eigenvalue. 
\end{theorem}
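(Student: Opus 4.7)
Part~(i) is immediate from the hypotheses: since ${\bf P}_i(b_i,s_i|a_i,s_{i-1})={\bf P}(b_i,s_i|a_i,s_{i-1})$ and $\pi_i(a_i|s_{i-1})=\pi^{TI}(a_i|s_{i-1})$ for all $i$, substituting into the defining formula \eqref{prob_error4} for $\Lambda^{\pi}_i(s_i,s_{i-1})$ eliminates the time index, and the resulting matrix is the common $[\Lambda^{\pi^{TI}}(s_i,s_{i-1})]$.

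For part~(ii), the plan is to recast the sum appearing in \eqref{prob_error5} as a matrix power, and then invoke Perron--Frobenius. With $s_{-1}=b_{-1}$ fixed, I would recognise
\[
\sum_{s_0,\ldots,s_{n-1}}\prod_{i=0}^{n-1}\Lambda^{\pi^{TI}}(s_i,s_{i-1}) \;=\; \mathbf{1}^{T}\bigl[\Lambda^{\pi^{TI}}(\rho)\bigr]^{n}e_{b_{-1}},
\]
where $\Lambda^{\pi^{TI}}(\rho)$ is the matrix whose $(s,s')$ entry is $\Lambda^{\pi^{TI}}(s,s')$, and $e_{b_{-1}}$ is the standard basis vector at the initial state. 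Since $\Lambda^{\pi^{TI}}(\rho)$ is a non-negative irreducible matrix, Perron--Frobenius supplies a strictly positive left eigenvector $v^{T}$ with $v^{T}\Lambda^{\pi^{TI}}(\rho)=\lambda_{max}^{\pi^{TI}}(\rho)\,v^{T}$. The elementary sandwich $\mathbf{1}\le v/v_{min}$ together with $v\le v_{max}\mathbf{1}$ then yields
\[
\mathbf{1}^{T}\bigl[\Lambda^{\pi^{TI}}(\rho)\bigr]^{n}e_{b_{-1}} \;\le\; \frac{1}{v_{min}}\,v^{T}\bigl[\Lambda^{\pi^{TI}}(\rho)\bigr]^{n}e_{b_{-1}} \;=\; \frac{[\lambda_{max}^{\pi^{TI}}(\rho)]^{n}}{v_{min}}\,v_{b_{-1}} \;\le\; \frac{v_{max}}{v_{min}}\bigl[\lambda_{max}^{\pi^{TI}}(\rho)\bigr]^{n},
\]
uniformly in $b_{-1}$. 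Taking logarithms and dividing by $-n$ in \eqref{prob_error5} gives
$E^{\pi^{TI}}_{0,n}(\rho,b_{-1}) \ge -\log\lambda_{max}^{\pi^{TI}}(\rho)-\tfrac{1}{n}\log(v_{max}/v_{min})$, again uniformly in $b_{-1}$.

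Finally, I would substitute this bound into the Gallager/Permuter upper bound \eqref{prob_error1}--\eqref{prob_error2} specialised to the admissible input distribution $\pi^{TI}$. Because the lower bound on $E^{\pi^{TI}}_{0,n}(\rho,b_{-1})$ is independent of $b_{-1}$, the $\min_{b_{-1}}$ defining $F_n(\rho)$ inherits it, and the $O(1/n)$ corrections in the exponent (the $-\rho\log|\mathbb{B}|/n$ appearing in $F_n(\rho)$, and the $-(1/n)\log(v_{max}/v_{min})$ from the Perron step) are pulled out of the exponent as multiplicative constants to produce \eqref{prob_error6}. The principal technical obstacle is the Perron--Frobenius step, and in particular matching the left versus right Perron eigenvector to the chosen indexing of $\Lambda^{\pi^{TI}}(\rho)$ (or equivalently passing to $[\Lambda^{\pi^{TI}}(\rho)]^{T}$) so that the ratio $v_{max}/v_{min}$ of the max to min components of the positive eigenvector appears exactly as stated; the rest is the bookkeeping inherited from the standard error-exponent argument of Gallager's Theorem~5.9.3 transposed to the UMCO channel model.
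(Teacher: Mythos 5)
Your proof is correct and takes essentially the same route as the paper: part (i) by direct substitution of the time-invariant channel and input distribution into the definition of $\Lambda_i^\pi$, and part (ii) by writing the state-sequence sum as a matrix power and applying Perron--Frobenius to conclude that $E^{\pi^{TI}}_{0,n}(\rho,b_{-1})$ differs from $-\log\lambda_{max}^{\pi^{TI}}(\rho)$ by at most $\frac{1}{n}\log(v_{max}/v_{min})$ uniformly in $b_{-1}$, which is exactly the inequality \eqref{lem_gal} that the paper imports from Gallager before substituting into the bound \eqref{prob_error1}--\eqref{prob_error2}. Your write-up merely makes explicit the Perron--Frobenius computation that the paper delegates to the citation.
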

\begin{proof} (i) The first statement is due to the assumptions and follows directly from  the fact that $
E^{\pi}_{0,n}\left(\rho,b_{-1}\right)=E^{\pi^{TI}}_{0,n}\left(\rho, b_{-1}\right)\tri -\frac{1}{n}\log{\sum_{(s_0, \ldots, s_n)}\prod_{i=0}^{n-1}\Lambda^{\pi^{TI}}(s_i,s_{i-1})}$.\\  (ii) For an irreducible matrix $\left[ \Lambda(s_i,s_{i-1})\right]$ with non negative components  we can apply the Frobenius theorem, to show that the following inequality holds \cite{gallager}.
\bea
\bigg{|}E^{\pi^{TI}}_{0,n}\left(\rho,b_{-1}\right)+\log{\lambda_{max}^{\pi^{TI}}\left(\rho\right)} \bigg{|}\leq\frac{1}{n}\log{\frac{v_{max}}{v_{min}}}. \label{lem_gal}
\eea
The upper bound (\ref{prob_error6}) follows from the last expression. 
\end{proof} 
%
%
%
%
Note that the probability of error in Theorem~\ref{the-exp} is independent of the initial state $b_{-1} \in {\mb B}$.  In Section~\ref{ee_bssc} we evaluate \eqref{prob_error6} of Theorem~\ref{the-exp} for a specific channel with memory.

\section{The BSSC with \& without Feedback and with \& without Transmission Cost}
\label{cabistsych}

\par In this section, we apply the main results of the previous section to the unit memory channel Binary State Symmetric Channel (BSSC) defined by
\begin{IEEEeqnarray}{l}
 {\bf P}(b_i|a_i,b_{i{-}1}) {=} \bbordermatrix{~ & 0,0 & 0,1 & 1,0 & 1,1   \cr
                  0 & \alpha & \beta & 1{-}\beta & 1{-}\alpha  \vspace*{0.5cm} \cr                   
                  1 & 1{-}\alpha & 1{-}\beta  & \beta &  \alpha \cr}, \hso  i=0, 1, 2, \ldots, n, \hso (\alpha, \beta)\in [0,1] \times [0,1].
                  \label{BSSC_1} \IEEEeqnarraynumspace
\end{IEEEeqnarray}
We show using Theorem~\ref{non-nest_the}, that the feedback capacity optimization problem is non-nested and the optimal channel input distribution is time invariant. Further, we derive explicit expressions for feedback capacity and capacity without feedback, and we show that the capacity achieving distribution and the corresponding transition probability of the channel output processes are characterized by doubly stochastic matrices. Moreover, we show that feedback does not increase capacity, and that capacity without feedback is achieved by  a first order Markov channel input distribution, which is also doubly stochastic.   
\par First we show that the  BSSC, is equivalent to a channel with state information $s_i \tri a_i\oplus b_{i-1}, i=0,1, \ldots, n$, where $\oplus$ denotes
the modulo2 addition, as depicted in Fig.~\ref{fig.equiv.model}. 
Clearly, this transformation is one to one and onto, i.e.,  for a fixed  channel input symbol value $a_i$ (respectively channel output symbol value $b_{i-1})$ then $s_i$ is uniquely determined  by the value of $b_{i-1}$ (respectively $a_i$) and vice-versa. Hence, we obtain the following equivalent representation of the BSSC.
\begin{align}
 {\bf P}(b_i|a_i,s_i=0)  =& \bbordermatrix{~ \cr
                  & \alpha & 1-\alpha \cr
                  & 1-\alpha & \alpha \cr}, \hso i=0,1, \ldots, n,   \label{state_zero}  \\
                   {\bf P}(b_i|a_i,s_i=1)=& \bbordermatrix{~ \cr
                  & \beta & 1-\beta \cr
                  & 1-\beta & \beta \cr}, \hso i=0,1, \ldots, n. \label{state_one}
                  \end{align}
The above transformation  highlights the symmetric form of the BSSC, since, for a fixed  state $s_i\in\{0,1\}$, the channel  decomposes (\ref{BSSC_1})  into  two Binary Symmetric Channels (BSC), with transition probabilities given by (\ref{state_zero}) and (\ref{state_one}), respectively. Therefore, for a fixed value of previous output symbol, $b_{i-1}$, the encoder by choosing the current input symbol, $a_i$, knows which of the two BSC's is applied at each transmission time. This decomposition motivates the name state-symmetric channel.

\par The following notation will be used in the rest of the paper. 
\begin{itemize}
\item BSSC$(\alpha,\beta)$ denotes the BSSC with transition probabilities defined by (\ref{BSSC_1});
\item BSC$(1-\alpha)$ denotes the ``state zero" channel defined by (\ref{state_zero});
\item BSC$(1-\beta)$ denotes the ``state one" channel defined by (\ref{state_one}).
\end{itemize}
\begin{figure}
    \centering
    \begin{subfigure}[b]{0.49\textwidth}
        \centering
        \includegraphics[width=\textwidth]{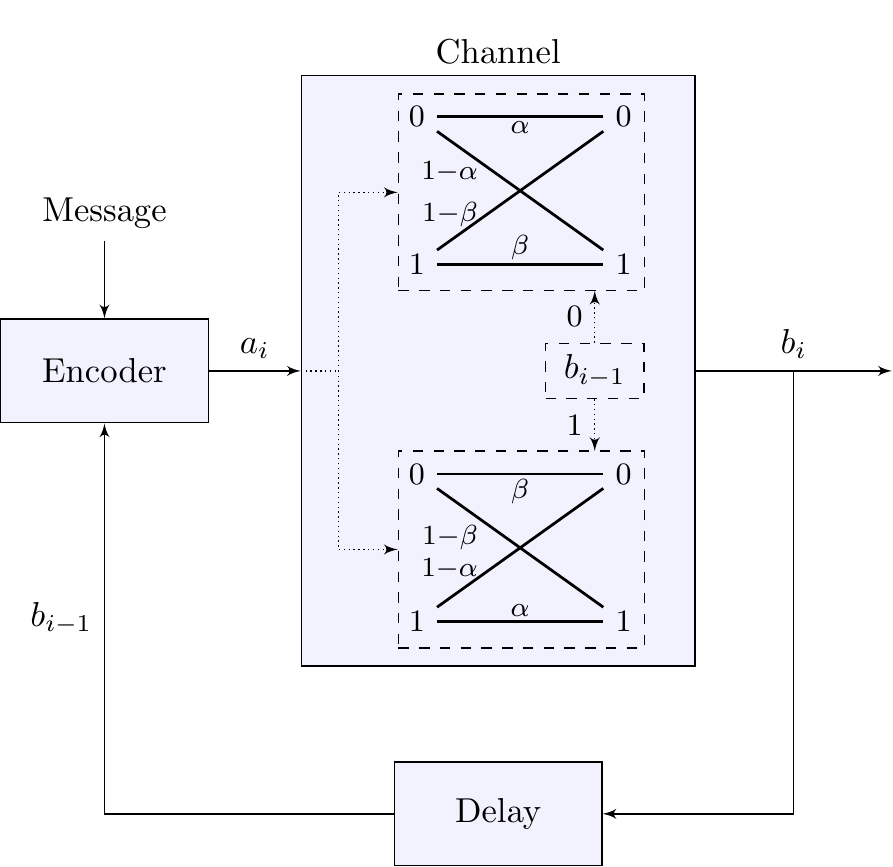}
        \caption{Previous Output STate (POST) channel.}\label{fig.DP.bssc.con.cap_a}
    \end{subfigure}
    \hfill
    \begin{subfigure}[b]{0.49\textwidth}
        \centering
        \includegraphics[width=\textwidth]{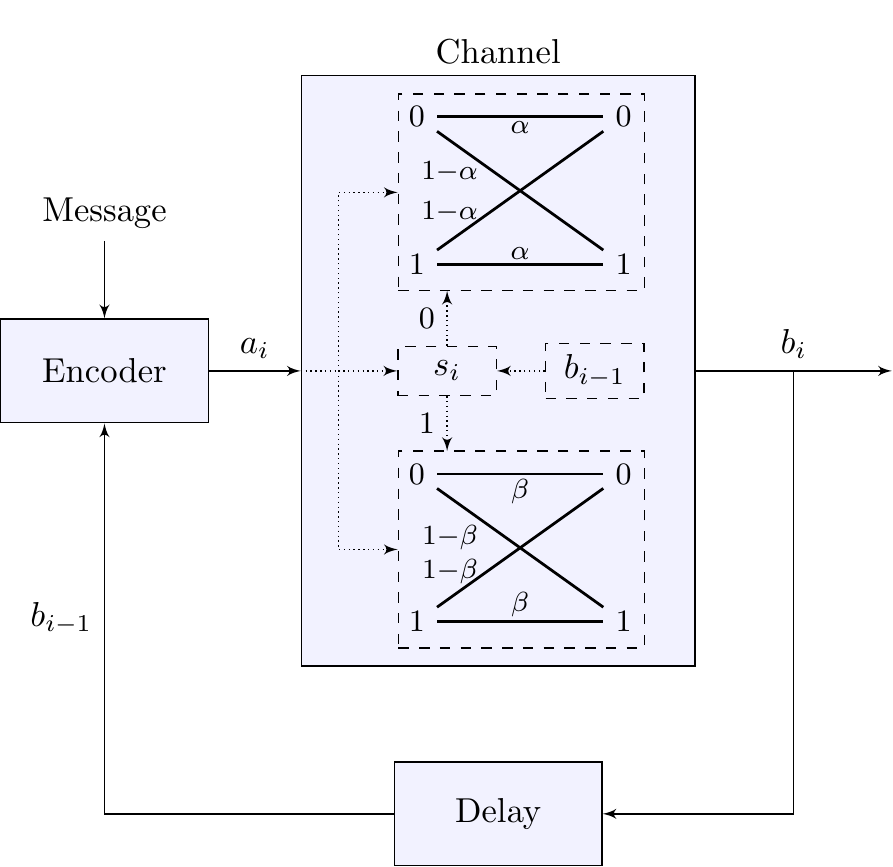}      
        \caption{Binary State Symmetric Channel (BSSC).}\label{fig.DP.bssc.con.cap_b}
    \end{subfigure}
    \caption{An equivalent model.}\label{fig.equiv.model}
\end{figure}

\par  The necessity  of imposing transmission cost constraint on the channel, is  discussed by Shannon in \cite[pp.~162--163]{Shannon59} and it is encapsulated in the following statement. `` There is a curious and provocative duality between the properties of a source with a distortion measure and those of a channel. This duality is enhanced if we consider channels in which there is a ``cost" associated with the different input letters, and it is desired to find the capacity subject to the constraint that the expected cost not exceed a certain quantity...". In \cite{kcbisit2015}, it is shown that the BSSC is in perfect duality with the Binary Symmetric Markov Source (BSMS) with respect to a transmission cost function for the channel and a fidelity constraint for the source. This is a generalization of JSCM of the discrete memoryless Bernoulli source with single letter Hamming distortion transmitted over a memoryless BSC.
\par Next, we illustrate that the cost constraint is natural when imposed on  the BSSC. The memory on the previous output symbols and its decomposable nature, allow us to impose a cost function related to the state of the channel.  
\par The physical interpretation of the transmission cost is the following. The two states of the BSSC are
\begin{itemize}
\item $s_i=0$ which is defined as the ``state zero" channel and corresponds to a BSC with crossover probability ($1-\alpha$);
\item $s_i=1$ which is defined as the ``state one" channel and corresponds to a BSC with crossover probability ($1-\beta$);
\end{itemize}
Suppose  $\alpha>\beta\geq 0.5$. 
Then the capacity of the state zero channel is greater than the capacity of the state one channel. With ``abuse" of terminology,  the state zero channel is interpreted as the ``good channel" and the state one channel is interpreted as the ``bad channel". With such interpretation   it is reasonable to impose a higher cost, 
when employing the ``good channel", and a lower cost, when employing the ``bad channel".   This policy is quantified by assigning a binary pay-off  equal to $``1"$, that is, when the the good channel is used, and  a pay-off equal to $``0"$, that is, when  the bad channel is used. \\
\begin{definition}(Binary cost function for the BSCC)\label{bin_cf}\ \\
The cost function of the BSSC satisfies
 \begin{equation}
{\gamma}(a_i,b_{i-1})=\overline{a_i\oplus b_{i-1}}  \tri\left\{
  \begin{array}{l l}
    1 & \quad \text{if $a_i=b_{i-1}$ $(s_i=0)$}\\
    0 & \quad \text{if $a_i\neq b_{i-1}$ $(s_i=1)$ }
 \label{sds} \end{array}  \right. \hso i=0,1 \ldots, n.
 \end{equation}
The average transmission cost constraint is defined by
\bea
\frac{1}{n+1}{\bf E}\left\{\sum_{i=0}^{n}{\gamma}(A_i,B_{i-1})\right\}\leq \kappa, \hso \kappa\in[0,\kappa_{max}] \label{qvcostc1_av} 
\eea
where the  letter-by-letter average transmission cost is given by
\bea
{\bf E}\big\{{\gamma}(A_i,B_{i-1})\big\}
={\bf P}_i(a_i=0,b_{i-1}=0)+{\bf P}_i(a_i=1,b_{i-1}=1)={\bf P}_i(s_i=0).
\hspace{-0.15cm}\nms \label{qvcostc1} 
\eea
\end{definition}
\par This cost function may differ, according to someone's preferences. For example, if we want to penalize the use of the ``bad" channel, we may employ the complement of the cost function (\ref{sds}). A more general cost function is 
 \begin{equation}
{\gamma}(a_i,b_{i-1}) \tri  \left\{
  \begin{array}{l l}
    \overline{\gamma} & \quad \text{if $a_i=b_{i-1}$ $(s_i=0)$}\\
    1-\overline{\gamma} & \quad \text{if $a_i\neq b_{i-1}$ $(s_i=1)$ }
 \label{sds1} \end{array}  \right. \hso i=0,1 \ldots, n
 \end{equation}
where $\overline{\gamma}\in[0,1]$. However, the binary form of the transmission cost does not downgrade the problem, since, the average cost is a linear functional, and  it can be easily upgraded to more complex forms, without affecting the proposed methodology.
\par Additional observations regarding the above formulation are given in the following remark. \\
\begin{remark}(Cost function)
\label{rem-BSSC}
\begin{enumerate}
\item If only the good channel is used, that is, ${\bf P}_i(s_i=0)=1$, then the capacity of the  BSSC is equal to zero, because  ${\bf P}(s_i=0)=1$ corresponds to channel input $a_i=b_{i-1}$, a  deterministic function for $ i=0,1,\ldots, n$ (this also follows from $I(A_i; B_i|B_{i-1})\Big|_{A_i=B_{i-1}}=0, i=0,1, \ldots,n$) . The capacity of the  BSSC is also equal to zero if only the bad channel is used ${\bf P}_{i}(s_i=0)=0$, for $ i=0, 1, \ldots, n$.
\item It is shown shortly  that the optimal channel input distribution that achieves the unconstrained capacity of the BSSC, corresponds to  a fixed occupation  of the two states. Upon introducing the transmission cost constraint, one is not allowed to use the state corresponding to the good channel beyond a certain threshold, because the overall cost of  transmission needs to be satisfied.
\item If $\beta>\alpha\geq 0.5$, then we reverse the transmission cost, while if $\alpha$ and $\beta$ are less
than $0.5$, then  flip the corresponding channel input probabilities.
\end{enumerate}
\end{remark}

\subsection{Capacity of the BSSC with feedback}
In this section, we apply Theorem~\ref{nessufco} and Theorem~\ref{non-nest_the} to calculate the closed form expressions of the capacity achieving channel input distribution,  the corresponding channel output distributions, and to show that these are time-invariant. Further, we employ these theorems to calculate the feedback capacity with and without cost constraints. \\
\subsubsection{Feedback capacity of the BSSC  without transmission cost} \ \\
\par In the next theorem we show that feedback capacity of the BSSC, without cost constraint, is given by a single letter expression and that the optimal input distribution is time invariant. \\
\begin{theorem}{(Feedback capacity and time-invariant property of the optimal distributions)}\label{op_in_out_dis_the} \ \\
Consider the BSSC defined by (\ref{BSSC_1}) with feedback, without transmission cost. Then the following hold.
\begin{itemize}
\item[(a)] The capacity achieving channel input distribution  and the corresponding channel output distribution which maximize the FTFI capacity, $C_{A^n \rar B^n}^{FB,BSSC}$, are time-invariant and given by the following expressions. 
\bea
\pi^*_i(a_{i}|b_{i{-}1})&=&\pi^{TI}(a_{i}|b_{i{-}1})= \bbordermatrix{~ \cr
                  & \nu & 1-\nu \cr
                  & 1-\nu &\nu\cr}, \hso \forall i\in\{0,1,\ldots,n\} \label{bssc_theo_oid}\\
{{\bf P}_i^{{\pi}^*}(b_{i}|b_{i-1})}&=&{{\bf P}^{TI}(b_{i}|b_{i-1})}= \bbordermatrix{~ \cr
                  & \lambda & 1-\lambda \cr
                  & 1-\lambda &\lambda\cr}, \hso  \hso \forall i\in\{0,1,\ldots,n\}\label{bssc_theo_oud}
\eea 
where
\bea
\lambda=\frac{1}{1+2^\mu}, \hso \mu=\frac{H(\beta)-H(\alpha)}{1-\alpha-\beta}, \hso \nu=\frac{1-(1-\beta)(1+2^\mu)}{(\alpha+\beta-1)(1+2^\mu)}. \label{lam_mu_v}
\eea
Moreover,
\bea
C_{A^n \rar B^n}^{FB,BSSC} &=&(n+1)\max_{\pi^{TI}(a_0|b_{-1})}I(A_0;B_0|B_{-1}=b_{i-1}), \hso \forall b_{i-1}\in\{0,1\}\label{eq_CAP_1ntc000}\\&=&(n+1)\left[H(\lambda){-}\nu H({\alpha}){-}(1{-}\nu)H({\beta})\right]\label{eq_CAP_1ntc0}.
\eea
\item[(b)] The feedback capacity is given by 
\bea
C^{FB, BSSC}_{A^\infty \rar B^\infty}&=&\max_{\pi^{TI}(a_0|b_{-1})}I(A_0;B_0|B_{-1}=b_{i-1}), \hso \forall b_{i-1}\in\{0,1\} \label{eq_CAP_1ntc}
\\&=&H(\lambda){-}\nu H({\alpha}){-}(1{-}\nu)H({\beta})\label{eq_CAP_1ntc1}
\eea 
and it is independent of the initial state.
\end{itemize}
\end{theorem}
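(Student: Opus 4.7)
My plan is to apply Theorem~\ref{non-nest_the}(b) directly, since the BSSC is time-invariant and the transmission cost is absent. The task therefore reduces to verifying condition (\ref{ST_D_TI}): namely, that the terminal value function $V_n(b_{n-1})$, defined by the one-step optimization in (\ref{DP.eq.3a}), is a constant $\overline{V}_n$ independent of $b_{n-1}\in\{0,1\}$, and that the corresponding maximizer of (\ref{suff_equa_ter_con})--(\ref{suff_equa_ter_con_a}) does not depend on $n$. Once this is done, Theorem~\ref{non-nest_the}(b) yields the non-nested, time-invariant characterization $V_t(b_{t-1})=(n-t+1)\overline{V}$, and evaluating at $t=0$ together with (\ref{conn_val_fu}) gives (\ref{eq_CAP_1ntc000})--(\ref{eq_CAP_1ntc0}).

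The core of the argument is therefore the single-letter optimization $\sup_{\pi(a|b_{-1})} I(A;B|B_{-1}{=}b_{-1})$ for each $b_{-1}\in\{0,1\}$. First I will exploit the flip-symmetry of (\ref{BSSC_1}): the transition matrix satisfies ${\bf P}(b|a,0)={\bf P}(1{-}b|1{-}a,1)$, so that any input $\pi(a|0)=(p,1{-}p)$ induces the same mutual information as the reflected input $\pi(a|1)=(1{-}p,p)$. This immediately implies $V_n(0)=V_n(1)\equiv\overline{V}_n$, which is exactly condition (\ref{ST_D_TI}). Setting $p=\pi(a=0|b_{-1}=0)$ and writing $q=p\alpha+(1-p)(1-\beta)$ for the induced output probability ${\bf P}(b=0|b_{-1}=0)$, the objective becomes $H(q)-pH(\alpha)-(1-p)H(\beta)$, which is concave in $p$ by the convexity properties of Theorem~\ref{thm-pr_fb}. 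Solving the first-order condition $(\alpha+\beta-1)\log\bigl((1-q)/q\bigr)=H(\alpha)-H(\beta)$ yields $q=\lambda=1/(1+2^\mu)$ with $\mu=(H(\beta)-H(\alpha))/(1-\alpha-\beta)$, and inverting the affine relation between $p$ and $q$ gives $p=\nu$ as in (\ref{lam_mu_v}); by the reflection symmetry, $\pi(a|1)=(1{-}\nu,\nu)$, producing the doubly stochastic form (\ref{bssc_theo_oid}) and, via (\ref{T_P_1}), the doubly stochastic output transition (\ref{bssc_theo_oud}).

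To certify that this $\pi^{TI}$ is indeed optimal in the sense of Theorem~\ref{nessufco} at every stage, I will verify the KKT equalities (\ref{suff_equa_ter_con}) with $V_n(b_{n-1})=\overline{V}_n=H(\lambda)-\nu H(\alpha)-(1-\nu)H(\beta)$; this is a direct substitution since the maximizer $\nu$ has full support under the assumption $(\alpha,\beta)\in(0,1)^2$ (the boundary cases can be handled separately using (\ref{suff_equa_ter_con_a})). With the terminal identity in hand, Theorem~\ref{non-nest_the}(b) propagates the same constant and the same maximizer to every earlier stage, so the nested dynamic program collapses and (\ref{eq_CAP_1ntc000})--(\ref{eq_CAP_1ntc0}) follow.

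For part (b), I will take $(n+1)^{-1}C^{FB,BSSC}_{A^n\to B^n}$ and pass to the limit in (\ref{feed_capa}); since the single-letter value $\overline{V}$ is independent of $n$ and of $b_{-1}$, the limit exists trivially and equals $H(\lambda)-\nu H(\alpha)-(1-\nu)H(\beta)$ regardless of $\mu(b_{-1})$, giving (\ref{eq_CAP_1ntc})--(\ref{eq_CAP_1ntc1}). The main obstacle I anticipate is not the algebra but cleanly establishing the symmetry-based equality $V_n(0)=V_n(1)$ in a way that is uniform over all admissible parameters $(\alpha,\beta)$, including the degenerate cases $\alpha+\beta=1$ (where $\mu$ is indeterminate and the channel reduces to a memoryless one) and $\alpha,\beta\in\{0,1\}$ (where the logarithms in the KKT equalities must be interpreted as limits); these corners will be checked separately to confirm that the closed-form $(\lambda,\nu,\mu)$ and the non-nested conclusion remain valid by continuity.
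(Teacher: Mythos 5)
Your proposal is correct and follows essentially the same route as the paper: establish that the terminal value function $V_n(b_{n-1})$ is independent of $b_{n-1}$, invoke Theorem~\ref{non-nest_the}(b) to collapse the dynamic program into a non-nested, time-invariant single-letter problem, solve that problem for $\lambda$ and $\nu$, and pass to the limit for part (b). The only (equivalent) variations are that you obtain $V_n(0)=V_n(1)$ from the flip-symmetry ${\bf P}(b|a,0)={\bf P}(1{-}b|1{-}a,1)$ and find $\lambda$ from the first-order condition of the concave objective, whereas the paper computes both value-function entries directly and extracts $\lambda$ by equating the two conditional divergences in the KKT conditions (\ref{suff_equa_ter_con}); your explicit attention to the degenerate cases $\alpha+\beta=1$ and $\alpha,\beta\in\{0,1\}$ is a point the paper glosses over.
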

\begin{proof}
The proof of Theorem.~\ref{op_in_out_dis_the} is given in Appendix \ref{op_in_out_dis_the_proof}.\end{proof}
\par Theorem~\ref{op_in_out_dis_the}, specifically \eqref{eq_CAP_1ntc000}, illustrates the non-nested and time-invariant property, which gives a direct connection of the BSSC and  memoryless channels. Note, that these properties hold due to the ``symmetric" form of the BSSC. As will show at the end of the current section via simulations, the time-invariant property does not hold for  general Binary Unit Memory Channel (BUMC).
\par The BSSC without cost constraint is equivalent to the POST channel investigated in \cite{asnani13j}. The authors in \cite{asnani13j} derived an expression for feedback capacity, which is equivalent to (\ref{eq_CAP_1ntc}), by using the convex hull theorem. Theorem~\ref{op_in_out_dis_the} compliments the results in \cite{asnani13j} in the sense that it provides closed form expressions of the capacity achieving distribution and the corresponding optimal channel output conditional distribution. More importantly, it shows that these distributions are time-invariant and correspond to the non-nested optimization problem \eqref{eq_CAP_1ntc000}, which is directly analogous to Shannon's two-letter capacity formulae of memoryless channels.
\par The structure of our expression  (\ref{eq_CAP_1ntc1}) provides  insight on how the occupancy of the two states affects the capacity. Recall that the state of the channel defines which of the two binary symmetric channels is in use at each time instant. Since  ${\bf P}_{S_i}(0)={\bf P}_{A_i,B_{i{-}1}}(0,0)+{\bf P}_{A_i,B_{i{-}1}}(1,1)$, then by substituting the capacity achieving input distribution we have ${\bf P}_{S_i}(0)=\nu$.
Thus, the optimal occupancy, or equivalently the optimal time sharing, among the two binary symmetric channels with crossover probabilities $\alpha,\beta$, is given by $\nu$ which is a function of the channel parameters $\alpha$ and $\beta$. This interpretation is obvious in the feedback capacity expression (\ref{eq_CAP_1ntc1}) and this expression is similar to the capacity of the memoryless binary symmetric channel. However, for the BSSC the maximization of the output process corresponds to a time invariant, first order  doubly stochastic Markov process.

%

\subsubsection{Feedback capacity of the BSSC with transmission cost} \ \\
\par Next, we consider the BSSC with transmission cost constraint defined by (\ref{qvcostc1_av}). Since $C^{FB, BSSC}_{A^n \rar B^n}(\kappa)$ is a convex optimization problem the optimal channel input conditional distribution occurs on the boundary of the constraint, i.e., for $\kappa\geq \kappa_{max}$ $C^{FB, BSSC}_{A^n \rar B^n}(\kappa)$ is constant and equal to the unconstrained capacity given in Theorem~\ref{op_in_out_dis_the}.\\

\begin{theorem}\label{cor_cos_fee}
Consider the BSSC defined by (\ref{BSSC_1}) with feedback and transmission cost constraint defined by (\ref{qvcostc1_av}). Then the following hold.
\begin{itemize}
\item[(a)] The optimal channel input distribution which corresponds to $C_{A^n\rightarrow B^n}^{FB,BSSC}(\kappa)$ and the optimal output distribution, are time-invariant and given by
\bea
\pi^*_i(a_{i}|b_{i{-}1})&=&\pi^{TI}(a_{i}|b_{i{-}1})= \bbordermatrix{~ \cr
                  & \kappa & 1-\kappa \cr
                  & 1-\kappa &\kappa\cr}, \hso \forall i=0,1,\ldots,n\label{con_inp_the}\\
{{\bf P}_i^{{\pi}^*}(b_{i}|b_{i-1})}&=&{{\bf P}^{TI}(b_{i}|b_{i-1})}= \bbordermatrix{~ \cr
                  & \bar\lambda & 1-\bar\lambda \cr
                  & 1-\bar\lambda &\bar\lambda\cr}, \hso  \hso \forall i=0,1,\ldots,n\label{con_out_the}\
\eea 
where
\beae
\bar\lambda=\alpha\kappa+(1-\kappa)(1-\beta). \label{cor_lam_bar}
\eeae
Moreover,
\beae
C_{A^n \rar B^n}^{FB,BSSC}(\kappa) &=&(n+1)\max_{\pi^{TI}(a_0|b_{-1}):\mathbb{E}\{a_i,b_{i-1}\}\leq\kappa}I(A_0;B_0|B_{-1}=b_{i-1}), \hso \forall b_{i-1}\in\{0,1\}\nms
\eeae
\item[(b)]\par The feedback capacity is given by 
\beae
C_{A^n \rar B^n}^{FB,BSSC}(\kappa)=\left\{
  \begin{array}{l l}
    H(\bar\lambda){-}\kappa H({\alpha}){-}(1{-}\kappa)H({\beta}) & \quad \text{if $\kappa \leq \kappa_{max}$}\\ \\ 
    H(\lambda){-}\kappa_{max} H({\alpha}){-}(1{-}\kappa_{max})H({\beta}) & \quad \text{if $\kappa > \kappa_{max}$}  
    \end{array} \right.
\label{eq_CAP_1}  
\eeae
where $\kappa_{max}$ is equal to $\nu$ defined by \eqref{lam_mu_v}.
\end{itemize}
\end{theorem}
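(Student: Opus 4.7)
The plan parallels the proof of Theorem~\ref{op_in_out_dis_the}, invoking the Lagrangian cost-constrained sufficient conditions of Corollary~\ref{nessufco_TC}. I begin by parametrizing a candidate time-invariant symmetric channel input distribution with diagonal entries $\pi^{TI}(0|0)=\pi^{TI}(1|1)=p\in[0,1]$. Because $\gamma(a,b)=\overline{a\oplus b}$ detects only whether $a=b$, the letter-by-letter expected cost equals $p\,{\bf P}^{\pi}(B_{i{-}1}{=}0)+p\,{\bf P}^{\pi}(B_{i{-}1}{=}1)=p$, independently of the output distribution; activating the constraint gives $p=\kappa$ and hence~\eqref{con_inp_the}. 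Substituting $\pi^{TI}$ into~\eqref{CID_A.1} with the kernel~\eqref{BSSC_1} yields the doubly stochastic output transition~\eqref{con_out_the} with diagonal $\bar\lambda=\alpha\kappa+(1-\beta)(1-\kappa)$.

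The key step is verifying the non-nested time-invariant sufficient conditions of Corollary~\ref{nessufco_TC}(c) at the terminal time $n$, i.e.\ searching for a constant $V_n^s(b_{n-1})\equiv\bar V_n^s$ and a Lagrange multiplier $s\ge 0$ making \eqref{suff_equa_ter_con_TC}--\eqref{suff_equa_ter_con_a_TC} hold for every pair $(a_n,b_{n-1})\in\{0,1\}^2$. Evaluating the two equalities at $b_{n-1}=0$ (one with $a_n=0$, cost $1$; one with $a_n=1$, cost $0$) forces
\begin{align*}
s\;=\;H(\beta)-H(\alpha)+(1-\alpha-\beta)\log\tfrac{\bar\lambda}{1-\bar\lambda},
\end{align*}
and the analogous pair at $b_{n-1}=1$ reproduces the identical expression thanks to the $0\leftrightarrow 1$ symmetry of the channel, the candidate input, and the induced output distribution. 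Averaging the four equalities against $\pi^{TI}(\cdot|b_{n-1})$ identifies $\bar V_n^s=H(\bar\lambda)-\kappa H(\alpha)-(1-\kappa)H(\beta)-s\kappa$.

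With the terminal conditions verified, Corollary~\ref{nessufco_TC}(c) promotes the candidate to optimality of the Lagrangian at every time $t$, giving $V_t^s(b_{t-1})=(n-t+1)\bar V_n^s$. Substituting into $C_{A^n\rightarrow B^n}^{FB,BSSC}(\kappa)=\inf_{s\ge 0}\{\sum_{b_{-1}}V_0^s(b_{-1})\mu(b_{-1})+s(n+1)\kappa\}$ the additive $s\kappa$ terms cancel, producing the first branch of~\eqref{eq_CAP_1}. For the second branch I would invoke Theorem~\ref{thm-pr_fb}(a), which gives concavity and monotonicity of $\kappa\mapsto C^{FB,BSSC}(\kappa)$, together with Theorem~\ref{op_in_out_dis_the}, whose optimizer realizes state occupancy ${\bf P}(S_i=0)=\nu$; any $\kappa\ge\nu$ leaves the unconstrained optimizer feasible, so the capacity saturates at the unconstrained value, yielding $\kappa_{max}=\nu$.

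The expected obstacle is precisely the compatibility of the two rows $b_{n-1}\in\{0,1\}$ under a \emph{single} Lagrange multiplier $s(\kappa)$, since without this the terminal value function $V_n^s(\cdot)$ would not be constant and the non-nested time-invariant reduction of Corollary~\ref{nessufco_TC}(c) would fail. This compatibility is not automatic for arbitrary unit-memory channels and rests essentially on the $0\leftrightarrow 1$ symmetry built into~\eqref{BSSC_1} together with the symmetric cost; it must be checked by explicit computation of the four equalities at the terminal stage.
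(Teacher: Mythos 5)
Your proposal is correct and follows essentially the same route as the paper: both verify the Lagrangian terminal-stage conditions of Corollary~\ref{nessufco_TC}, use the resulting constancy of $V_n^s(\cdot)$ in $b_{n-1}$ to invoke the non-nested time-invariant reduction, enforce the constraint with equality to get the diagonal entry $\kappa$, and identify $\kappa_{max}=\nu$ via the $s=0$ case where the constrained and unconstrained optimizers coincide. Your write-up is in fact more explicit than the paper's (which states rather than computes the multiplier and the row-compatibility check), and your value $s=H(\beta)-H(\alpha)+(1-\alpha-\beta)\log\frac{\bar\lambda}{1-\bar\lambda}$ checks out, vanishing at $\bar\lambda=\lambda$ as required.
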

This proof is similar to the proof of Theorem~\ref{op_in_out_dis_the} and is given in Appendix \ref{cor_cos_fee_proof}.\\

\begin{figure}
    \centering
    \begin{subfigure}[b]{0.47\textwidth}
        \centering
\includegraphics[width=1\linewidth]{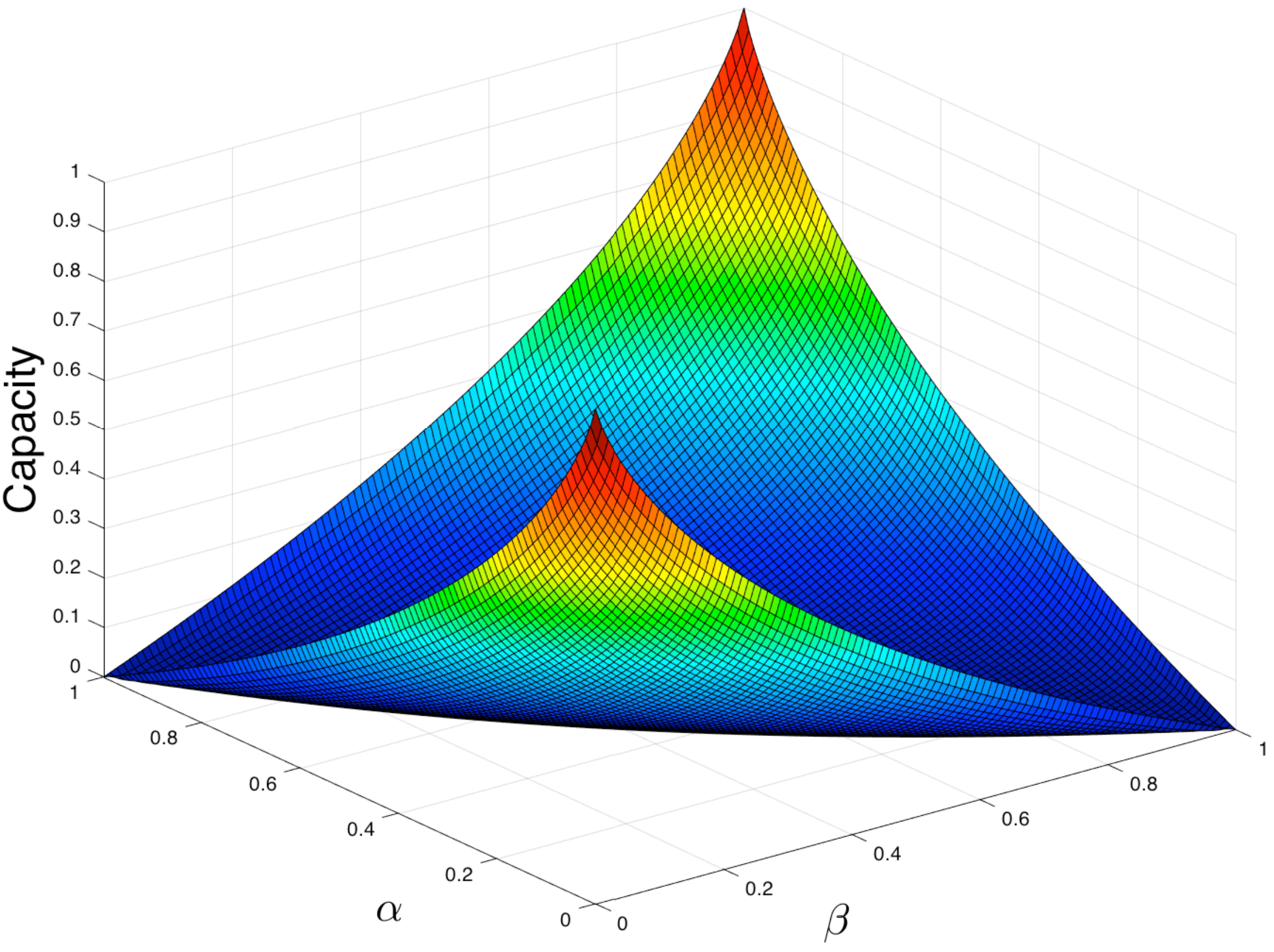}
        \caption{Unconstrained Capacity}\label{fig.DP.bssc.con.cap_a}
    \end{subfigure}
    \hfill
    \begin{subfigure}[b]{0.5\textwidth}
        \centering
        \includegraphics[width=\textwidth]{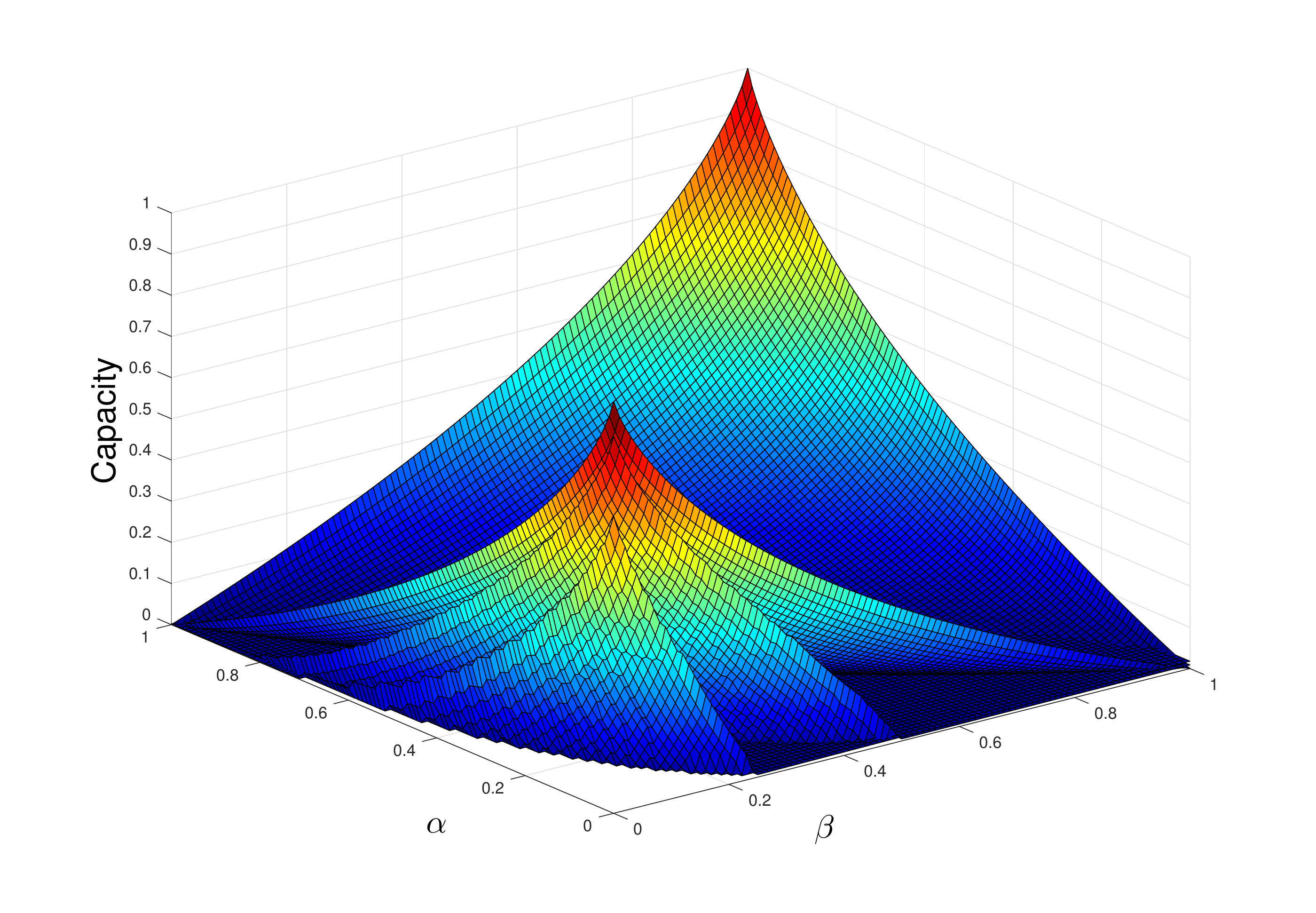}      
        \caption{Constrained Capacity.}\label{fig.DP.bssc.con.cap_b}
    \end{subfigure}
    \caption{Capacity of BSSC with feedback.}\label{fig.DP.bssc.con.cap}
\end{figure}

The unconstrained and constrained feedback  capacity of the $BSSC$ are depicted in Figure \ref{fig.DP.bssc.con.cap}. In particular, Figure \ref{fig.DP.bssc.con.cap_a} depicts the unconstrained capacity of the $BSSC$  for all possible values of the parameters $\alpha,\beta\in[0,1]$. Figure \ref{fig.DP.bssc.con.cap_b}, depicts how the transmission cost affects the capacity of the $BSSC$ for all possible values of the parameters $\alpha,\beta\in[0,1]$, and for three different choices $\kappa$. The inner plot corresponds to the unconstrained case ($\kappa=\kappa_{max}$).


\subsubsection{Error exponents for the BSSC with feedback} \label{ee_bssc} \ \\

\noi In this section we apply the results of Section~\ref{sec_error_exp_umco} to the BSSC, and we evaluate the error exponent and the probability of error, for the capacity achieving  input distribution with feedback denoted by $\pi^{TI}$ and defined by \eqref{bssc_theo_oid}.

It is straightforward to verify that evaluating \eqref{prob_error2} at the capacity achieving input distribution defined \eqref{bssc_theo_oid}, this term is independent of the initial state of the channel, and   is given by 
 \bea 
E^{\pi^{TI}}_{0,n}\left(\rho, b_{-1}\right) \equiv E^{\pi^{TI}}_{0,n}\left(\rho \right).
\eea
 Consequently, the upper bound bound on the probability of error is also independent of the initial state, and is given by
\begin{align}
{\bf P}_{e,m}^{(n)}\leq& 4|{\mb B}| 2^{\{-n[-\rho R + F_n(\rho)] \}}, \hst \forall m \in  {\cal M}_n,  \hso  0\leq \rho \leq 1.  \label{prob_error1_bssc}
\end{align} 
 Moreover, since the capacity achieving  distribution is time invariant, then $\Lambda_i^\pi(s_i,s_{i-1})=\Lambda^{\pi^{TI}}(s_i,s_{i-1}):i=0,1,\ldots,n$. Then, by substituting the time invariant capacity achieving distribution and the channel distribution in \eqref{prob_error4}, we obtain
\bea
\Lambda^{\pi^{TI}}(0,0)&=&\Lambda^{\pi^{TI}}(1,1)=\left[\nu \alpha^{\frac{1}{1+\rho}}+ (1-\nu)(1-\beta)^{\frac{1}{1+\rho}}\right]^{1+\rho} \\
\Lambda^{\pi^{TI}}(0,1)&=&\Lambda^{\pi^{TI}}(1,0)= \left[\nu(1- \alpha)^{\frac{1}{1+\rho}}+ (1-\nu)\beta^{\frac{1}{1+\rho}}\right]^{1+\rho}
\eea
The largest eigenvalue for the resulted $2\times 2$ Toeplitz matrix matrix and the ratio of the maximum and  minimum components of the positive eigenvector that correspond to the largest eigenvalue are given by
\beae
{\lambda_{max}^{\pi^{TI}}\left(\rho\right)}&=&\Lambda(0,0)+\Lambda(0,1)\nonumber\\&=&\left[\nu \alpha^{\frac{1}{1+\rho}}+ (1-\nu)(1-\beta)^{\frac{1}{1+\rho}}\right]^{1+\rho}+\left[\nu(1- \alpha)^{\frac{1}{1+\rho}}+ (1-\nu)\beta^{\frac{1}{1+\rho}}\right]^{1+\rho}.\nms \label{ee_lam_bssc} \\
\frac{v_{max}}{v_{min}}&=&1. \label{vmaxvmin}
\eeae
Substituting \eqref{ee_lam_bssc} and \eqref{vmaxvmin} in \eqref{lem_gal} we obtain
\bea
E^{\pi^{TI}}_0\left( \rho \right)&=&-log{\lambda_{max}^{\pi^{TI}}\left(\rho\right)}  \label{lem_gal1}\\
F_{\infty}\left(\rho\right)&\tri&\lim_{n\rar \infty}F_{n}\left(\rho \right)=-log{\lambda_{max}^{\pi^{TI} }\left(\rho \right)}.
\eea
Then, by definition
\bea
E^{\pi^{TI}}_r\left(R\right)\tri\max_{0\leq \rho \leq 1}\left\{F_{\infty}(\rho)-\rho{R}\right\}.\label{prob_error6ab}
\eea
Hence, the probability of error is given by
\bea
{\bf P}_{e,m}^{(n)}&\leq& 4\times|2|\times 2^{\left\{-n\left[-\rho R -log{\lambda_{max}^{\pi^{TI}}\left(\rho\right)}  \right] \right\}}. \label{prob_error6ac}
\eea
Better bounds can be obtained if both the encoder and the decoder know the initial state of the channel. In this case the cardinality of the state, $|2|$, is omitted from \eqref{prob_error6ac} [Problem~5.37, \cite{gallager}]. The error exponent and the probability of error, optimized with respect to $\rho$, are given in Fig.~\ref{fig_error}.
Obviously, even better bounds can be obtained by optimizing with respect to the channel input distribution. However, even for DMC's, the error exponent which is analogue to \eqref{prob_error6ab}, is often evaluated at the capacity achieving distribution of the ergodic capacity.

\begin{figure}
    \hspace{-0.4cm}
    \begin{subfigure}[b]{0.495\textwidth}
        \includegraphics[scale=0.52]{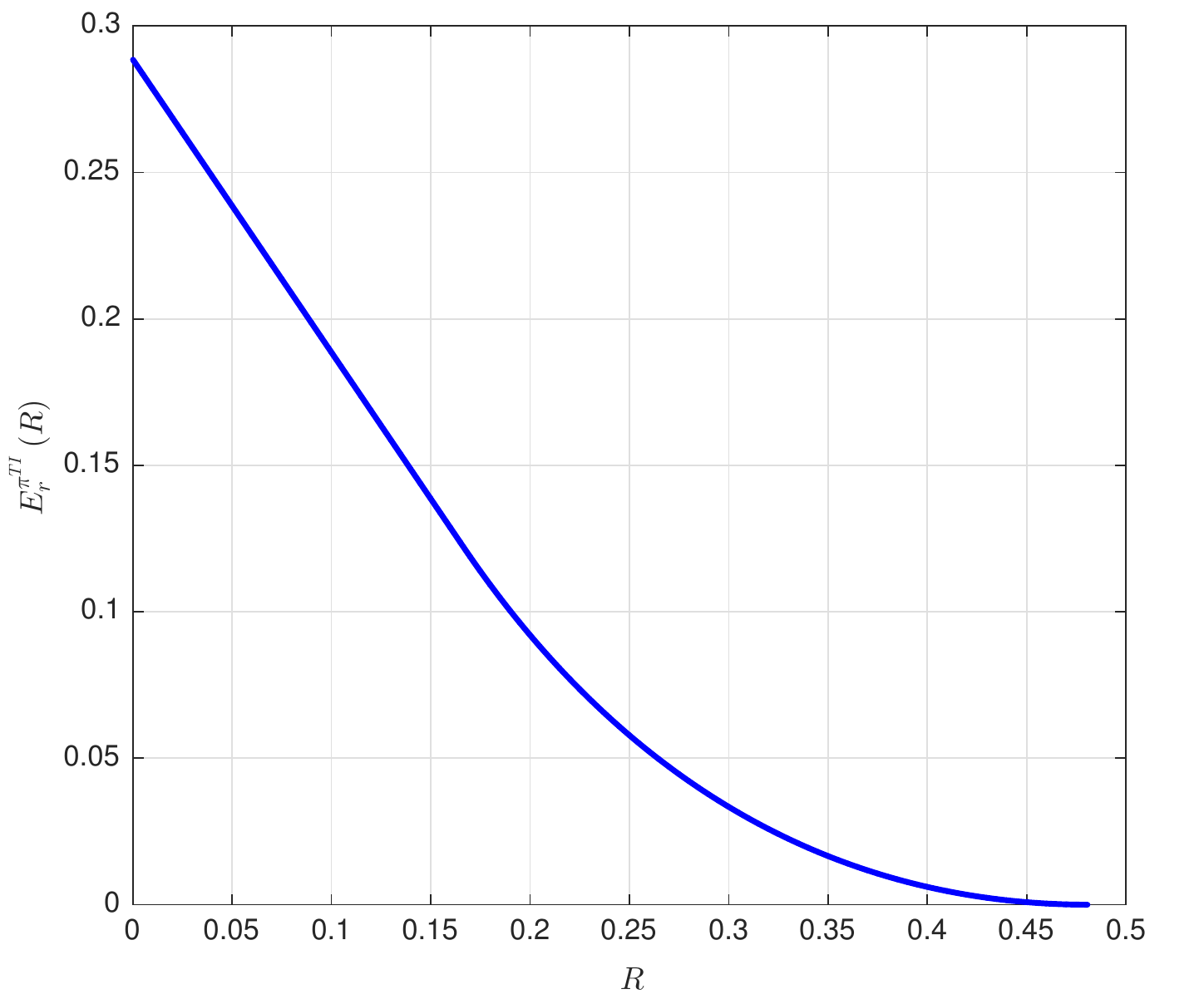}
        \caption{Error exponent.}\label{fig_error_exponent}
    \end{subfigure}
    \hfill
    \begin{subfigure}[b]{0.495\textwidth}
        \centering
        \includegraphics[scale=0.55]{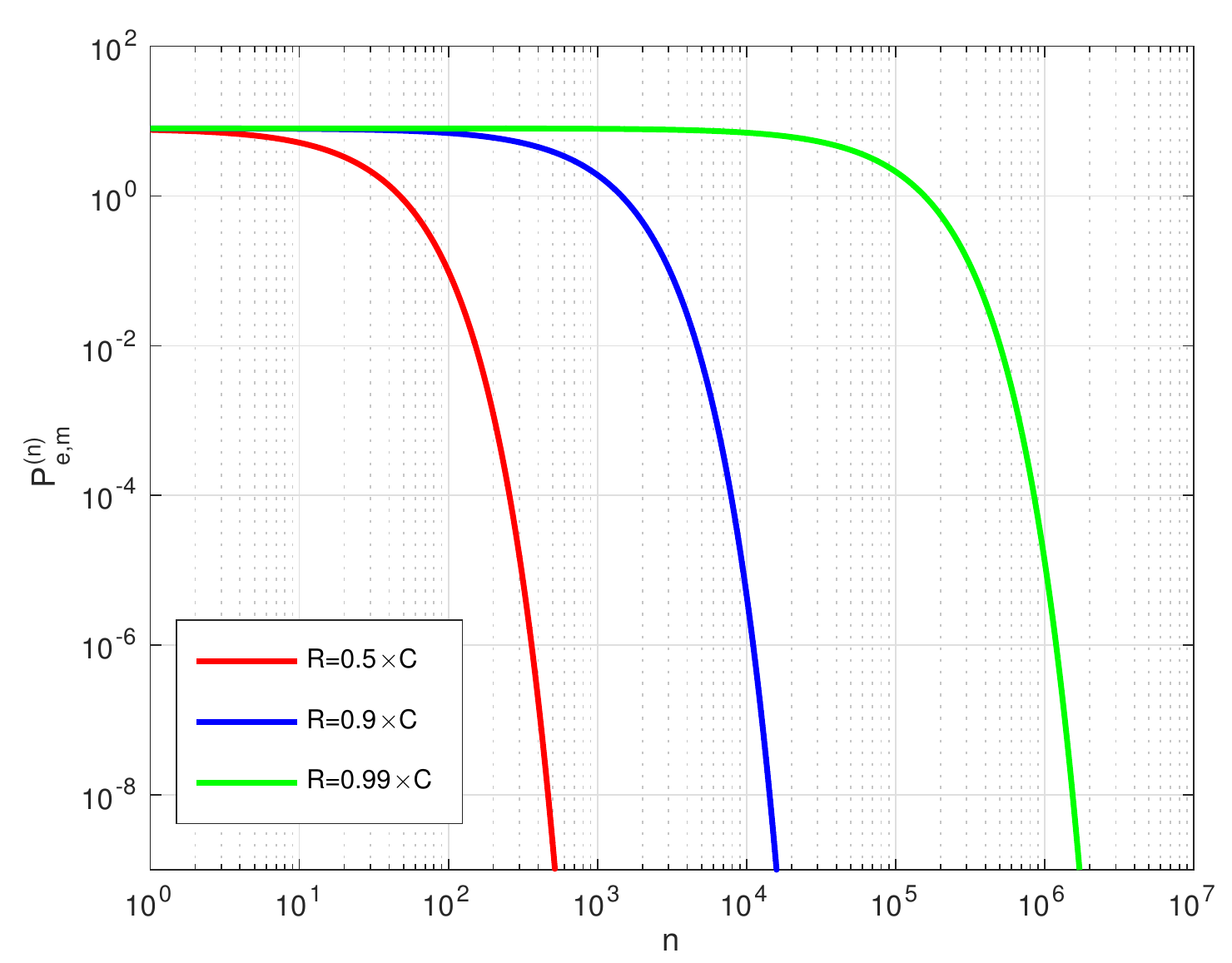}      
        \caption{Probability of error.}\label{fig_prob_error}
    \end{subfigure}
    \vspace{-0.5cm}
    \caption{Error exponent and probability of error for the BSSC with parameters $\alpha=0.95$, $\beta=0.8$.}\label{fig_error}
\end{figure}

\subsection{Capacity without feedback of the BSSC}\label{sec:cap_nf}
In this section we apply Theorem~\ref{ch4gpnf}, to show that the feedback capacity of the BSSC is achieved by a time invariant first order channel input distribution without feedback. 

\ \
\begin{theorem}(Capacity of BSSC  without  Feedback with \& without Transmission Cost)\label{ch4spnf} \\
Consider the BSSC defined by (\ref{BSSC_1}) without feedback. Then the following hold.
\begin{enumerate}
\item[(a)]  For a channel with transmission cost constraint defined by (\ref{qvcostc1_av}), the optimal channel input distribution which corresponds to $C_{A^n; B^n}^{noFB,BSSC}(\kappa)$ is time-invariant first-order Markov, and it is given by
\beae
{\bf P}^{noFB,*}_i(a_i|a^{i-1}) &=&\pi^{noFB, TI}(a_{i}|a_{i{-}1}) = \bbordermatrix{~ &  &  \cr
                   & \dfrac{1-\kappa-\sigma}{1-2\sigma} & \dfrac{\kappa-\sigma}{1-2\sigma}   \cr
                   & \dfrac{\kappa-\sigma}{1-2\sigma}   & \dfrac{1-\kappa-\sigma}{1-2\sigma} \cr}, \hso i=1,2,\ldots,n, \nms
                  \label{opima} \\ \nonumber
\eeae
where $\sigma={\alpha}{\kappa}+{\beta}({1-\kappa})$. Moreover \eqref{opima} 
induces the optimal channel input and channel output distributions ${\pi}^{TI}(a_i|b_{i-1})$ and ${{\bf P}^{TI}(b_{i}|b_{i-1})}$  of the BSSC  with feedback and transmission cost.\\
\item[(b)]  For a channel  without transmission cost (a)  holds with  $\kappa=\kappa^*$ and $\sigma=\sigma^*={\alpha}{\kappa^*}+{\beta}({1-\kappa^*})$.\\
\item[(c)]  The capacity the BSSC without feedback and transmission cost is given by
\beae
C_{A^n; B^n}^{noFB,BSSC}&=& (n+1)\max_{\pi^{noFB, TI}(a_{1}|a_{0})}I(A_1;B_1|B_{0})\label{bssc_cap_nofb}= (n+1)C^{FB, BSSC}_{A^\infty \rar B^\infty} \label{bssc_cap_nofb_new}
\eeae
and similarly, if there is a transmission cost.
\end{enumerate}
\end{theorem}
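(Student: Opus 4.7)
The overall strategy is to invoke Theorem~\ref{ch4gpnf}, which gives necessary and sufficient conditions for $C^{FB}_{A^n \rar B^n}(\kappa) = C^{noFB}_{A^n; B^n}(\kappa)$: namely, there must exist a no-feedback channel input distribution in $\mathcal{P}^{noFB}_{[0,n]}(\kappa)$ whose induced conditional $P^{\pi^{noFB}}(a_i|b^{i-1})$ coincides with the capacity-achieving feedback distribution $\{\pi^*_i(a_i|b_{i-1})\}$. From Theorem~\ref{op_in_out_dis_the} and Theorem~\ref{cor_cos_fee}, $\pi^*_i(a_i|b_{i-1})$ and $P^{\pi^*}_i(b_i|b_{i-1})$ are known explicitly and are time-invariant doubly stochastic matrices. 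Part (c) will then follow at once from parts (a)--(b), since on the one hand the no-feedback capacity is upper bounded by the feedback capacity, and on the other, the existence of the candidate no-feedback distribution forces equality, and the explicit single-letter form is already established.

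The first step is to postulate a time-invariant first-order Markov candidate $\pi^{noFB,TI}(a_i|a_{i-1})$ with unknown doubly stochastic parameters (the symmetry of the BSSC and the doubly stochastic structure of $\pi^*$ and $P^{\pi^*}$ strongly suggest this form). Under this candidate together with the channel \eqref{BSSC_1}, the joint process $\{(A_i,B_i): i=0,1,\ldots\}$ becomes a Markov chain on $\{0,1\}^2$ with transition kernel $P(a_i,b_i|a_{i-1},b_{i-1}) = {\bf P}(b_i|a_i,b_{i-1})\,\pi^{noFB,TI}(a_i|a_{i-1})$. I will compute its stationary distribution $P^{\pi^{noFB}}(A_{i-1},B_{i-1})$, then use Bayes' rule to obtain the induced conditional $P^{\pi^{noFB}}(a_i|b_{i-1}) = \sum_{a_{i-1}} \pi^{noFB,TI}(a_i|a_{i-1})\,P^{\pi^{noFB}}(a_{i-1}|b_{i-1})$.

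The second step is to set this induced conditional equal to the target doubly stochastic matrix with diagonal entry $\kappa$ (which also enforces the transmission cost with equality since, by \eqref{qvcostc1}, ${\bf E}\{\gamma(A_i,B_{i-1})\} = P(A_i=B_{i-1}) = \kappa$). Using the doubly stochastic symmetry, the system collapses to a single scalar equation, whose solution yields $\pi^{noFB,TI}(a_i|a_{i-1})$ with the diagonal entry $(1-\kappa-\sigma)/(1-2\sigma)$, where $\sigma = \alpha\kappa + \beta(1-\kappa)$. I then verify that (i) this is a valid probability distribution (nonnegative, row-sums equal to one, for $\kappa$ in the admissible range), (ii) the resulting stationary $(A_i,B_i)$-marginals are uniform $1/2$ so that the target joint $P^{\pi^{noFB}}(A_i,B_{i-1})$ is reproduced, and (iii) $P^{\pi^{noFB}}(b_i|b_{i-1})$ coincides with $P^{TI}(b_i|b_{i-1})$ from \eqref{con_out_the}. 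The initial channel input distribution $\pi^{noFB}(a_0|b_{-1})$ is chosen so that the chain starts in its stationary regime (the symmetry of the BSSC relative to $b_{-1}\in\{0,1\}$ makes this immediate). Part (b) is obtained by specialization to the unconstrained optimum $\kappa^* = \nu$, and part (c) follows by combining Theorem~\ref{ch4gpnf} with $C^{FB,BSSC}_{A^n \rar B^n} = (n+1)C^{FB,BSSC}_{A^\infty \rar B^\infty}$ proved in Theorem~\ref{op_in_out_dis_the}.

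The main obstacle is the algebraic step of inverting the mapping from $\pi^{noFB,TI}(a_i|a_{i-1})$ to $P^{\pi^{noFB}}(a_i|b_{i-1})$: because the stationary distribution of $(A,B)$ depends nonlinearly on the Markov transition parameters, one cannot simply read off the answer. The doubly stochastic symmetry of the BSSC and of $\pi^*$ is essential in reducing this nonlinear fixed-point problem to a scalar equation; without this symmetry (as noted at the end of the preceding subsection for general BUMCs), a closed form would not be expected. A secondary subtlety is the consistency check that the induced $P^{\pi^{noFB}}(b_i|b_{i-1})$ genuinely matches $P^{TI}(b_i|b_{i-1})$ of \eqref{con_out_the}, which follows from the identities $P^{\pi^{noFB}}(b_i|b_{i-1}) = \sum_{a_i} {\bf P}(b_i|a_i,b_{i-1})\, P^{\pi^{noFB}}(a_i|b_{i-1})$ and the already-verified equality $P^{\pi^{noFB}}(a_i|b_{i-1}) = \pi^*(a_i|b_{i-1})$.
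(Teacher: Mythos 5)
Your proposal is correct and follows essentially the same route as the paper: both invoke Theorem~\ref{ch4gpnf} and determine the time-invariant first-order Markov kernel \eqref{opima} by requiring that it induce the feedback-optimal $\pi^{TI}(a_i|b_{i-1})$, exploiting the doubly stochastic symmetry of the BSSC; the paper carries out this verification by forward induction on the time index, whereas you do it via the stationary distribution of the joint $(A_i,B_i)$ chain, which amounts to the same bookkeeping. The only point you should make explicit is that both arguments require $B_{-1}$ to be uniformly distributed (the paper assumes ${\bf P}_{b_{-1}}(0)={\bf P}_{b_{-1}}(1)=0.5$) so that the output marginals are uniform from time $0$ onward and the induced conditional $P(a_0|b_0)$ equals $\sigma$ on the diagonal --- your parenthetical that symmetry makes the stationary start ``immediate'' is concealing exactly this assumption, since for a fixed (point-mass) initial state the chain is not stationary at time $0$ and the required kernel at $i=1$ would differ.
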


\begin{proof} (a) By applying Theorem~\ref{ch4gpnf}, it suffices to show that there exists an input distribution without feedback which induces  the capacity achieving channel input distribution with feedback, $\pi^*_i(a_{i}|b_{i{-}1})$. For the BSSC, it is  clear that,  if any input distribution without feedback induces $\pi^*_i(a_{i}|b_{i{-}1})=\pi^{TI}(a_{i}|b_{i{-}1})$ given by \eqref{con_inp_the},
then it also induces the optimal output process ${\bf P}_i^{{\pi}^{noFB,TI},*}$ $(b_{i}|b_{i-1})={{\bf P}^{TI}(b_{i}|b_{i-1})}$ given by \eqref{con_out_the}, since
\bea
{\bf P}^{TI}(b_i|b_{i-1})=\sum_{a_{i} \in \{0,1\}}{\bf P}(b_i|a_i, b_{i-1}){\pi}^{TI}(a_i|b_{i-1}).
\eea
Suppose the distribution of the initial state $b_{-1}$ is given by the stationary distribution of the output process, that is, ${\bf P}_{b_{-1}}(0)={\bf P}_{b_{-1}}(1)=0.5$.  Then, we show by induction that there exist a time invariant, first order Markov channel input distribution without feedback that induces the time invariant channel input distribution with feedback. For $i=0$, the optimal channel input distribution without feedback is equal to optimal channel input distribution with feedback, that is, ${\pi}_0^{noFB}(a_0|b_{-1})={\pi}^{TI}(a_0|b_{-1})$, and is given by \eqref{con_inp_the}, since $b_{-1}$ is the initial state known at the encoder. Therefore, the  corresponding channel output distribution with feedback,  ${\bf P}_0^{{\pi}^*}(b_0|b_{-1})$, is induced and since ${\bf P}_0^{{\pi}^*}(b_0|b_{-1})={\bf P}^{TI}(b_0|b_{-1})$ is doubly stochastic, then ${\bf P}_0^*(b_0=0)={\bf P}_0^*(b_0=1)=0.5$. 

For $i=1$, the following identities hold, in general. 
\begin{align}
{\bf P}_1(a_1|b_{0})
&=\sum_{ a_{0} \in \{0,1\} } {\bf P}_1(a_1|a_{0}, b_{0}) {\bf P}_0(a_{0}|b_{0})\nonumber\\
&=\sum_{a_{0} \in \{0,1\} }{\bf P}_1(a_1|a_{0}, b_{0})\frac{{\bf P}_0(b_{0},a_{0})}{{\bf P}_0(b_{0})}\nonumber\\
&=\sum_{a_{0} \in \{0,1\} }\frac{{\bf P}_1(a_1|a_{0}, b_{0})}{{\bf P}_0(b_{0})}\sum_{b_{-1} \mathrlap{\in \{0,1\} }}{{\bf P}(b_{0}|a_{0}, b_{-1})
 {{\bf P}_0(a_{0}|b_{-1})}{\bf P}(b_{-1}) }\label{cap_nf_pr1}
\end{align}
Next using \eqref{cap_nf_pr1}, we investigate whether there exists a first order Markov channel input distribution without feedback, ${\bf P}_1(a_1|a_{0},b_{0})={\pi}^{noFB}_1(a_1|a_{0})$, which induces the  time-invariant capacity achieving input distribution with feedback, ${\pi}^{TI}(a_1|b_{0})$, given by \eqref{con_inp_the}. Therefore, we need to determine whether the following identity holds for some ${\pi}^{noFB}_1(a_1|a_{0})$. From \eqref{cap_nf_pr1},
\begin{align}
{\pi}^{TI}(a_1|b_{0})\sr{?}{=}\sum_{a_{0} \in \{0,1\} }\frac{{\pi}_1^{noFB}(a_1|a_{0})}{{\bf P}^*_0(b_{0})}\sum_{b_{-1} \mathrlap{\in \{0,1\} }}{{\bf P}(b_{0}|a_{0}, b_{-1})
 {{\pi^{TI}}(a_{0}|b_{-1})}{\bf P}(b_{-1}) }\label{cap_nf_pr2}
\end{align}
 Note that ${\bf P}_0(a_{0}|b_{-1})={\pi^{TI}}(a_{0}|b_{-1})$ and ${\bf P}_0(b_{0})={\bf P}^*_0(b_{0})$ hold due to step $i=0$. Solving the system of resulting equations, yields that there exists a channel input distribution without feedback, defined by \eqref{opima}, that induces ${\pi}^{TI}(a_1|b_{0})$. Therefore, it also induces the time invariant optimal output distribution, ${\bf P}_i^{{\pi}^*}(b_1|b_{0})={\bf P}^{TI}(b_1|b_{0})$ given by \eqref{con_out_the}, and its corresponding optimal marginal distribution ${\bf P}^*(b_0)$.
 
Next, suppose that for time up to time  $i=j-1$, the first order Markov input distribution defined by  \eqref{opima} induces the time invariant capacity achieving distribution with feedback, $\{{\pi}^{TI}(a_{i}|b_{i-1}):i=2,3,\ldots,j-1\}$, given by \eqref{con_inp_the}, and therefore it induces, $\{{\bf P}^{TI}(b_i|b_{i-1}):i=2,3,\ldots,j-1\}$ given by \eqref{con_out_the}, and its corresponding optimal marginal distribution $\{{\bf P}^*(b_i):i=2,3,\ldots,j-1\}$. Then, at time $i=j$, the following identity holds.
\begin{align}
{\bf P}_j(a_j|b_{j-1})
&=\sum_{ a^{j-1},b^{j-2}}{\bf P}_j(a_j|a^{j-1}, b^{j-1}){\bf P}_{j-1}(a^{j-1},b^{j-2}|b_{j-1})\nonumber\\
&=\sum_{ a^{j-1},b^{j-2}}\frac{{\bf P}_j(a_j|a^{j-1}, b^{j-1})}{{\bf P}(b_{j-1})}{\bf P}(b_{j-1}|a^{j-1},b^{j-2}){\bf P}(a_{j-1}|a^{j-2}b^{j-2}){\bf P}(a^{j-2},b^{j-2})\nonumber\\
&=\sum_{ a^{j-1},b^{j-2}}\frac{{\bf P}_j(a_j|a^{j-1}, b^{j-1})}{{\bf P}^*(b_{j-1})}{\bf P}(b_{j-1}|a_{j-1},b_{j-2}){\pi}^{TI}(a_{j-1}|b_{j-2}){\bf P}^*(a^{j-2},b^{j-2}). \label{cap_nf_pr1nn}
\end{align}
The last equality holds since the distributions ${\bf P}^*(b_{j-1})$, ${\pi}^{TI}(a_{j-1}|b_{j-2})$, ${\bf P}^*(a^{j-2},b^{j-2})$ were induced from the previous steps $i=0,1,\ldots,j-1$. Subsequently, we investigate whether there exists a first order Markov channel input distribution, ${\bf P}_j(a_j|a^{j-1}, b^{j-1})={\pi}^{noFB}_j(a_j|a_{j-1})$, that satisfies \eqref{cap_nf_pr1nn}. That is, 
\begin{align}
{\pi}^*(a_{j}|b_{j-1})&\sr{?}{=}\sum_{ a^{j-1},b^{j-2}}\frac{{\pi}^{noFB}_j(a_j|a_{j-1})}{{\bf P}^*(b_{j-1})}{\bf P}(b_{j-1}|a_{j-1},b_{j-2}){\pi}^{TI}(a_{j-1}|b_{j-2}){\bf P}^*(a^{j-2},b^{j-2})\nonumber\\
&=\sum_{ a_{j-1}}\frac{{\pi}^{noFB}_j(a_j|a_{j-1})}{{\bf P}^*(b_{j-1})}\sum_{b_{j-2}}{\bf P}(b_{j-1}|a_{j-1},b_{j-2}){\pi}^{TI}(a_{j-1}|b_{j-2})\sum_{ a^{j-2},b^{j-3}}{\bf P}^*(a^{j-2},b^{j-2})\nonumber\\
&=\sum_{ a_{j-1}}\frac{{\pi}^{noFB}_j(a_j|a_{j-1})}{{\bf P}^*(b_{j-1})}\sum_{b_{j-2}}{\bf P}(b_{j-1}|a_{j-1},b_{j-2}){\pi}^*(a_{j-1}|b_{j-2}){\bf P}^*(b_{j-2})\label{cap_nf_pr1nn1}
\end{align}
Solving, the system of equation resulting from equation \eqref{cap_nf_pr1nn1}, yields the time-invariant first order Markov input distribution defined by \eqref{opima}. Since, the time invariant first order Markov channel input distribution without feedback defined by \eqref{opima}, induces the optimal channel input distribution with feedback $\forall i=1,2,\ldots,j$, then it is the time invariant capacity achieving input distribution without feedback. \\
(b) Holds since for  the BSSC  without transmission cost $\kappa=\kappa^*$, and therefore $\sigma=\sigma^*={\alpha}{\kappa^*}+{\beta}({1-\kappa^*})$.\\
{(c)} Since, $\{{\pi}^{noFB,TI}(a_i|a_{i-1})\equiv {\pi}^{noFB,TI}(a_1|a_{0}):i=1,2,\ldots,n \}$ induces  $\{{\pi}^{TI}(a_i|b_{i-1}):i=1,2,\ldots,n \}$ given by \eqref{con_inp_the}, and  $\{{\bf P}^{TI}(b_i|b_{i-1})i=1,2,\ldots,n \}$ given by \eqref{con_out_the}, then the channel capacity without feedback and transmission cost is given by \eqref{bssc_cap_nofb}. Similarly, for the constrained capacity we have $C_{A^n; B^n}^{noFB,BSSC}(\kappa)= C^{FB, BSSC}_{A^\infty \rar B^\infty}(\kappa)$.
\end{proof}

\subsection{Special cases of the BSSC}
\subsubsection{ Memoryless BBSC (${\alpha}={\beta}=1-\epsilon, \ \epsilon\neq 0.5$)} \ 
\par Consider the trivial case where ${\alpha}={\beta} \tri 1-\epsilon$. Then, given the state $s_i=a_i\oplus b_{i-1}$, the BSSC degenerates to the Discrete Memoryless - Binary Symmetric Channel (DM-BSC) with cross
over probability $\epsilon$. By employing \eqref{bssc_theo_oid}-\eqref{eq_CAP_1ntc} and \eqref{opima}, then  $\mu=0$ and $ \nu=\lambda=0.5$, the capacity achieving input distribution and the corresponding output distribution are memoryless and uniformly distributed, and the  capacity expression reduces to 
\bea
C^{DM-BSC}=H((1-\epsilon)(1-{0.5})+\epsilon{0.5})-{0.5}H(\epsilon)-{0.5}H(\epsilon)
=1-H(\epsilon) . \nonumber
\eea
\par This are the known results of the memoryless BSC.
\subsubsection{ Best and Worst BBSC (${\alpha}=1, \ {\beta}=0.5$)} \
\par Consider the case  ${\alpha}=1$ and ${\beta}=0.5$. This channel decomposes to a  noiseless BSC channel with crossover probability $0$ if $s_i=a_i\oplus b_{i-1}=0$, and to  a  noisy BSC channel with crossover probability $0.5$ if $s_i=a_i\oplus b_{i-1}=1$. By invoking \eqref{bssc_theo_oid}-\eqref{eq_CAP_1ntc}, then  $\nu=0.6$,
$\lambda=0.8$, the channel capacity is equal to
\bea
C^{FB, BSSC}\Big|_{\alpha=1, \beta=0.5}=C^{noFB, BSSC}\Big|_{\alpha=1, \beta=0.5}=H(0.2)-{0.6}H(1)-{0.4}H(0.5)=0.3219 \nonumber
\eea
the optimal channel  input distributions with and without feedback are given by
\vspace{-0.6cm}
\bea
\pi^{TI}(a_i|b_{i-1}) = \bbordermatrix{~ \cr
                  & 0.6 & 0.4 \cr
                  & 0.4 & 0.6 \cr}, \hso 
{\pi}^{noFB, TI}(a_i|a_{i-1}) = \bbordermatrix{~  \cr
                  & 2/3 & 1/3 \cr
                  & 1/3 & 2/3 \cr}\nonumber
\eea
and the optimal channel output distribution for both is given by
\bea
{\bf P}^{TI}(b_i|b_{i-1}) = \bbordermatrix{~ \cr
                  & 0.8 & 0.2 \cr
                  & 0.2 & 0.8 \cr} . \nonumber
\eea
This completes the analysis of degenerate BSSC.

\subsection{Capacity of the Binary Input Binary Output - Unit Memory Channel Output (BIBO-UMCO) channel with feedback}\label{subsec.DP.example}

\begin{figure}
    \centering
    \begin{subfigure}[b]{0.49\textwidth}
        \centering
        \includegraphics[width=\textwidth]{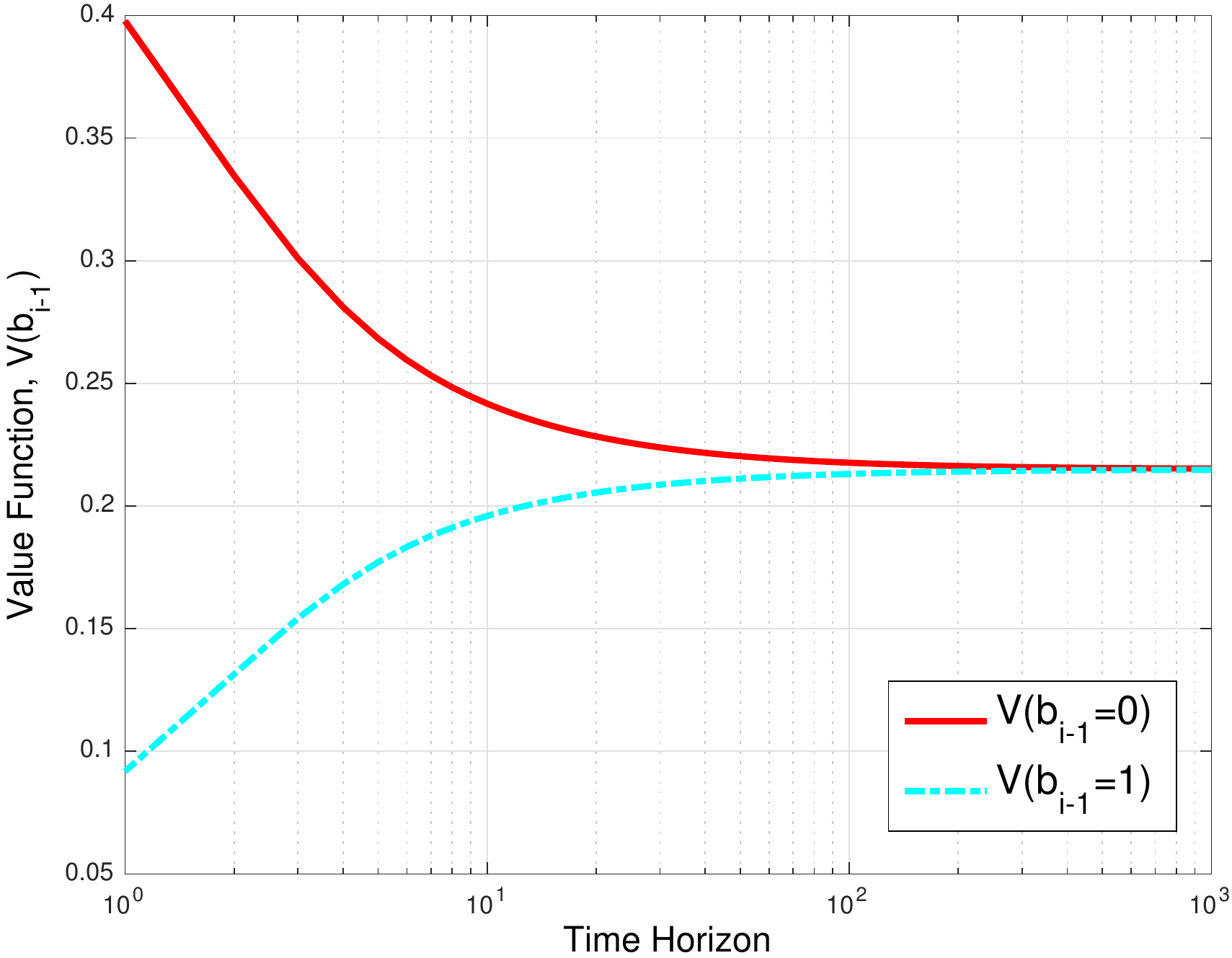}
        \caption{Value Function.}\label{fig.DP.bsnsc.a}
    \end{subfigure}
    \hfill
    \begin{subfigure}[b]{0.49\textwidth}
        \centering
        \includegraphics[width=\textwidth]{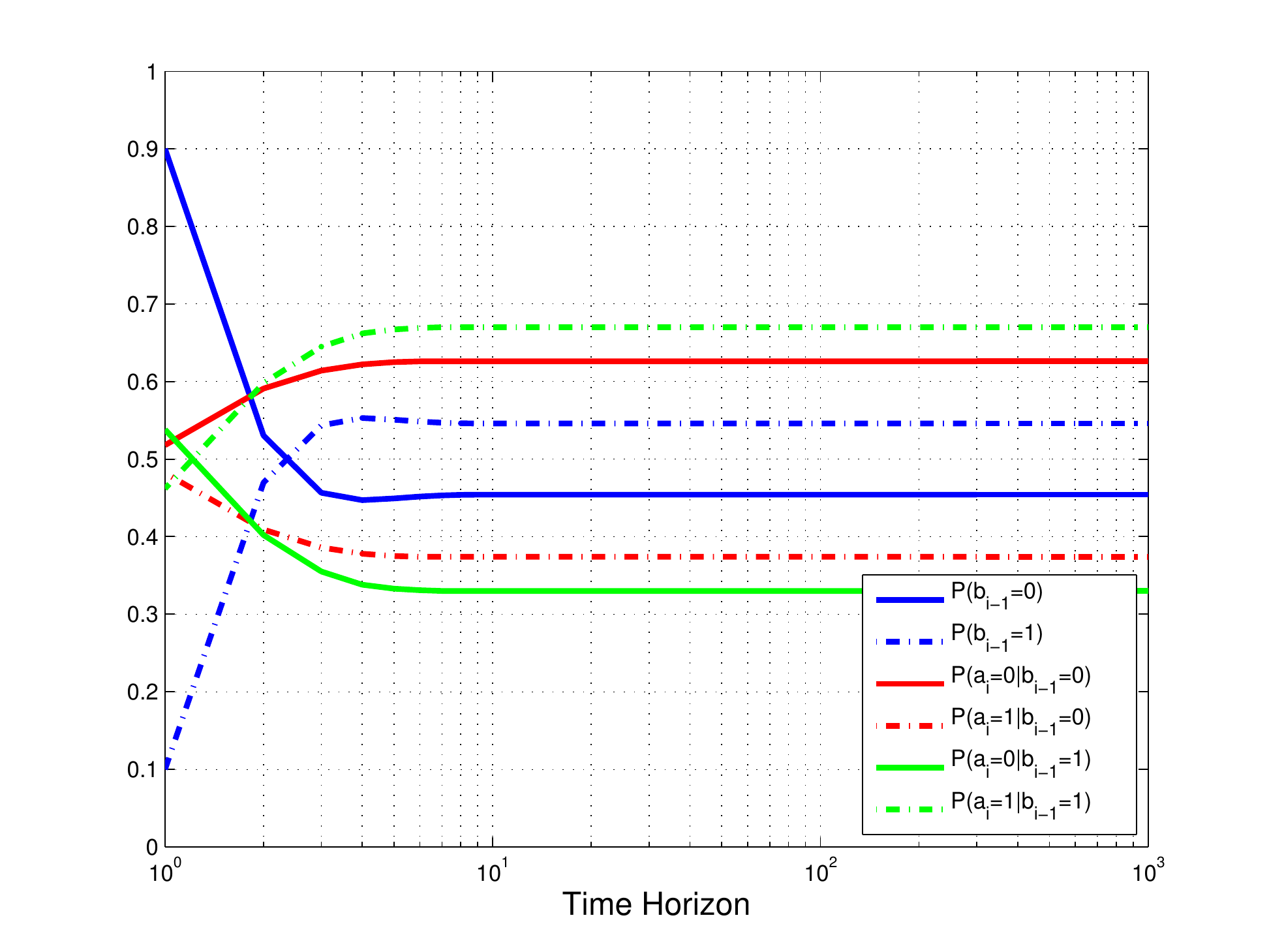}      
        \caption{Optimal Input and Output Distributions.}\label{fig.DP.bsnsc.b}
    \end{subfigure}
    \caption{Unconstrained $BIBO-UMCO$ channel with parameters $\alpha_1=0.9$, $\alpha_2=0.2$, $\alpha_3=0.1$ and $\alpha_4=0.4$.}\label{fig.DP.bsnsc}
\end{figure}
\begin{figure}[!h]
    \centering
    \begin{subfigure}[b]{0.49\textwidth}
        \centering
        \includegraphics[width=\textwidth]{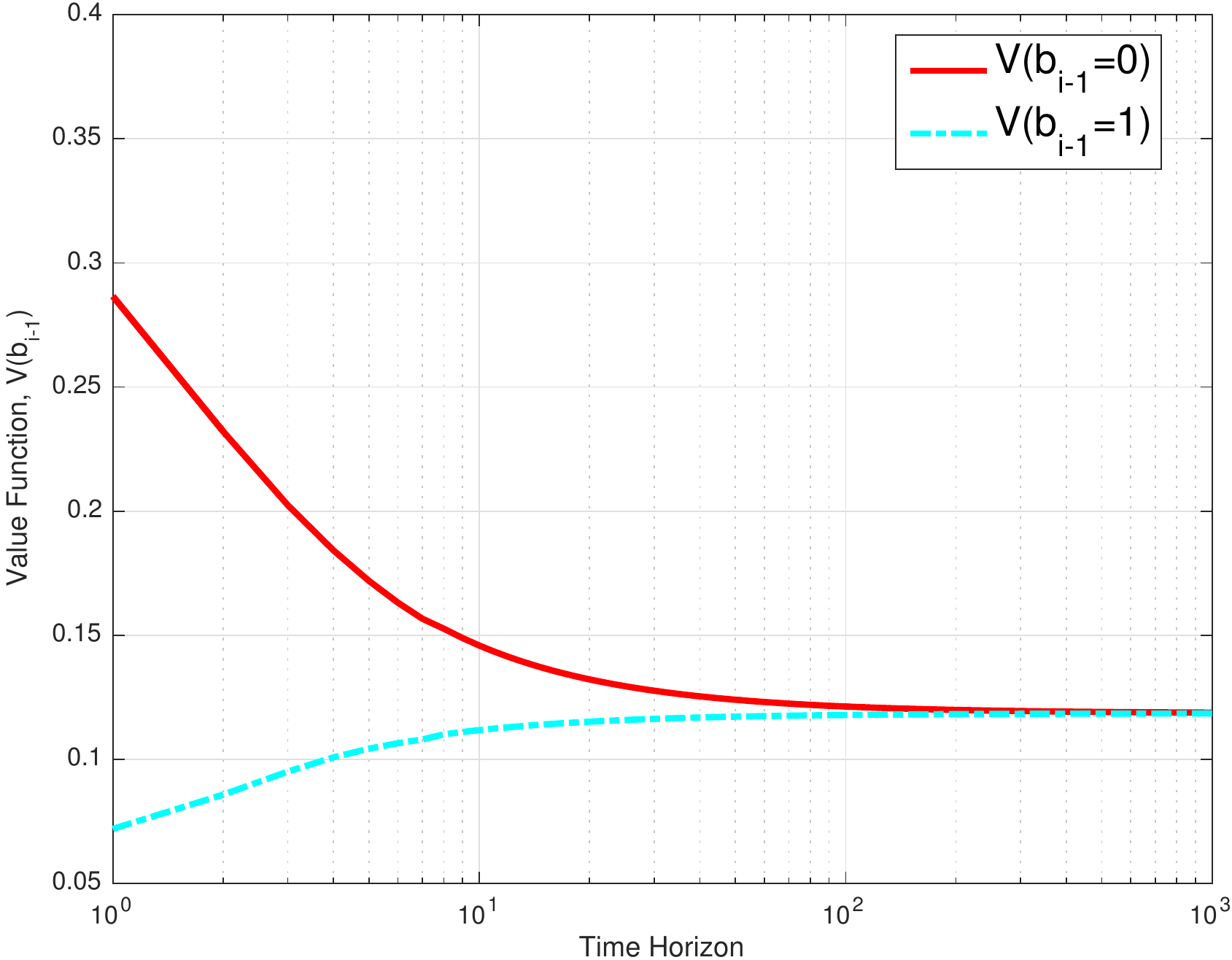}
        \caption{Value Function.}\label{fig.DP.bsnsc.a1}
    \end{subfigure}
    \hfill
    \begin{subfigure}[b]{0.49\textwidth}
        \centering
        \includegraphics[width=.95\textwidth]{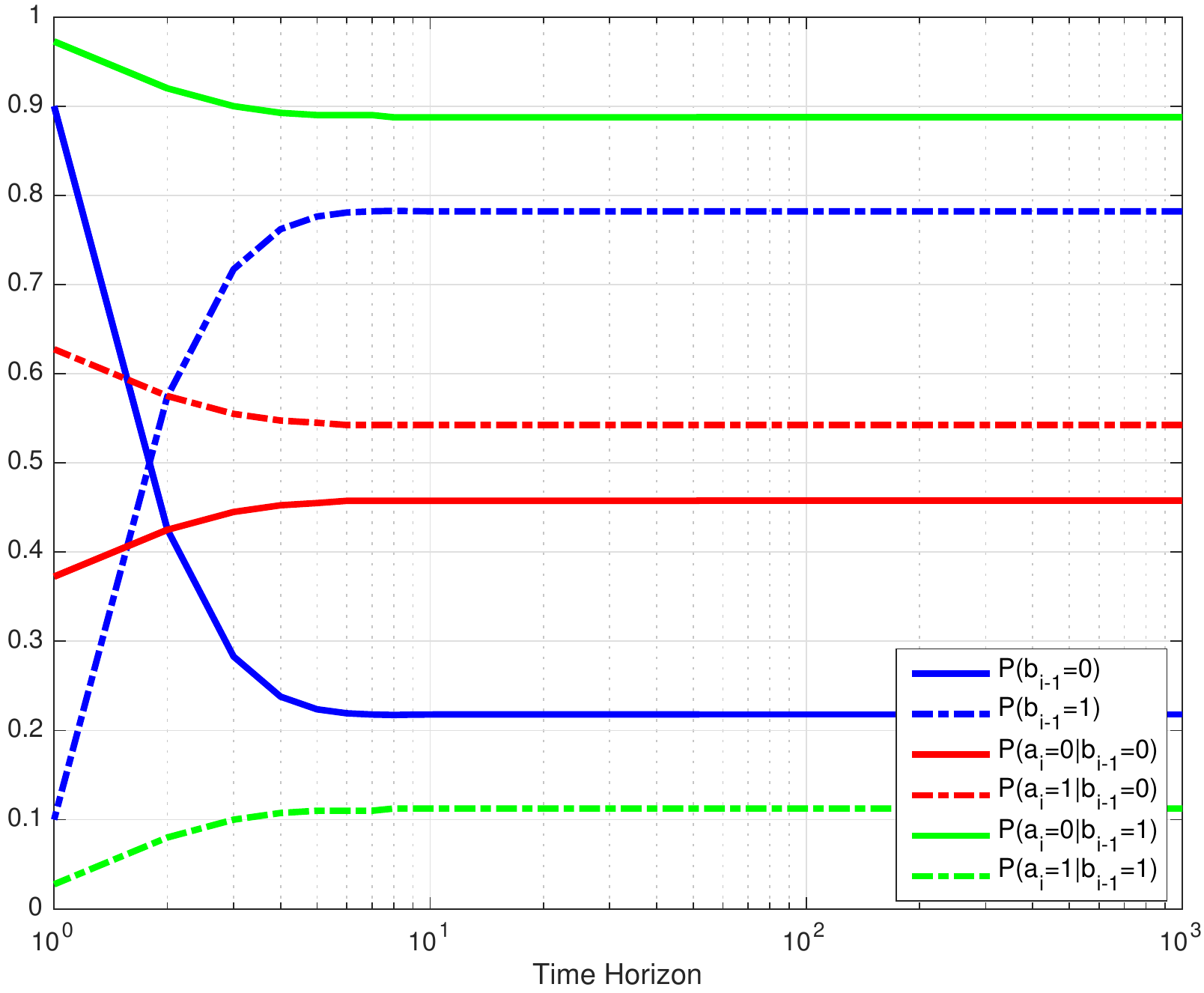}      
        \caption{Optimal Input and Output Distributions.}\label{fig.DP.bsnsc.b1}
    \end{subfigure}
    \caption{Constrained  $BIBO-UMCO$ channel with parameters $\alpha_1=0.9$, $\alpha_2=0.2$, $\alpha_3=0.1$, $\alpha_4=0.4$ and $k=0.1877$.}\label{fig.DP.bsnsccon1}
\end{figure}

In this section, we employ the dynamic programming results obtained in Section \ref{subsec.DPandALG}, to calculate the feedback capacity of $BIBO-UMCO$ channel, denoted by 
\begin{equation}\label{trans.prob.matr.bsnsc}
 {\bf P}(b_i|b_{i-1},a_i)=\bordermatrix{&00&01&10&11\cr
            0&\alpha_1&\alpha_2&\alpha_3&\alpha_4\cr
            1&1-\alpha_1&1-\alpha_2&1-\alpha_3&1-\alpha_4\cr},\quad i=0,\dots,n
\end{equation}
with and without transmission cost. In addition, we calculate  the  capacity achieving input distributions with feedback and the respective optimal output distributions.\\
\subsubsection{Without cost constraint} \label{subsec.ex.cost.constrnt} Consider the $BIBO-UMCO$ channel \eqref{trans.prob.matr.bsnsc} with parameters $\alpha_1=0.9$, $\alpha_2=0.2$, $\alpha_3=0.1$ and $\alpha_4=0.4$. By employing dynamic programming equations \eqref{DP.eq.3a}-\eqref{DP.eq.3b} the convergence of the value functions without transmission cost, and the convergence of the optimal input distributions with feedback and the corresponding output distributions are depicted in Figures \ref{fig.DP.bsnsc.a} and \ref{fig.DP.bsnsc.b}, respectively. To characterize the feedback capacity and the capacity achieving input distribution of  the BIBO-UMCO channel we employ Algorithm \ref{alg.general.pol.iter.aver.cost}, which yields the following results.
\begin{equation*}
{\pi}^{\infty}(a_i|b_{i-1}) = \bbordermatrix{~ \cr
                  & 0.626 & 0.33 \cr
                  & 0.374 & 0.67 \cr}, \qquad C^{FB,BIBO-UMCO}=0.215 \ \text{bits/per channel use.}
\end{equation*}

%
%
%

\subsubsection{With cost constraint.} Consider the $BIBO-UMCO$ channel \eqref{trans.prob.matr.bsnsc} with parameters $\alpha_1=0.9$, $\alpha_2=0.2$, $\alpha_3=0.1$, $\alpha_4=0.4$ and  $k=0.1877$. By employing dynamic programming equations \eqref{DP.eq.3a_TC}-\eqref{DP.eq.3b_TC}, Figures \ref{fig.DP.bsnsc.a1} and \ref{fig.DP.bsnsc.b1} depict the convergence of the value functions with transmission cost, and the convergence of the optimal channel input distributions with feedback and the corresponding output distributions.

\section{Conclusions}
\label{sec_con}
\par We apply the dynamic programming recursions and necessary and sufficient conditions for any channel input conditional distribution to achieve capacity, to identify necessary and sufficient conditions such that the nested optimization problem $C_{A^n \rar B^n}^{FB}$ reduces to a non-nested optimization problem. This gives rise to the single letter characterization of feedback capacity. The  methodology  can be easily generalized {\bf to} channels that have finite memory on the previous outputs.
\par These results are applied to the BSSC with feedback, with and without cost constraint, to calculate the feedback capacity, the capacity achieving input distribution, and the corresponding output distribution. One of the fascinating results is that feedback capacity is characterized by a single letter expression that is precisely analogous to the single letter characterization of capacity of DMCs.  Additionally, we show that a first order Markov channel input distribution without feedback  achieves  feedback capacity. We also derive an upper bound on the error probability of maximum likelihood decoding. 

\appendices

\section{Proof of Lemma.~\ref{lemma_infhor}}\label{lemma_infhor_proof}
We can re-write \eqref{inf.DP1} as follows. 
\begin{align}
\tilde{V}_t(b_{-1})+\frac{1}{t}\tilde{V}_t(b_{-1})=&\sup_{\pi^{\infty}(\cdot|b_{-1})}\Big\{\sum_{a_0}\Big\{\sum_{b_0}\log\Big(\frac{{\bf P}(b_0|b_{-1},a_0)}{{\bf P}^{\pi^{\infty}}(b_0|b_{-1})}\Big){\bf P}(b_0|b_{-1},a_0)\label{inf.DP2}\\
&+\sum_{b_0}\Big(\tilde{V}_{t-1}(b_0)+\frac{1}{t}\tilde{V}_t(b_{-1})\Big){\bf P}(b_0|b_{-1},a_0)\Big\}\pi^{\infty}(a_0|b_{-1})\Big\}.\label{inf.DP2_neq1}
\end{align}
Assumptions~\ref{Ass_ST},  imply that 
\begin{equation}
\lim_{t\longrightarrow\infty}\frac{1}{t}\tilde{V}_t(b_{-1})=J^*,\quad \forall b_{-1}\in \mathbb{B} \label{inf.DP.assum.1b}
\end{equation}
and that the limit does not depend on $b_{-1}\in {\mb B}$. Moreover, under Assumption~\ref{Ass_ST}, \eqref{inf.DP.assum.1b}, taking the limit of both sides of \eqref{inf.DP2_neq1}, the following dynamic programming equation is obtained. 
\begin{align}
J^*+v(b_{-1})=&\lim_{t\longrightarrow\infty}\Big\{\frac{1}{t}\tilde{V}_t(b_{-1})+\Big(\tilde{V}_t(b_{-1})-tJ^*\Big)\Big\}\\
\overset{(a)}=&\lim_{t\longrightarrow\infty}\sup_{\pi^{\infty}(\cdot|b_{-1})}\Big\{\sum_{a_0}\Big\{\sum_{b_0}\log\Big(\frac{{\bf P}(b_0|b_{-1},a_0)}{{\bf P}^{\pi^{\infty}}(b_0|b_{-1})}\Big){\bf P}(b_0|b_{-1},a_0)\\
&+\sum_{b_0}\Big(\tilde{V}_{t-1}(b_0)-(t-1)J^*+\frac{1}{t}\tilde{V}_t(b_{-1})-J^*)\Big){\bf P}(b_0|b_{-1},a_0)\Big\}\pi^{\infty}(a_0|b_{-1})\Big\} \label{der_IH}
\end{align}
where (a) is due to  \eqref{inf.DP2}.
Since the  channel input and output alphabet spaces are at most countable,  then we can interchange of the limit and the maximization operations, to obtain  dynamic programming equation (\ref{inf.DP3}).

\section{Proof of Theorem.~\ref{IRR}}\label{IRR_proof}

 For any $\{\pi^{\infty}(a_i|b_{i-1}): \hso i=0, \ldots, n\}$, (\ref{TI_BM_a})  is expressed  as follows. 
\begin{align}
J(\pi^{\infty},\mu)&=\liminf_{n\longrightarrow \infty}\frac{1}{n}{\bf E}_{\mu}^{\pi^{\infty}}\Big\{\sum_{i=0}^{n-1}\overline{\ell}(b_{i-1},a_i)\Big\}=  \liminf_{n\longrightarrow \infty}\frac{1}{n}  {\bf E}_{\mu}^{\pi^{\infty}}\Big\{\sum_{i=0}^{n-1}\ell(b_{i-1},\pi(b_{i-1}))\Big\}, \hso \forall \mu(b_{-1}) \in {\cal M}({\mb B})\label{av.inf.c}\\
&=\liminf_{n\longrightarrow \infty}\frac{1}{n}\mu^T\Big(\sum_{i=0}^{n-1}{\bf P}(\pi^{\infty})^i\Big)\ell(\pi^{\infty}).
\end{align}
Following  \cite{bertsekas05},  it can be shown that the above limit exists but it may depend on the distribution $\mu(\cdot)$ of $B_{-1}$. However, if ${\bf P}(\pi^{\infty})$ is irreducible  then 
\begin{equation}\label{av.inf.irr1}
J(\pi^{\infty}, \mu^{\infty})=\mu^T {\bf P}_1(\pi^{\infty})\ell(\pi^{\infty})=\nu(\pi^{\infty})^T\ell(\pi^{\infty})
\end{equation}
where ${\bf P}_1(\pi^{\infty})$ is the limiting matrix (this follows by the  Cesaro limit), and $\nu(\pi^{\infty})$ is the unique invariant probability distribution, which satisfies  ${\bf P}(\pi^{\infty})\nu(\pi^{\infty})=\nu(\pi^{\infty})$. From \eqref{av.inf.irr1}, it follows that  $J(\pi^{\infty}, \mu) \equiv J(\pi^{\infty})$, that is, it does not depend on the initial distribution $\mu$ of $B_{-1}$. It can be shown that  if for all stationary Markov channel input distributions $\pi^\infty$ the transition matrix ${\bf P}(\pi^\infty)$ is irreducible, there exists a solution $V:\mathbb{B}\mapsto \mathbb{R}^{|\mathbb{B}|}$ and $J\in \mathbb{R}$, which satisfies \eqref{IR_DP_1}.

\section{Proof of Theorem.~\ref{op_in_out_dis_the}}\label{op_in_out_dis_the_proof}
 (a) First, we  employ the necessary and sufficient conditions of Theorem~\ref{nessufco}, to calculate the optimal input and output distributions and the value function at the terminal time. To show the time-invariant property it is sufficient to prove the the value function of the terminal condition, $V_n(b_{n-1})$, is independent of $b_{n-1}$ (part (b) of Theorem \ref{non-nest_the}). By Theorem~\ref{nessufco}, we have
\begin{align}
 V_n(b_{n-1}) =& \sum_{b_{n}}\log\Big(\frac{{\bf P}_n(b_n|a_n,b_{n-1})}{{\bf P}^{\pi}_{n}(b_n|b_{n-1})}\Big){\bf P}(b_n|a_n,b_{n-1}), \hso   \forall{a_n}\in{\mb A}_n \hso \mbox{if} \hso \pi_n(a_n|b_{n-1})\neq{0}\label{pr_gen_suf}
\end{align}
For $b_{n-1}=0 \  \&  \ a_n=0$, we obtain
\beae
 V_n(b_{n-1}=0) &=& \sum_{b_{n}}\log\Big(\frac{{\bf P}_n(b_n|a_n=0,b_{n-1}=0)}{{\bf P}^{\pi}_{n}(b_n|b_{n-1}=0)}\Big){\bf P}(b_n|a_n=0,b_{n-1}=0)\nonumber\\
   &=& \alpha\log\frac{1-{{\bf P}^{\pi}_{n}(b_{n}=0|b_{n-1}=0)}}{{{\bf P}^{\pi}_{n}(b_{n}=0|b_{n-1}=0)}}+\log\frac{1}{1-{{\bf P}^{\pi}_{n}(b_{n}=0|b_{n-1}=0)}}-H(\alpha).
\nonumber\\ \label{pr_ti_bssc}\eeae
For $b_{n-1}=0 \  \&  \ a_n=1$, we obtain
\beae
 V_n(b_{n-1}=0) &=& \sum_{b_{n}}\log\Big(\frac{{\bf P}_n(b_n|a_n=1,b_{n-1}=0)}{{\bf P}^{\pi}_{n}(b_n|b_{n-1}=0)}\Big){\bf P}(b_n|a_n=1,b_{n-1}=0)\nonumber\\
   &=& (1-\beta)\log\frac{1-{{\bf P}^{\pi}_{n}(b_{n}=0|b_{n-1}=0)}}{{{\bf P}^{\pi}_{n}(b_{n}=0|b_{n-1}=0)}}+\log\frac{1}{1-{{\bf P}^{\pi}_{n}(b_{n}=0|b_{n-1}=0)}}-H(\beta).
\nonumber\\ \label{pr_ti_bssc1}\eeae
By (\ref{pr_gen_suf}), we equate (\ref{pr_ti_bssc}) and (\ref{pr_ti_bssc1}), to deduce
\bea
{{\bf P}^{\pi}_{n}(b_{n}=0|b_{n-1}=0)}=\lambda=\frac{1}{1+2^\mu} \label{pr_out_dis_2}
\eea
where $\lambda$ and $\mu$ are given in (\ref{lam_mu_v}). We repeat the above procedure for the pair $b_{n-1}=1,  a_n=0$ and $b_{n-1}=1,  a_n=1$, to deduce
\bea
{{\bf P}^{\pi}_{n}(b_{n}=1|b_{n-1}=1)}=\frac{1}{1+2^\mu}\equiv\lambda. 
\eea
Therefore the optimal transition probability of the output process at time $n$, is given by the doubly stochastic matrix (\ref{bssc_theo_oud}). Next, we show that the value function, $V_n(b_{n-1})$, is independent of $b_{n-1}$. The value function for $b_{n-1}=1$ and $a_n=1$ is obtained as follows.
\beae
 V_n(b_{n-1}=1) &=& \sum_{b_{n}}\log\Big(\frac{{\bf P}_n(b_n|a_n=1,b_{n-1}=1)}{{\bf P}^{\pi}_{n}(b_n|b_{n-1}=1)}\Big){\bf P}(b_n|a_n=1,b_{n-1}=1)\nonumber\\
   &=& \alpha\log\frac{1-{{\bf P}^{\pi}_{n}(b_{n}=1|b_{n-1}=1)}}{{{\bf P}^{\pi}_{n}(b_{n}=1|b_{n-1}=1)}}+\log\frac{1}{1-{{\bf P}^{\pi}_{n}(b_{n}=1|b_{n-1}=1)}}-H(\alpha)\nonumber\\
      &=& \alpha\log\frac{1-{{\bf P}^{\pi}_{n}(b_{n}=0|b_{n-1}=0)}}{{{\bf P}^{\pi}_{n}(b_{n}=0|b_{n-1}=0)}}+\log\frac{1}{1-{{\bf P}^{\pi}_{n}(b_{n}=0|b_{n-1}=0)}}-H(\alpha).
\nonumber\\
&=&  V_n(b_{n-1}=0)\eeae
Since the value function, $V_n(b_{n-1})$, is independent of $b_{n-1}$, we apply Theorem~\ref{non-nest_the}.(b), to deduce that the optimal channel input and channel output conditional distributions are time invariant. The optimal channel input conditional distribution is calculated via the expression ${{\bf P}^{\pi}_{n}(b_{n}|b_{n-1})}=\sum_{A_i}{\bf P}_n(b_n|a_n,b_{n-1})$ $\pi(a_{n}|b_{n{-}1})$. For $b_{n}=0 \ \mbox{and} \ b_{n-1}=0$, we have
\beae
{{\bf P}^{\pi}_{n}(b_{n}=0|b_{n-1}=0)}&=&\sum_{A_n}{\bf P}_n(b_n=0|a_n,b_{n-1}=0)\pi(a_{n}|b_{n{-}1}=0)\nonumber\\
&=&\alpha\pi(a_{n}=0|b_{n{-}1}=0)+(1-\beta)(1-\pi(a_{n}=0|b_{n{-}1}=0)).\nms\label{pr_in_dis_1}
\eeae
Solving (\ref{pr_in_dis_1}) with respect to the input distribution yields
\bea
\pi(a_{n}=0|b_{n{-}1}=0)=\frac{1-(1-\beta)(1+2^\mu)}{(\alpha+\beta-1)(1+2^\mu)}\equiv\nu.\label{pr_in_dis_2}
\eea
Similarly,
\beae
{{\bf P}^{\pi}_{n}(b_{n}=1|b_{n-1}=1)}&=&\sum_{A_n}{\bf P}_n(b_n=1|a_n,b_{n-1}=1)\pi(a_{n}|b_{n{-}1}=1)\nonumber\\
&=&\alpha\pi(a_{n}=1|b_{n{-}1}=0)+(1-\beta)(1-\pi(a_{n}=0|b_{n{-}1}=0)).\nms\label{pr_in_dis_3}
\eeae
The above, shows (\ref{bssc_theo_oid}). By Theorem~\ref{non-nest_the}.(b), specifically \eqref{dyn_TC_NN_un_st_TI} evaluated at $t=0$, we obtain the following expression for the FTFI capacity.
\beae
C_{A^n \rar B^n}^{FB,BSSC} 
&\sr{(\alpha)}{=}&\sum_{b_{-1}}V_0(b_{-1}){\mu}(b_{-1})\nonumber\\
&\sr{(\beta)}{=}&(n+1)\max_{\pi(a_0|b_{-1})}\sum_{b_0,a_0,b_{-1}}\left(\frac{{\bf P}(b_0|a_0,b_{-1})}{{\bf P}^\pi(b_{0}|b_{-1})}\right){\bf P}(b_0|a_0,b_{-1})\pi(a_{n}|b_{n{-}1}){{\mu}}(b_{-1}), \hso b_{-1}\in\{0,1\}\nonumber\\
&\sr{(\gamma)}{=}&(n+1)\left[ H(\lambda){-}\nu H({\alpha}){-}(1{-}\nu)H({\beta}) \right]
 \eeae
where $(\alpha)$ holds by  definition (equation \eqref{conn_val_fu}), $(\beta)$  holds due to \eqref{dyn_TC_NN_un_st_TI} evaluated at $t=0$, $(\gamma)$ by substituting the time invariant capacity achieving input distribution (\ref{bssc_theo_oid}), the corresponding optimal output distribution (\ref{bssc_theo_oud}) and any value of $b_{-1}\in\{0,1\}$.\\
(b) holds by definition (equation \eqref{feed_capa}).


\section{Proof of Theorem.~\ref{cor_cos_fee}}\label{cor_cos_fee_proof}

(a)  By employing the dynamic programming recursion for the constrained problem (\ref{dyn_TC})
we can show that the value function at the terminal time is independent of $b_{n-1}$. Therefore, by Theorem~\ref{non-nest_the}, the optimization problem is non-nested and the dynamic programming for the constrained capacity is given by 
\begin{align}
V_i(b_{i-1})=&\sup_{\pi(a_i|b_{i-1}),s\leq 0}\Big\{\sum_{A_n}\sum_{B_n}\log\Big(\frac{{\bf P}(b_i|b_{i-1},\alpha_i)}{{\bf P}^\pi(b_i|b_{i-1})}\Big){\bf P}(b_i|b_{i-1},\alpha_i)\pi(\alpha_i|b_{i-1})\nonumber\\
&+s\Big\{\sum_{A_i}\gamma(a_i,b_{i-1})\pi_n(a_i|b_{i-1})-\kappa\Big\}\Big\}, \forall i=0,1,\ldots,n\label{DP.eq_con.3a_pr_co}
\end{align}
Differentiating (\ref{DP.eq_con.3a_pr_co}) with respect to the Lagrangian s, we obtain the optimal input distribution of (\ref{con_inp_the}). The optimal output distribution is then calculated by ${{\bf P}^{\pi}_{n}(b_{n}|b_{n-1})}=\sum_{A_n}{\bf P}_n(b_n|a_n,b_{n-1})\pi(a_{n}|b_{n{-}1})$
to obtain (\ref{con_out_the}).\\
(b) Since (i) the optimal input channel conditional distribution and the channel output conditional distribution are time-invariant and (ii) the value function $V_i(b_{i-1})$ is independent of $b_{i-1}, \ \forall \ i=0,1,\ldots,n$, the proof  is identical to the proof of Theorem~\ref{op_in_out_dis_the}.(b). The value of $\kappa_{max}$ is given when the Lagrangian $s=0$, i.e. the constrained optimization problem is equivalent to the constrained optimization problem. In this case, $s=0$, and the optimal channel input conditional distribution for the constrained case is equal to the optimal channel input conditional distribution for the unconstrained case, thus $\kappa|_{s=0}=\kappa_{max}=\nu$.

\bibliographystyle{IEEEtran}
\bibliography{Bibliography}
\end{document}